\def\romannum{\begingroup
  \def\theenumi{\textup{(\roman{enumi})}}%
  \def\p@enumi{}%
  \def\labelenumi{\theenumi}%
  \enumerate}
\newcommand{\stabilo}{\bgroup\markoverwith
  {\textcolor{yellow}{\rule[-.5ex]{2pt}{2.5ex}}}\ULon}
\newcommand{\newstuff}{\bgroup\markoverwith{\textcolor{pink}{\rule[-.5ex]{2pt}{2.5ex}}}\ULon}
\newtheorem{theorem}{Theorem}
\newtheorem{corollary}[theorem]{Corollary}
\newtheorem{application}[theorem]{Application}
\newtheorem{conjecture}[theorem]{Conjecture}
\newtheorem{fact}[theorem]{Fact}
\newtheorem{lemma}[theorem]{Lemma}
\newtheorem{proposition}[theorem]{Proposition}
\theoremstyle{example}
\newtheorem{remark}[theorem]{Remark}
\newtheorem{example}[theorem]{Example}
\newtheorem{question}[theorem]{Question}
\newcommand{\notmodels}{\ \makebox[0.1cm][l]{\ensuremath{\models}}/ \ }
\renewcommand{\phi}{\varphi}
\newcommand{\ignore}[1]{}
\newcommand{\integerset}[0]{\ensuremath{\mathbb{N}}}
\newcommand{\QCSP}[0]{\ensuremath{\textrm{QCSP}}}
\newcommand{\CSP}[0]{\ensuremath{\textrm{CSP}}}
\newcommand{\NP}[0]{\ensuremath{\textrm{NP}}}
\newcommand{\coNP}[0]{\ensuremath{\textrm{co\mbox{-}NP}}}
\newcommand{\Pspace}[0]{\ensuremath{\textrm{Pspace}}}
\newcommand{\Inv}[0]{\ensuremath{\textrm{Inv}}}
\newcommand{\Pol}[0]{\ensuremath{\textrm{Pol}}}
\newcommand{\restrict}{\ensuremath{\upharpoonright}}
\newcommand{\reactivelycomposable}{\ensuremath{\trianglelefteq}}
\title{The complexity of quantified constraints: collapsibility, switchability and the algebraic formulation}
\titlerunning{The complexity of quantified constraints}
\author{Catarina Carvalho\inst{1}
  \and
  Florent R. Madelaine\inst{2}
  \and
  Barnaby Martin\inst{3}
 \and
 Dmitriy Zhuk\inst{4}
}
\institute{
  School of Physics, Astronomy and Mathematics,
  University of Hertfordshire, UK.
  \email{c.carvalho2@herts.ac.uk}
  \and 
  LACL, Universit\'e Paris-Est Cr\'eteil, France.
  \email{florent.madelaine@gmail.com}
  \and
  Department of Computer Science,
  Durham University, UK.
  \email{barnabymartin@gmail.com}
  \and
Department of Mechanics and Mathematics,
Lomonosov Moscow State University, Russia.
  \email{zhuk.dmitriy@gmail.com}
}
\authorrunning{Carvalho, Madelaine, Martin and Zhuk}
\begin{document}
\maketitle

\begin{abstract}
Let $\mathbb{A}$ be an idempotent algebra on a finite domain. By mediating between results of Chen \cite{AU-Chen-PGP} and Zhuk \cite{ZhukGap2015}, we argue that if $\mathbb{A}$ satisfies the polynomially generated powers property (PGP) and $\mathcal{B}$ is a constraint language invariant under $\mathbb{A}$ (that is, in $\mathrm{Inv}(\mathbb{A})$), then QCSP$(\mathcal{B})$ is in NP. In doing this we study the special forms of PGP, switchability and collapsibility, in detail, both algebraically and logically, addressing various questions such as decidability on the way.

We then prove a complexity-theoretic converse in the case of infinite constraint languages encoded in propositional logic, that if $\mathrm{Inv}(\mathbb{A})$ satisfies the exponentially generated powers property (EGP), then QCSP$(\mathrm{Inv}(\mathbb{A}))$ is co-NP-hard. Since Zhuk proved that only PGP and EGP are possible, we derive a full dichotomy for the QCSP, justifying what we term the Revised Chen Conjecture. This result becomes more significant now the original Chen Conjecture (see \cite{Meditations}) is known to be false \cite{ZhukM20}.

Switchability was introduced by Chen in  \cite{AU-Chen-PGP} as a generalisation of the already-known collapsibility \cite{hubie-sicomp}. For three-element domain algebras $\mathbb{A}$ that are switchable and omit a G-set, we prove that, for every finite subset $\Delta$ of Inv$(\mathbb{A})$, Pol$(\Delta)$ is collapsible. The significance of this is that, for QCSP on finite structures (over a three-element domain), all QCSP tractability (in P)  explained by switchability is already explained by collapsibility.

\end{abstract}

\section{Introduction}

A large body of work exists from the past twenty years on
applications of universal algebra to the computational complexity of
\emph{constraint satisfaction problems} (CSPs) and a number of
celebrated results have been obtained through this method. 
One considers the problem CSP$(\mathcal{B})$ in which it is asked whether an input
sentence $\varphi$ holds on $\mathcal{B}$, a \emph{constraint language} (equivalently, relational structure), where $\varphi$ is
\emph{primitive positive}, that is using only $\exists$, $\land$ and $=$. The CSP is one of a wide class of model-checking problems obtained from restrictions of first-order logic. For almost all of these classes, we can give a complexity
classification~\cite{DBLP:journals/corr/abs-1210-6893}. Since the celebrated proofs of the Feder-Vardi ``Dichotomy'' Conjecture for CSPs \cite{BulatovFVConjecture,ZhukFVConjecture,Zhuk20}, the only
outstanding class (other than its natural dual) is \textsl{quantified
  CSPs} (QCSPs) for positive Horn sentences -- where $\forall$ is also present -- which is used in
Artificial Intelligence to model non-monotone reasoning or
uncertainty \cite{nonmonotonic}. 

It is well-known that the complexity classification for QCSPs embeds the classification for CSPs: if $\mathcal{B}+1$ is $\mathcal{B}$ with the addition of a new isolated element not appearing in any relations, then CSP$(\mathcal{B})$ and QCSP$(\mathcal{B}+1)$ are polynomially equivalent. Thus the classification for QCSPs may be considered a project at least as hard as that for CSPs. The following is the merger of Conjectures 6 and 7 in \cite{Meditations} which we call the \emph{Chen Conjecture}.
\begin{conjecture}[Chen Conjecture]
Let $\mathcal{B}$ be a finite relational structure expanded with constants naming all the elements. If $\Pol(\mathcal{B})$ has PGP, then $\QCSP(\mathcal{B})$ is in $\NP$; otherwise $\QCSP(\mathcal{B})$ is $\Pspace$-complete.
\end{conjecture}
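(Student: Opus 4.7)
The plan is to prove the conjecture in the two halves corresponding to Zhuk's PGP/EGP dichotomy. The upper direction, containment in \NP{} under PGP, is challenging but algorithmic in spirit; the lower direction, \Pspace-completeness when $\Pol(\mathcal{B})$ has EGP, is the substantive combinatorial content. Throughout I exploit that constants name every element of $\mathcal{B}$, so pp-definability inside $\Inv(\Pol(\mathcal{B}))$ is well-behaved and the \QCSP{} complexity is invariant under pp-constructions.

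For the upper bound I would use Chen's \emph{switchability} refinement of PGP \cite{AU-Chen-PGP}, which already subsumes the earlier \emph{collapsibility} of \cite{hubie-sicomp}. The key step is to show that a switchability certificate for $\Pol(\mathcal{B})$ lets one replace a universal block $\forall \vec{y}$ in an input sentence by a polynomial-size conjunction $\bigwedge_i \phi[\vec{y}:=\vec{a}_i]$ over representative assignments $\vec{a}_i$ read off from the polynomial generators of the powers of $\mathbb{A}$, without altering truth. Iterating this over every universal block rewrites the sentence as an existentially quantified positive conjunctive query of polynomial size, whose satisfaction can be certified in \NP{} by guessing witnesses. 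Care is required to ensure that the representative family is not merely existent but polynomial-time computable from $\mathcal{B}$, and that correctness survives nested alternation (so that an outer universal does not invalidate choices made when eliminating an inner one).

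For the lower bound one assumes $\Pol(\mathcal{B})$ has EGP and aims to pp-construct from $\mathcal{B}$ (using the available constants) a finite template whose \QCSP{} is already known to be \Pspace-complete, for example a boolean template carrying a relation that is not preserved by any majority, minority, or constant operation. The strategy is to convert Zhuk's witness of exponential growth in orbits into a concrete gadget in $\mathcal{B}$ that simulates an alternating quantifier: the universally quantified variables of a \QCSP{} instance must traverse genuinely distinct patterns whose number cannot be polynomially bounded, which combined with pp-construction should force the full power of alternating polynomial space via a reduction from QSAT. A convenient intermediate target is to show that every EGP clone with constants pp-constructs a fixed finite structure whose \QCSP{} is \Pspace-complete, thereby factoring the argument through a universal hardness gadget.

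The main obstacle is unambiguously this \Pspace-hardness step. The partial result stated in the abstract yields only \coNP-hardness and only for the infinite constraint language $\Inv(\mathbb{A})$, which strongly indicates that extracting full alternation from a single finite $\mathcal{B} \in \Inv(\mathbb{A})$ is genuinely hard: a finite $\mathcal{B}$ may witness the EGP obstruction only in a single-layer, co-nondeterministic fashion, so pp-definability alone may produce only one layer of universal quantification rather than unbounded alternation. Overcoming this requires a new structural theorem showing how EGP manifests at the individual template level, effectively a converse to the switchability theorem for finite templates rather than for whole clones. A subordinate obstacle on the algorithmic side is to make the switchability certificate effective, turning the abstract polynomial bound on generators into an explicit one that the \NP{} procedure can use, and to handle arbitrarily deep quantifier alternation within a single unified polynomial bound.
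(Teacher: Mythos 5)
You are attempting to prove a statement that this paper records as \emph{false}. The Chen Conjecture appears here as a conjecture, not a theorem: the paper proves only its NP-membership half (Corollary~\ref{cor:in-NP}), and explicitly notes that the full conjecture has been refuted in \cite{ZhukM20} --- there exists a finite three-element structure $\mathcal{B}$, with constants naming all elements, such that $\Pol(\mathcal{B})$ has EGP yet $\QCSP(\mathcal{B})$ is in P. Consequently your entire lower-bound plan, pp-constructing from an arbitrary EGP template a fixed \Pspace-complete template, cannot succeed unless $\textrm{P}=\Pspace$; the obstacle you flag as ``genuinely hard'' is in fact an impossibility, and your own intuition about why (a finite $\mathcal{B}$ may witness the EGP obstruction only in a single-layer, co-nondeterministic fashion) is precisely the phenomenon the counterexample exhibits. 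The paper's response is not to overcome this obstacle but to change the statement: it passes to the infinite constraint language $\mathrm{Inv}(\mathbb{A})$ under a propositional-logic encoding, where EGP does yield co-NP-hardness (Theorem~\ref{thm:hard}) and PGP still yields NP membership (Theorem~\ref{thm:easy}), giving the Revised Chen Conjecture (Theorem~\ref{thm:all}); and even that revised dichotomy fails under the tuple-listing encoding (Proposition~\ref{prop:Chen-fails}).

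On the upper bound, your sketch is in the right spirit but elides the step that constitutes the paper's actual contribution. Zhuk's result \cite{ZhukGap2015} delivers PGP in the \emph{algebraic} form of simple switchability, i.e., a small generating set for $\mathbb{A}^m$; Chen's NP-membership machinery \cite{AU-Chen-PGP} requires switchability in a different, \emph{logical} form (every pH-sentence true against the switching adversaries is true outright). These are not obviously the same property, and the bridge between them --- Theorem~\ref{MainResult:InAbstracto}, proved via projective, non-degenerate adversaries and reactive composition --- is exactly what licenses your proposed replacement of universal blocks by polynomially many representative assignments, including under arbitrarily deep alternation. Your concerns about effectiveness are legitimate but are resolved by that theorem together with Corollary~\ref{cor:EffectiveProjectivePGPImpliesQCSPInNP}, which reduces $\QCSP(\mathcal{B})$ to polynomially many instances of $\CSP(\mathcal{B})$; without establishing that bridge (or citing it), your reduction of a universal block to a conjunction over generators is an assertion, not a proof.
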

\noindent In \cite{Meditations}, Conjecture 6 gives the NP membership and Conjecture 7 the Pspace-completeness. The first contribution of this paper is to prove that the NP membership of Conjecture 6 is indeed true. We do this by proving equivalent two notions of switchability that allows to combine known results from  \cite{AU-Chen-PGP} and \cite{ZhukGap2015}. On the way we develop the notions of non-degenerate and projective adversaries that enable us to prove our result as well as particular observations on the existing notions of switchability and collapsibility. 

The second contribution of this paper is Theorem~\ref{thm:all} below, but note that we permit infinite signatures (languages) although our domains remain finite. This will involve deciding how to encode relations of $\Inv(\mathbb{A})$ and will be discussed in detail later. 
\begin{theorem}[Revised Chen Conjecture]
Let $\mathbb{A}$ be an idempotent algebra on a finite domain $A$. If $\mathbb{A}$ satisfies PGP, then \QCSP$(\mathrm{Inv}(\mathbb{A}))$ is in \NP. Otherwise, \QCSP$(\mathrm{Inv}(\mathbb{A}))$ is \coNP-hard.
\label{thm:all}
\end{theorem}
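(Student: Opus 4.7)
The plan is to prove the two directions independently, since their hypotheses and techniques are essentially disjoint; both rely on Zhuk's gap theorem, by which ``not PGP'' for an idempotent algebra on a finite domain is exactly EGP. For the \emph{PGP implies NP} direction, the strategy is to reduce the verification of a QCSP instance to examining only polynomially many ``canonical'' plays of the universal player. Writing a sentence $\varphi$ in prenex form with universal variables $y_1, \dots, y_m$, the set of $\forall$-plays is a priori the whole of $A^m$; PGP states that a polynomial number of tuples generate $\mathbb{A}^m$, and the finer notions of switchability and collapsibility further ensure that it suffices to check $\varphi$ on a polynomially-sized adversary set. Using the equivalence of the two formulations of switchability established earlier in the paper, one can combine the algorithmic ideas of Chen~\cite{AU-Chen-PGP} with those of Zhuk~\cite{ZhukGap2015}: nondeterministically guess the existential values for each canonical adversary tuple, and verify the resulting conjunctive positive formula in polynomial time. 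This places $\QCSP(\Inv(\mathbb{A}))$ in $\NP$.

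For the \emph{EGP implies $\coNP$-hard} direction, the strategy is to exploit EGP to build an explicit reduction from a $\coNP$-complete problem, such as the tautology problem (equivalently, the complement of \textsc{Sat}). EGP guarantees an infinite sequence of subpowers of $\mathbb{A}$ whose minimal generating sets grow exponentially; these subpowers lie in $\Inv(\mathbb{A})$. Because their explicit tuple-listings are exponential in size, the statement requires infinite languages encoded in propositional logic, so that these relations appear with only a polynomial-size description in the input. Given a Boolean formula $\phi$ on $N$ variables, the plan is to construct a QCSP sentence whose universal variables range effectively over the $2^N$ truth assignments of $\phi$ via a suitable choice of relation from the exponentially-generated family, and whose existential-conjunctive body enforces that each such adversary play induces a satisfying assignment of $\phi$. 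Correctness will follow because the universal player has enough essentially distinct choices to enumerate all truth assignments.

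The main obstacle is the $\coNP$-hardness reduction. The NP half essentially recombines existing algorithmic ingredients once the switchability equivalence is in hand. In contrast, the EGP half demands translating an abstract algebraic statement, namely that a power of $\mathbb{A}$ needs exponentially many generators, into genuine expressive power at the QCSP level: one must exhibit relations pp-definable over $\Inv(\mathbb{A})$ that encode the clauses of an arbitrary propositional $\phi$ and that force the $\forall$-player to realise every truth assignment. Doing this uniformly for every EGP algebra, using only the structural information provided by Zhuk's classification, is the technical heart of the argument.
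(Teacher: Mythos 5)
Your overall architecture matches the paper's: invoke Zhuk's gap theorem so that ``not PGP'' means EGP, prove NP membership from PGP via switchability and polynomially many adversaries, and prove co-NP-hardness from EGP by a reduction exploiting the propositional-logic encoding. The NP half is essentially the paper's route (Theorems \ref{MainResult:InAbstractoLevavi} and \ref{thm:easy}), though you omit the one genuinely new step needed here: the equivalence between the logical and algebraic formulations of switchability is established in the paper only for finite-signature structures, and lifting it to the infinite language $\mathrm{Inv}(\mathbb{A})$ requires the compactness/pigeonhole argument of Theorem \ref{MainResult:InAbstractoLevavi}, which extracts a single witnessing operation recurring across an increasing chain of finite reducts.

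The co-NP-hardness half, however, has a genuine gap: you state the goal (``translate EGP into genuine expressive power at the QCSP level'') and correctly flag it as the technical heart, but you do not supply the idea that makes it work. The paper's proof rests on Lemma 11 of \cite{ZhukGap2015}: EGP forces the existence of proper subsets $\alpha,\beta\subseteq A$ with $\alpha\cup\beta=A$ such that every relation $\sigma_k(x_1,y_1,\ldots,x_k,y_k)=\rho(x_1,y_1)\vee\cdots\vee\rho(x_k,y_k)$, where $\rho=(\alpha\times\alpha)\cup(\beta\times\beta)$, lies in $\mathrm{Inv}(\mathbb{A})$. From these one pp-defines the ternary-block analogues $\tau_k$ (Lemma \ref{lem:new-revision}), observes that each $\tau_k$ has a polynomial-size DNF description, and reduces from the complement of monotone 3NAESAT by a \emph{purely universally quantified single atom} $\forall v_1\ldots\forall v_m\ \tau_k(\ldots)$, identifying Boolean $0$ and $1$ with elements of $\alpha\setminus\beta$ and $\beta\setminus\alpha$ respectively. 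Your plan instead reduces from co-SAT with an ``existential-conjunctive body [that] enforces that each adversary play induces a satisfying assignment''; this would require pp-definable relations encoding arbitrary clauses, and nothing in EGP provides those --- the only disjunctive relations one gets for free are the $\alpha\beta$-shaped ones above, which is precisely why the paper targets not-all-equal satisfiability rather than SAT. Without Zhuk's structural lemma (or a substitute for it) the reduction cannot be carried out, so as written the hardness direction remains a statement of intent rather than a proof.
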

\noindent Note that, with infinite languages, the NP-membership for Theorem~\ref{thm:all} requires a little extra work. We are also able to refute the following form.
\begin{conjecture}[Alternative Chen Conjecture]
\label{thm:alternative}
Let $\mathbb{A}$ be an idempotent algebra on a finite domain $A$. If $\mathbb{A}$ satisfies PGP, then for every finite subset  $\Delta \subseteq \mathrm{Inv}(\mathbb{A})$, \QCSP$(\Delta)$ is in \NP. Otherwise, there exists a finite subset $\Delta \subseteq \mathrm{Inv}(\mathbb{A})$ so that \QCSP$(\Delta)$ is \coNP-hard.
\end{conjecture}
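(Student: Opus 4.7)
The conjecture has two halves to establish. For the PGP half, I would invoke Theorem~\ref{thm:all} directly: it already gives $\QCSP(\Inv(\mathbb{A})) \in \NP$ once one fixes a reasonable encoding of the infinite signature, and an instance over a finite $\Delta \subseteq \Inv(\mathbb{A})$ is merely a restricted instance over $\Inv(\mathbb{A})$, with at most a polynomial blow-up in the representation. Hence $\QCSP(\Delta) \in \NP$, and the first clause reduces almost immediately to the already-proved Theorem~\ref{thm:all}.

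The EGP half is where the substantive work lies. The natural plan is to inspect the $\coNP$-hardness reduction underlying the EGP direction of Theorem~\ref{thm:all} and extract a finite list of relations in $\Inv(\mathbb{A})$ that already witnesses the hardness. Concretely, one would bound the arities and catalogue the relation symbols used across all instance sizes of the reduction, hoping to draw them from a single fixed finite bank; a compactness argument on a suitable family of hard instances would then complete clause two.

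The main obstacle is that the $\coNP$-hardness proof in Theorem~\ref{thm:all} genuinely exploits the infinite propositional encoding: the relations used grow in arity with the number of universally quantified variables, so no uniform finite $\Delta$ emerges from the construction as written. My expectation, corroborated by the paper's declaration that the Alternative Chen Conjecture is refuted, is that this dependence is essential. The path to settling the conjecture is therefore to engineer an EGP idempotent algebra $\mathbb{A}$ such that every finite $\Delta \subseteq \Inv(\mathbb{A})$ admits an additional polymorphism not shared by $\Inv(\mathbb{A})$ globally (for instance, by arranging $\Pol(\Delta)$ to contain a switchable or near-unanimity operation for each finite $\Delta$ while $\Pol(\mathbb{A})$ itself does not), thereby trapping each $\QCSP(\Delta)$ in $\NP$ or even $\mathrm{P}$. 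Constructing such an $\mathbb{A}$ --- exhibiting a non-PGP algebra whose finite relational reducts are uniformly tame --- is the crux, and absent a successful finitisation of the hardness reduction it resolves the conjecture in the negative, in contrast to the true dichotomy of Theorem~\ref{thm:all}.
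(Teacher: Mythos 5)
You read the statement correctly: the PGP clause is true, the EGP clause is the one that must be refuted, and your top-level plan coincides with the paper's. The PGP half is handled exactly as you say, via Theorem~\ref{thm:all} (concretely Theorem~\ref{thm:easy}), since each relation of a finite $\Delta \subseteq \Inv(\mathbb{A})$ can be assigned a fixed encoding once and for all. For the EGP half, your proposed strategy --- find an EGP idempotent algebra $\mathbb{A}$ such that every finite $\Delta \subseteq \Inv(\mathbb{A})$ picks up an extra taming polymorphism, e.g.\ a near-unanimity operation --- is precisely the shape of the paper's refutation.

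The genuine gap is that your proposal stops exactly where the mathematical content begins: you declare the construction of such an algebra to be ``the crux'' but do not produce it, and producing it is essentially all of the work. The paper's construction is concrete. Fix $\alpha,\beta$ strict subsets of $A$ with $\alpha\cup\beta=A$ and, crucially, $\alpha\cap\beta\neq\emptyset$ (e.g.\ $\alpha=\{0,1\}$, $\beta=\{0,2\}$ on $A=\{0,1,2\}$), and take $\mathbb{A}=\Pol(A;\{\tau_k : k\in\mathbb{N}\},a_1,\ldots,a_n)$. This clone consists of idempotent $\alpha\beta$-projective operations only (note $\sigma_k$ arises from $\tau_k$ by identifying variables, and conversely $\tau_k$ is pp-definable from $\sigma_k$ by Lemma~\ref{lem:new-revision}), so it is the canonical EGP clone of \cite{ZhukGap2015}, and $\QCSP(\Inv(\mathbb{A}))$ is $\coNP$-hard by (the proof of) Theorem~\ref{thm:hard}. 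Yet Proposition~\ref{prop:finite-NL} shows that any finite reduct of $(A;\{\tau_k\},a_1,\ldots,a_n)$, say involving only $\tau_i$ for $i\le m$, admits a $(3m+1)$-ary near-unanimity polymorphism: map every argument tuple not forced by the near-unanimity condition to a fixed $a\in\alpha\cap\beta$; if this failed to preserve some $\sigma_i$, the offending image tuple would have all entries in $(\alpha\setminus\beta)\cup(\beta\setminus\alpha)$, and a pigeonhole argument forces that very tuple to occur among the $3m+1$ argument tuples, a contradiction. This is where $\alpha\cap\beta\neq\emptyset$ is indispensable --- it is the mechanism by which a $\coNP$-hard infinite language can have uniformly tame finite pieces, and it is invisible at the level of generality of your sketch. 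Chen's results \cite{hubie-sicomp} then place the QCSP of each such reduct in NL. Finally, one further step is needed that your plan does not mention: an arbitrary finite $\Delta\subseteq\Inv(\mathbb{A})$ need not be a reduct of $(A;\{\tau_k\},a_1,\ldots,a_n)$, but each of its relations is pp-definable over finitely many $\tau_k$ and the constants, so the near-unanimity operation above also preserves $\Delta$; hence $\QCSP(\Delta)$ lies in NL, in particular in \NP, and cannot be $\coNP$-hard unless $\NP=\coNP$. Without these ingredients your text is a statement of intent rather than a proof (or refutation).
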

\noindent The Alternative Chen Conjecture was not posed by Chen himself, but is nonetheless natural. In proving Theorem~\ref{thm:all} we are saying that the complexity of QCSPs, with all constants included, is classified modulo the complexity of (infinite language) CSPs, a subject to which we will return later.
\begin{corollary}
Let $\mathbb{A}$ be an idempotent algebra on a finite domain $A$. Either \QCSP$(\mathrm{Inv}(\mathbb{A}))$ is \coNP-hard or  \QCSP$(\mathrm{Inv}(\mathbb{A}))$ has the same complexity as \CSP$(\mathrm{Inv}(\mathbb{A}))$.
\end{corollary}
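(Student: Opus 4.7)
The plan is to combine Theorem~\ref{thm:all} with Zhuk's gap theorem, which the introduction recalls: every idempotent algebra on a finite domain satisfies either PGP or EGP, and nothing in between. These two cases align precisely with the two disjuncts of Theorem~\ref{thm:all}, so the argument splits naturally.

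First, if $\mathbb{A}$ fails PGP, it satisfies EGP, and Theorem~\ref{thm:all} immediately gives that $\QCSP(\Inv(\mathbb{A}))$ is $\coNP$-hard, delivering the first disjunct of the corollary.

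Next, suppose $\mathbb{A}$ satisfies PGP. Then Theorem~\ref{thm:all} places $\QCSP(\Inv(\mathbb{A}))$ in $\NP$, and the task is to exhibit polynomial-time reductions in both directions between $\QCSP(\Inv(\mathbb{A}))$ and $\CSP(\Inv(\mathbb{A}))$. One direction is trivial: any $\CSP$ instance is a purely existential $\QCSP$ instance, giving $\CSP(\Inv(\mathbb{A})) \leq_p \QCSP(\Inv(\mathbb{A}))$. For the converse I would unpack the \NP-membership half of Theorem~\ref{thm:all}: the Chen-style argument via collapsibility or switchability (both of which are equivalent to PGP in the idempotent setting, as developed in the paper) replaces each universal quantifier by a polynomial-sized block of existential quantifiers instantiating a fixed collection of adversary tuples. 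Because $\mathbb{A}$ is idempotent, every singleton $\{a\}$ lies in $\Inv(\mathbb{A})$, so these instantiations can be packaged as primitive-positive conjunctions over $\Inv(\mathbb{A})$. The output is a $\CSP(\Inv(\mathbb{A}))$ instance polynomially bounded in the size of the input, which yields $\QCSP(\Inv(\mathbb{A})) \leq_p \CSP(\Inv(\mathbb{A}))$ and hence the second disjunct.

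The main obstacle is verifying that the adversary-substitution step is effective and polynomial in the input encoding, in particular since the corollary concerns the infinite-signature setting where relations of $\Inv(\mathbb{A})$ are not listed extensionally but, as the paper indicates, presented in some propositional encoding. However, the \NP-membership half of Theorem~\ref{thm:all} must already perform this computation in order to certify membership, so the required polynomial bound is inherited from the proof of the theorem and needs no fresh work here.
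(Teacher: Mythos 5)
Your proposal is correct and matches the paper's (implicit) argument: Zhuk's PGP/EGP gap theorem splits the cases, EGP yields \coNP-hardness via Theorem~\ref{thm:hard}, and in the PGP case Theorem~\ref{thm:easy} already supplies the polynomial-time reduction from \QCSP$(\mathrm{Inv}(\mathbb{A}))$ to \CSP$(\mathrm{Inv}(\mathbb{A}))$ (the universal variables being instantiated by switching adversaries and the singleton constants being invariant by idempotency), while the converse reduction is the trivial inclusion of pp-sentences into pH-sentences. One small inaccuracy in your parenthetical: collapsibility is \emph{not} equivalent to PGP --- Chen's $4$-ary operation is switchable but not collapsible, which is a central theme of the paper --- only switchability is (by \cite{ZhukGap2015}); since switchability is all the reduction requires, this does not affect your argument.
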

\noindent In this manner, our result follows in the footsteps of the similar result for the Valued CSP, which has also had its complexity classified modulo the CSP, as culminated in the paper \cite{FOCS2015}.

For a finite-domain algebra $\mathbb{A}$ we associate a function
$f_\mathbb{A}:\mathbb{N}\rightarrow\mathbb{N}$, giving the cardinality
of the minimal generating sets of the sequence $\mathbb{A},
\mathbb{A}^2, \mathbb{A}^3, \ldots$ as $f_\mathbb{A}(1), f_\mathbb{A}(2), f_\mathbb{A}(3), \ldots$,
respectively. A subset $\Lambda$ of $A^m$ is a generating set for $\mathbb{A}^m$ exactly if, for every $(a_1,\ldots,a_m) \in A^m$, there exists a $k$-ary term operation $f$ of $\mathbb{A}$ and $(b^1_1,\ldots,b^1_m),\ldots,$ $(b^k_1,\ldots,b^k_m) \in \Lambda$ so that $f(b^1_1,\ldots,b^k_1)=a_1$, \ldots, $f(b^1_m,\ldots,b^k_m)=a_m$. We may say $\mathbb{A}$ has the $g$-GP if $f_\mathbb{A}(m) \leq
g(m)$ for all $m$. The question then arises as to the growth rate of
$f_\mathbb{A}$ and specifically regarding the behaviours constant, logarithmic,
linear, polynomial and exponential. Wiegold proved in
\cite{WiegoldSemigroups} that if $\mathbb{A}$ is a finite semigroup
then $f_{\mathbb{A}}$ is either linear or exponential, with the former
prevailing precisely when $\mathbb{A}$ is a monoid. This dichotomy
classification may be seen as a gap theorem because no growth rates
intermediate between linear and exponential may occur. We say
$\mathbb{A}$  enjoys the \emph{polynomially generated powers} property
(PGP) if there exists a polynomial $p$ so that $f_{\mathbb{A}}=O(p)$
and  the \emph{exponentially generated powers} property (EGP) if there
exists a constant $b$ so that $f_{\mathbb{A}}=\Omega(g)$ where
$g(i)=b^i$.

In Chen's \cite{AU-Chen-PGP}, a new link between algebra and
QCSP was discovered. Chen's previous work in QCSP tractability largely
involved the special notion of \emph{collapsibility}
\cite{hubie-sicomp}, but in \cite{AU-Chen-PGP} this was extended to a \emph{computationally effective} version of the PGP. For a finite-domain, idempotent algebra $\mathbb{A}$, call \emph{simple $k$-collapsibility}\footnote{We want to use a name different from ``collapsibility'' alone in order to differentiate this from Chen's original definition. In \cite{MFCS2017} we used capitalisation, with a leading capital letter for Chen's original version and all small letters for what we here designate simple.} that special form of the PGP in which the generating set for $\mathbb{A}^m$ is constituted of all tuples $(x_1,\ldots,x_m)$ in which at least $m-k$ of these elements are equal. \emph{Simple $k$-switchability} will be another special form of the PGP in which the generating set for $\mathbb{A}^m$ is constituted of all tuples $(x_1,\ldots,x_m)$ in which there exists $a_i<\ldots<a_{k'}$, for $k'\leq k$, so that
\[ (x_1,\ldots,x_m) = (x_1,\ldots,x_{a_1},x_{a_1+1},\ldots,x_{a_2},x_{a_2+1},\ldots,\ldots,x_{a_k'},x_{a_k'+1},\ldots,x_m),\]
where $x_1=\ldots=x_{a_1-1}$, $x_{a_1}=\ldots=x_{a_2-1}$, \ldots, $x_{a_{k'}}=\ldots=x_{a_m}$. Thus, $a_1,a_2,\ldots,a_{k'}$ are the indices where the tuple switches value.  We say that $\mathbb{A}$ is simply collapsible (switchable) if there exists $k$ such that it is simply $k$-collapsible ($k$-switchable). We note that Zhuk uses this form of simple switchability,  in \cite{ZhukGap2015}, where he proves that the only kind of PGP for finite-domain algebras is simple switchability.

Our first contribution shows $k$-collapsibility, whose definition is deferred until adversaries are introduced in Section \ref{sec:from-pgp-complexity}, and simple $k$-collapsibility, coincide. The same applies to $k$-switchability and simple $k$-switchability, and we will dwell on these distinctions no longer. For any finite algebra, $k$-collapsibility implies $k$-switchability, and for any $2$-element algebra, $k$-switchability implies $k$-collapsibility.

Switchability was introduced by Chen in  \cite{AU-Chen-PGP} as a generalisation of the already-known collapsibility \cite{hubie-sicomp} when he discovered a $4$-ary operation $f$ on the three-element domain so that $\{f\}$ had the PGP (switchability) but was not collapsible. Thus it seemed that collapsibility was not enough to explain membership of QCSP in NP. What we prove as our third contribution is that Inv$\{f\}$ is not finitely related, and what is more, every finite subset of Inv$\{f\}$ is collapsible. Moreover, we prove this for all switchable clones $\mathbb{A}$ on $3$-elements that omit a G-set, what Chen terms \emph{Gap Algebras}.
For QCSP complexity for three-element structures, our result already shows we do not need the additional notion of switchability to explain membership in P, as collapsibility will already suffice. Note that the parameter $k$ of collapsibility is unbounded over these increasing finite subsets while the parameter of switchability clearly remains bounded. 

In the arxiv version of \cite{MFCS2017} we proved that this is also true when $\mathbb{A}$ has a G-set. The proof is an exhaustive case analysis and is neither interesting nor sheds light on the general case. It is omitted due to its longevity and the fact that Chen was most interested in the Gap Algebras (\mbox{cf.} the ``Classification'' Theorem 8.1 in \cite{AU-Chen-PGP}).  If these results were generalisable to higher domains then perhaps collapsibility is enough to explain all membership of finite constraint language QCSP in NP.

\subsection{Infinite languages}

Our use of infinite languages (\mbox{i.e.} signatures, since we work on a finite domain) is a controversial part of our discourse and merits special discussion. We wish to argue that a necessary corollary of the algebraic approach to (Q)CSP is a reconciliation with infinite languages. The traditional approach to consider arbitrary finite subsets of Inv$(\mathbb{A})$ is unsatisfactory in the sense that choosing this way to escape the -- naturally infinite -- set Inv$(\mathbb{A})$ is as arbitrary as the choice of encoding required for infinite languages. However, the difficulty in that choice is of course the reason why this route is often eschewed. The first possibility that comes to mind for encoding a relation in Inv$(\mathbb{A})$ is probably to list its tuples, while the second is likely to be to describe the relation in some kind of ``simple'' logic. Both these possibilities are discussed in \cite{Creignou}, for the Boolean domain, where the ``simple'' logic is the propositional calculus. For larger domains, this would be equivalent to quantifier-free propositions over equality with constants. Both Conjunctive Normal Form (CNF) and Disjunctive Normal Form (DNF) representations are considered in \cite{Creignou} and a similar discussion in \cite{ecsps} exposes the advantages of the DNF encoding. The point here is that testing non-emptiness of a relation encoded in CNF may already be NP-hard, while for DNF this will be tractable. Since DNF has some benign properties, we might consider it a ``nice, simple'' logic while for ``simple'' logic we encompass all quantifier-free sentences, that include DNF and CNF as special cases. The reason we describe this as ``simple'' logic is to compare against something stronger, say all first-order sentences over equality with constants. Here recognising non-emptiness becomes Pspace-hard and since QCSPs already sit in Pspace, this complexity is unreasonable.

For the QCSP over infinite languages Inv$(\mathbb{A})$, Chen and Mayr \cite{QCSPmonoids} have declared for our first, tuple-listing, encoding. In this paper we will choose the ``simple'' logic encoding, occasionally giving more refined results for its ``nice, simple'' restriction to DNF. Our choice of the ``simple'' logic encoding over the tuple-listing encoding will ultimately be justified by the (Revised) Chen Conjecture holding for ``simple'' logic yet failing for tuple-listings. Since the original Chen Conjecture is known now to be false \cite{ZhukM20}, our result becomes more remarkable. However, there are some surprising consequences, it follows from \cite{ZhukM20} that there exists a finite and 3-element $\mathcal{B}$ with constants, so that QCSP$(\mathrm{Inv}(\mathrm{Pol}(\mathcal{B})))$, under our encoding, and QCSP$(\mathcal{B})$ have different complexities: the former being co-NP-hard while the latter is in P.

The Feder-Vardi Conjecture is known to hold for infinite languages \cite{ZhukFVConjecture-infinite} but the proofs are based on the tuple-listing encoding. We can not say whether the polynomial cases are preserved under the DNF encoding.

Let us consider examples of our encodings. For the domain $\{1,2,3\}$, we may give a binary relation either by the tuples $\{ (1,2), (2,1), (2,3), (3,2), (1,3), (3,1), (1,1) \}$ or by the ``simple'' logic formula $(x\neq y \vee x=1)$. For the domain $\{0,1\}$, we may give the ternary (not-all-equal) relation by the tuples $\{ (1,0,0), (0,1,0), (0,0,1), (1,1,0), (1,0,1), (1,1,0)\}$ or by the ``simple'' logic formula $(x\neq y \vee y \neq z)$. In both of these examples, the simple formula is also in DNF.

\vspace{0.2cm}
\noindent \textbf{Nota Bene}. The results of this paper apply for the ``simple'' logic encoding as well as the ``nice, simple'' encoding in DNF except where specifically stated otherwise. These exceptions are Proposition~\ref{prop:coNP2} and Corollary~\ref{cor:coNP2} (which  uses the ``nice, simple'' DNF) and Proposition~\ref{prop:Chen-fails} (which uses the tuple-listing encoding).

\subsection{Related work}

This is the journal version of \cite{LICS2015} and \cite{MFCS2017}. The majority of the proofs were omitted from these conference papers but the section numbers are preserved in the arxiv versions. However, several parts of those papers have become superseded or otherwise outdated. This applies to Sections 3 and 5 of  \cite{LICS2015}, leaving Section 4 appearing in its entirety. From \cite{MFCS2017} we give Section 3 in its entirety but only the more interesting part of Section 4 (omitting the algebras containing a G-set). Section 5 is omitted.

On the other hand, the canonical example of projective and non-degenerate adversaries is now known to be switchability \cite{ZhukGap2015}. This has raised the importance of Section 4 of  \cite{LICS2015} as the bridge between two forms of switchability and a necessary part of proving that PGP yields a QCSP in NP.

\subsection{Some comment on notation}

We us calligraphic notation $\mathcal{A}$ for constraint languages over domain $A$. Constraint languages can be seen as a set of relations over the same domain or as first-order relational structures and we rather conflate the two (already in the abstract). Sets such as $\Inv(\mathbb{A})$ can be seen as infinite constraint languages and we might talk of (finite) subsets $\Delta$ of this as a constraint language or a (finite-signature) reduct. 

Algebras are indicates in blackboard notation $\mathbb{A}$. All domains in this paper are finite. We write pH to indicate positive Horn.  

\newcommand{\BarnyApprovedTitleForOurStuffIntro}[0]{The PGP: collapsibility and beyond}
\newcommand{\BarnyApprovedTitleForOurStuffText}[0]{\BarnyApprovedTitleForOurStuffIntro}

\section{\BarnyApprovedTitleForOurStuffText}
\label{sec:from-pgp-complexity}
Throughout this section, we will be concerned with a constraint language $\mathcal{A}$ that may or may not have some constants naming the elements. We will be specific when we require constants naming elements. In Chen's \cite{hubie-sicomp,AU-Chen-PGP}, the assumption of constants naming elements is often implicit, \mbox{e.g.} through idempotency, but several of his theorems apply in the general case, and are reproduced here in generality. 

Later in this section we will use Fraktur notation for constraint languages embellished with additional constants (different from any basic constants just naming elements) that we ultimately use to denote universal variables.

\subsection{Games, adversaries and reactive composition}
\label{sec:games-adversaries-reactivecomposition}
%
We recall some terminology due to
Chen~\cite{hubie-sicomp,AU-Chen-PGP}, for his natural adaptation of
the model checking game to the context of pH-sentences. 
We shall not need to explicitly play these games but only to handle
strategies for the existential player. 
An \emph{adversary} $\mathscr{B}$ of length $m\geq 1$ is an $m$-ary
relation over $A$. 
When $\mathscr{B}$ is precisely the set
$B_{1}\times B_{2} \times \ldots \times B_{m}$ for some non-empty
subsets $B_1,B_2,\ldots,B_m$ of $A$, we speak of a \emph{rectangular
  adversary}.
Let $\phi$ have universal variables $x_1,\ldots,x_m$ and quantifier-free
part $\psi$. We write $\mathcal{A}\models \phi_{\restrict\mathscr{B}}$ and say that
\emph{the existential player has a winning strategy in the
$(\mathcal{A},\phi)$-game against adversary $\mathscr{B}$} iff there
exists a set of Skolem functions $\{\sigma_x : \mbox{`$\exists x$'}
\in \phi \}$ such that for any assignment $\pi$ of the universally
quantified variables of $\varphi$ to $A$, where
$\bigl(\pi(x_1),\ldots,\pi(x_m)\bigr) \in \mathscr{B}$, the map $h_\pi$ 
is a homomorphism from $\mathcal{D}_\psi$ (the canonical database) to $\mathcal{A}$, where
$$h_\pi(x):=
\begin{cases}
  \pi(x) & \text{, if $x$ is a universal variable; and,}\\
  \sigma_x(\left.\pi\right|_{Y_x})& \text{, otherwise.}\\
\end{cases}
$$
(Here, $Y_x$ denotes the set of universal variables preceding $x$ and $\left.\pi\right|_{Y_x}$ the restriction of $\pi$ to $Y_x$.)
Clearly, $\mathcal{A} \models \phi$ iff the existential player has a winning strategy in the
$(\mathcal{A},\phi)$-game against the so-called \emph{full
  (rectangular) 
  adversary} $A\times A \times \ldots \times A$ (which
we will denote hereafter by $A^m$).
We say that an adversary $\mathscr{B}$ of length $m$ \emph{dominates}
an adversary $\mathscr{B}'$ of length $m$ when $\mathscr{B}'\subseteq
\mathscr{B}$. Note that $\mathscr{B}'\subseteq  \mathscr{B}$ and
$\mathcal{A}\models \phi_{\restrict\mathscr{B}}$ implies
$\mathcal{A}\models \phi_{\restrict\mathscr{B}'}$. 
We will also
consider sets of adversaries of the same length, denoted
by uppercase Greek letters as in $\Omega_m$; and, sequences thereof, which we
denote with bold uppercase Greek letters as in
$\mathbf{\Omega}=\bigl(\Omega_m\bigr)_{m \in \integerset}$. 
We will write
$\mathcal{A}\models \phi_{\restrict\Omega_m}$ to denote that
$\mathcal{A}\models  \phi_{\restrict\mathscr{B}}$ holds for every
adversary $\mathscr{B}$ in $\Omega_m$. 
We call \emph{width} of $\Omega_m$ and write $\mathrm{width}(\Omega_m)$ for $\sum_{\mathscr{B}\in
  \Omega_m} |\mathscr{B}|$.
We say that $\mathbf{\Omega}$ is \emph{polynomially bounded} if
there exists a polynomial $p(m)$ such that for every $m\geq 1$,
$\mathrm{width}(\Omega_m) \leq p(m)$. We say that $\mathbf{\Omega}$ is
\emph{effective} if there exists a polynomial $p'(m)$ and an algorithm that outputs
$\Omega_m$ for every $m$ in total time
$p'(\mathrm{width}(\Omega_m))$.
%
%

Let $f$ be a $k$-ary operation of $\mathcal{A}$ and $\mathscr{A},\mathscr{B}_1,\ldots,\mathscr{B}_k$ be adversaries of length $m$.
We say that $\mathscr{A}$ is \emph{reactively composable} from the
adversaries $\mathscr{B}_1,\ldots,\mathscr{B}_k$ via $f$, and we write $\mathscr{A} \reactivelycomposable f(\mathscr{B}_1,\ldots,\mathscr{B}_k)$ iff there exist partial functions $g^j_i:A^i \to A$ for every $i$ in $[m]$ and every $j$ in $[k]$ such that, for every tuple $(a_1,\ldots,a_m)$ in adversary $\mathscr{A}$ the following holds.
\begin{compactitem}
\item for every $j$ in $[k]$, the values 
  $g^j_1(a_1), g^j_2(a_1,a_2),$ $\ldots , g^j_m(a_1,a_2,\ldots,a_m)$ are defined and the tuple $\bigl(g^j_1(a_1), g^j_2(a_1,a_2), \ldots, g^j_m(a_1,a_2,\ldots,a_m)\bigr)$ is in adversary $\mathscr{B}_j$; and,
\item for every $i$ in $[m]$, $a_i =f\bigl(g^1_i(a_1,a_2,\ldots,a_i),$
  $g^2_i(a_1,a_2,\ldots,a_i),\ldots,g^k_i(a_1,a_2,\ldots,a_i))$.
\end{compactitem}
We write $\mathscr{A} \reactivelycomposable
\{\mathscr{B}_1,\ldots,\mathscr{B}_k\}$ if there exists a $k$-ary
operation $f$ such that $\mathscr{A} \reactivelycomposable f(\mathscr{B}_1,\ldots,\mathscr{B}_k)$

\begin{remark}
  We will never show reactive composition by exhibiting a polymorphism $f$ and  partial functions $g^i_j$ that depend on all their arguments. We will always be able to exhibit partial functions that depend only on their last argument. 
\end{remark}
Reactive composition allows to interpolate complete Skolem functions
from partial ones.
\begin{theorem}[{\cite[Theorem 7.6]{AU-Chen-PGP}}]
  \label{thm:hubieReactiveComposition}
  Let $\phi$ be a pH-sentence with $m$ universal variables. Let $\mathscr{A}$ be an adversary and $\Omega_m$ a set of adversaries, both of length $m$.
  
  If $\mathcal{A}\models \phi_{\restrict\Omega_m}$ and  $\mathscr{A} \reactivelycomposable \Omega_m$ then
  $\mathcal{A} \models \phi$.
\end{theorem}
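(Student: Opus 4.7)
The plan is to lift the given winning strategies on each $\mathscr{B}_j \in \Omega_m$ to a winning strategy on $\mathscr{A}$ by applying the polymorphism $f$ componentwise, using the partial functions $g_i^j$ supplied by reactive composability to translate the universal player's moves on $\mathscr{A}$ into simulated universal moves on each $\mathscr{B}_j$. A winning strategy on $\mathscr{A}$ yields $\mathcal{A} \models \phi$ in the case that $\mathscr{A}$ is (or dominates) the full adversary $A^m$, which is the intended application.

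In more detail, write $\Omega_m = \{\mathscr{B}_1,\ldots,\mathscr{B}_k\}$ and fix $f$ and $g_i^j$ witnessing $\mathscr{A} \reactivelycomposable f(\mathscr{B}_1,\ldots,\mathscr{B}_k)$, together with Skolem functions $\sigma_x^j$ witnessing $\mathcal{A} \models \phi_{\restrict \mathscr{B}_j}$ for each $j \in [k]$. For a universal assignment $\pi$ with $(\pi(x_1),\ldots,\pi(x_m)) \in \mathscr{A}$, define simulated assignments
\[ \pi_j(x_i) \ := \ g_i^j\bigl(\pi(x_1),\ldots,\pi(x_i)\bigr); \]
by the first clause of reactive composability each $(\pi_j(x_1),\ldots,\pi_j(x_m))$ lies in $\mathscr{B}_j$, so the strategy $\sigma^j$ is available on $\pi_j$. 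The combined Skolem function is then
\[ \sigma_x(\pi|_{Y_x}) \ := \ f\bigl(\sigma_x^1(\pi_1|_{Y_x}),\ldots,\sigma_x^k(\pi_k|_{Y_x})\bigr). \]
Because $g_i^j$ only inspects the prefix $\pi(x_1),\ldots,\pi(x_i)$, each $\pi_j|_{Y_x}$ is determined by $\pi|_{Y_x}$, and hence $\sigma_x$ is a legitimate Skolem function respecting the quantifier prefix of $\phi$.

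To finish I would check that the induced map $h_\pi$ is a homomorphism from $\mathcal{D}_\psi$ to $\mathcal{A}$ via the componentwise identity $h_\pi(y) = f\bigl(h_{\pi_1}(y),\ldots,h_{\pi_k}(y)\bigr)$ for every variable $y$: for existential $y$ this is the definition of $\sigma_y$, while for universal $y = x_i$ it is precisely the second clause of reactive composability $a_i = f\bigl(g_i^1(a_1,\ldots,a_i),\ldots,g_i^k(a_1,\ldots,a_i)\bigr)$. Since each $h_{\pi_j}$ preserves the relations of $\psi$ and $f$ is a polymorphism of $\mathcal{A}$, preservation of each atomic constraint follows by evaluating $f$ componentwise. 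The subtle point, which I expect to be the only real obstacle, is the well-definedness of $\sigma_x$: this is exactly why reactive composability constrains $g_i^j$ to depend only on a prefix rather than on the whole tuple. The simulation must be executed online in lockstep with the universal player's moves on $\mathscr{A}$, so that the simulated move at $x_i$ on each $\mathscr{B}_j$ is computable before the existential player has committed to answers at later variables; once this causality is encoded, the rest reduces to the standard polymorphism-preserves-relations argument.
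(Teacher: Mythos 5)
Your proof is correct and follows essentially the same route as the paper's own sketch: translating the universal play on $\mathscr{A}$ backwards through the prefix-dependent maps $g^j_i$ into plays on each $\mathscr{B}_j$, combining the existential answers forwards through the polymorphism $f$, and invoking preservation of the conjunctive-positive quantifier-free part. Your explicit attention to well-definedness of the combined Skolem functions (that $\pi_j|_{Y_x}$ is determined by $\pi|_{Y_x}$) and to the reading of the conclusion when $\mathscr{A}$ is not the full adversary makes the argument, if anything, slightly more careful than the paper's sketch.
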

\begin{proof}
   We sketch the proof for the sake of completeness.
   Let $\Omega_m:=\{\mathscr{B}_1,\ldots,\mathscr{B}_k\}$ and $f$ and
   $g^i_j$ be as in the definition of reactive composition and
   witnessing that $\mathscr{A} \reactivelycomposable f(\mathscr{B}_1,\ldots,\mathscr{B}_k)$. 
   Assume also that $\mathcal{A}\models \phi_{\restrict\Omega_m}$.
   Given any sequence of play of the universal player according to the adversary $\mathscr{A}$, that is $v_1$ is played as $a_1 \in A_1$, $v_2$ is played as $a_2 \in A_2$, etc., we  ``go backwards through $f$'' via the maps $g^i_j$ to pinpoint \emph{incrementally} for each $j \in [k]$ a sequence of play $v_1=g^1_j(a_1)$, $v_2=g^2_j(a_1,a_2)$ etc, thus yielding eventually a tuple that belongs to adversary $\mathscr{B}_j$. After each block of universal variables, we lookup the winning strategy for the existential player against each adversary $\mathscr{B}_j$ and ``going forward through $f$'', that is applying $f$ to the choice of values for an existential variable against each adversary, we obtain a consistent choice for this variable against adversary $\mathscr{A}$ (this is because $f$ is a polymorphism and the quantifier-free part of the sentence $\phi$ is conjunctive positive). Going back and forth we obtain eventually an assignment to the existential variables that is consistent with the universal variables being played as $a_1,a_2,\ldots,a_m$.
\end{proof}
As a concrete example of an interesting sequence of adversaries, consider the adversaries for the notion of
\emph{$p$-collapsibility}. Let $p\geq 0$ be some fixed integer.
For $x$ in $A$, let $\Upsilon_{m,p,x}$ be the set of all 
rectangular 
adversaries of length $m$ with $p$ coordinates that are the set $A$
and all the other that are the fixed singleton $\{x\}$. For
$B\subseteq A$, let $\Upsilon_{m,p,B}$ be the union of $\Upsilon_{m,p,x}$ for all $x$ in $B$.
Let $\mathbf\Upsilon_{p,B}$ be the sequence of adversaries
$\Bigl(\Upsilon_{m,p,B} \Bigr)_{m \in \integerset}$.
Chen's original definition \cite{hubie-sicomp} for a structure $\mathcal{A}$ to be \emph{$p$-collapsible from source $B$} was that for every $m$ and for all pH-sentence $\phi$
with $m$ universal variable, $\mathcal{A}\models \phi_{\restrict\Upsilon_{m,p,B}}$ implies   $\mathscr{A} \models \phi$.

Let us consider now the  adversaries for the notion of
\emph{$p$-switchability}. Let $p\geq 0$ be some fixed integer.
Let $\Xi_{m,p}$ be the set of all tuples $(x_1,\ldots,x_m)$ in which there exists $a_i<\ldots<a_{k'}$, for $k'\leq p$, so that
\[ (x_1,\ldots,x_m) = (x_1,\ldots,x_{a_1},x_{a_1+1},\ldots,x_{a_2},x_{a_2+1},\ldots,\ldots,x_{a_k'},x_{a_k'+1},\ldots,x_m),\]
where $x_1=\ldots=x_{a_1-1}$, $x_{a_1}=\ldots=x_{a_2-1}$, \ldots, $x_{a_{k'}}=\ldots=x_{a_m}$.  
Let $\mathbf\Xi_{p}$ be the sequence of adversaries
$\Bigl(\Xi_{m,p} \Bigr)_{m \in \integerset}$.
Chen originally defined \cite{AU-Chen-PGP} a constraint language $\mathcal{A}$ to be \emph{$p$-switchable} iff for every $m$ and for all pH-sentences $\phi$
with $m$ universal variable, $\mathcal{A}\models \phi_{\restrict\Xi_{m,p}}$ implies   $\mathscr{A} \models \phi$. We will contrast the different definitions once again in the key forthcoming theorem ``In Abstracto'' (Theorem \ref{MainResult:InAbstracto}), where we will finally prove them equivalent.

\subsection{The $\Pi_2$-case}
\label{sec:pi2}
For a $\Pi_2$-pH sentence, the existential player knows the values of
all universal variables beforehand, and it suffices for her to have a
winning strategy for each instantiation (and perhaps no way to
reconcile them as should be the case for an arbitrary sentence). 
This also means that considering a set of adversaries of same length
is not really relevant in this $\Pi_2$-case as we may as well consider
the union of these adversaries or the set of all their tuples .
\begin{lemma}[\textbf{principle of union}]
  \label{lemma:union}
  Let $\Omega_m$ be a set of adversaries of length $m$ and $\phi$ a $\Pi_2$-sentence with $m$ universal variables.
  Let $\mathscr{O}_{\cup\Omega_m}:=\bigcup_{\mathscr{O}\in \Omega_m} \mathscr{O}$
  and $\Omega_{\text{tuples}}:=\{\{t\} | t \in \mathscr{O}_{\cup\Omega_m}\}$. 
  We have the following equivalence.
  $$\mathcal{A}\models \varphi_{\restrict\Omega_m} 
  \quad \iff \quad
  \mathcal{A}\models \varphi_{\restrict\mathscr{O}_{\cup\Omega_m}}
  \quad \iff \quad
  \mathcal{A}\models \varphi_{\restrict\Omega_{\text{tuples}}}$$
\end{lemma}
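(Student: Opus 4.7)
The plan is to exploit the defining feature of $\Pi_2$ pH-sentences, namely that every existential variable is preceded by \emph{all} the universal variables, so the Skolem functions $\sigma_{y}$ have full access to $(x_1,\ldots,x_m)$. Write $\phi = \forall x_1 \cdots \forall x_m \exists y_1 \cdots \exists y_n\, \psi$. Then for each existential variable $y$ we have $Y_y = \{x_1,\ldots,x_m\}$, so the restriction $\left.\pi\right|_{Y_y}$ in the definition of $h_\pi$ is simply $\pi$ itself. Consequently, choosing the values $\sigma_y(\pi(x_1),\ldots,\pi(x_m))$ for each assignment $\pi$ can be done independently at each input tuple, with no coherence constraint between different tuples.

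From this observation I would derive the following pointwise characterisation: $\mathcal{A}\models \phi_{\restrict\mathscr{B}}$ iff for every tuple $t=(a_1,\ldots,a_m)\in\mathscr{B}$ there exists an extension of $t$ to the existential variables that is a homomorphism from $\mathcal{D}_\psi$ to $\mathcal{A}$. The forward direction is immediate from the definition, and the backward direction is obtained by assembling the local witnesses into Skolem functions (extending arbitrarily on inputs outside $\mathscr{B}$, which is irrelevant since the game only evaluates them on tuples in $\mathscr{B}$). Call this local condition $P(t)$.

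With $P(t)$ in hand, all three statements in the lemma unfold to the same universally quantified condition over tuples:
\begin{itemize}
\item $\mathcal{A}\models \phi_{\restrict\Omega_m}$ iff $P(t)$ holds for every $t\in \mathscr{B}$ and every $\mathscr{B}\in\Omega_m$, i.e.\ for every $t\in \mathscr{O}_{\cup\Omega_m}$.
\item $\mathcal{A}\models \phi_{\restrict \mathscr{O}_{\cup\Omega_m}}$ iff $P(t)$ holds for every $t\in \mathscr{O}_{\cup\Omega_m}$, by the pointwise characterisation applied to the single adversary $\mathscr{O}_{\cup\Omega_m}$.
\item $\mathcal{A}\models \phi_{\restrict \Omega_{\text{tuples}}}$ iff for every $\{t\}\in\Omega_{\text{tuples}}$, $P(t)$ holds on the singleton $\{t\}$, which is again the same quantification.
\end{itemize}
Comparing the three yields the claimed chain of equivalences.

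There is no real obstacle here; the only thing to be careful about is that the pointwise characterisation genuinely needs the $\Pi_2$ shape. If even one block of existential variables preceded a universal variable, then some $Y_y$ would be a proper subset of $\{x_1,\ldots,x_m\}$ and the Skolem functions would be forced to agree across distinct tuples that share the initial coordinates in $Y_y$, which is exactly what breaks the union principle for general pH-sentences. Making this hypothesis explicit where I invoke it is the one spot meriting care.
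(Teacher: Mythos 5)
Your proposal is correct and follows essentially the same route as the paper: the paper reduces all three statements to the per-tuple condition via its Fact~\ref{fac:union}, which is exactly your pointwise characterisation $P(t)$, justified in the same way by the observation that in a $\Pi_2$-sentence every Skolem function sees the full universal tuple and so witnesses can be chosen independently per tuple. Your closing caveat about why the $\Pi_2$ shape is essential matches the paper's own remark (with its $\mathcal{K}_4$ counterexample) following the lemma.
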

\noindent The forward implications
$$\mathcal{A}\models \varphi_{\restrict\Omega_m} 
\quad \implies \quad
\mathcal{A}\models \varphi_{\restrict\mathscr{O}_{\cup\Omega}}
\quad \implies \quad
\mathcal{A}\models \varphi_{\restrict\Omega_{\text{tuples}}}$$
of Lemma~\ref{lemma:union} hold clearly for arbitrary pH-sentences. 
The proof is trivial and is a direct consequence of the following
obvious fact.
\begin{fact}
  \label{fac:union}
  Let $\Omega_m$ be a set of adversaries of length $m$ and $\phi$ a $\Pi_2$-sentence with $m$ universal variables.
  $$\mathcal{A}\models \varphi_{\restrict\Omega_m}$$
  $$\quad \Updownarrow \quad$$
  $$\forall \mathscr{O} \in \Omega_m \forall t=(a_1,\ldots,a_m)\in  \mathscr{O} \ \mathcal{A}\models \varphi_{\restrict\{t\}}$$
\end{fact}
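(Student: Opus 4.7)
The plan is to prove both directions of the claimed equivalence separately, noting that the forward implication holds for arbitrary pH-sentences whereas the backward implication uses the $\Pi_2$-structure in an essential way.

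\textbf{Forward direction.} Suppose $\mathcal{A}\models \varphi_{\restrict\Omega_m}$. Fix any $\mathscr{O}\in\Omega_m$ and any tuple $t=(a_1,\ldots,a_m)\in\mathscr{O}$. Since $\{t\}\subseteq\mathscr{O}$, the singleton adversary $\{t\}$ is dominated by $\mathscr{O}$, so any winning strategy for the existential player against $\mathscr{O}$ remains winning against $\{t\}$ (the observation on domination immediately preceding the $\Pi_2$ subsection). Hence $\mathcal{A}\models \varphi_{\restrict\{t\}}$.

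\textbf{Backward direction.} Assume that $\mathcal{A}\models \varphi_{\restrict\{t\}}$ for every $\mathscr{O}\in\Omega_m$ and every $t\in\mathscr{O}$. Fix a single $\mathscr{O}\in\Omega_m$; I will build Skolem functions witnessing $\mathcal{A}\models \varphi_{\restrict\mathscr{O}}$, after which the conclusion over $\Omega_m$ follows by ranging over all $\mathscr{O}$. Because $\varphi$ is $\Pi_2$, every existential variable $y$ is preceded by the entire block of universal variables, so $Y_y=\{x_1,\ldots,x_m\}$, and each Skolem function $\sigma_y$ is simply a map $A^m\to A$. For every $t=(a_1,\ldots,a_m)\in\mathscr{O}$ the hypothesis yields values $c^t_y\in A$, one for each existential variable $y$, such that setting $x_i\mapsto a_i$ and $y\mapsto c^t_y$ defines a homomorphism from $\mathcal{D}_\psi$ to $\mathcal{A}$.

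Define $\sigma_y(b_1,\ldots,b_m):=c^{(b_1,\ldots,b_m)}_y$ whenever $(b_1,\ldots,b_m)\in\mathscr{O}$ and extend arbitrarily to $A^m\setminus\mathscr{O}$. This is well-defined because each tuple of $\mathscr{O}$ selects a unique witness $c^t_y$. For any play $\pi$ of the universal player with $(\pi(x_1),\ldots,\pi(x_m))=t\in\mathscr{O}$, the resulting map $h_\pi$ evaluates $y\mapsto\sigma_y(t)=c^t_y$, so by construction $h_\pi$ is a homomorphism from $\mathcal{D}_\psi$ to $\mathcal{A}$. Hence $\mathcal{A}\models\varphi_{\restrict\mathscr{O}}$, completing the proof.

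\textbf{Main conceptual point.} The only substantive step is the tuple-by-tuple pasting of the Skolem functions, which is legitimate precisely because in the $\Pi_2$-case each existential variable sees the \emph{full} universal play before being assigned a value. For a general pH-sentence this decoupling fails: Skolem functions for early existential variables must commit to values that are consistent with all subsequent universal choices, so pointwise construction would generally be inconsistent. This is exactly why the lemma is stated for $\Pi_2$-pH and not more generally, and why the forward direction is the only part that survives outside this regime.
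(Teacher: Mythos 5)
Your proof is correct and matches the paper's reasoning: the paper states this Fact without proof as an ``obvious fact,'' justified only by the preceding remark that for a $\Pi_2$-sentence the existential player knows the full universal play before choosing any witness, which is precisely the tuple-by-tuple pasting of Skolem functions you carry out in the backward direction. Your forward direction via domination and your closing observation about why the argument fails beyond $\Pi_2$ both agree with the paper's surrounding discussion (including its explicit counterexample on $\mathcal{K}_4$).
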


\begin{remark}[following Lemma~\ref{lemma:union}]
  For a sentence that is not $\Pi_2$, this does not necessarily hold. For example, consider $\forall x \forall y \exists z \forall w \ E(x,z) \wedge E(y,z) \wedge E(w,z)$ on the irreflexive $4$-clique $\mathcal{K}_4$. The sentence is not true, but for all individual tuples $(x_0,y_0,w_0)$, we have $\exists z \ E(x_0,z) \wedge E(y_0,z) \wedge E(w_0,z)$.
\end{remark}

Let $\mathscr{A}$ be an adversary and $\Omega_m$ a set of adversaries, both of length $m$.
We say that $\Omega_m$ \emph{generates} $\mathscr{A}$ iff for any tuple $t$ in $\mathscr{A}$, there exists a $k$-ary polymorphism $f_t$ of
$\mathcal{A}$ and tuples $t_1,\ldots,t_k$ in $\Omega_{\text{tuples}}$
such that $f_t(t_1,\ldots,t_k)=t$.
We have the following analogue of Theorem~\ref{thm:hubieReactiveComposition}.
\begin{proposition}
  \label{prop:pi2composition}
  Let $\phi$ be a $\Pi_2$-pH-sentence with $m$ universal variables. Let $\mathscr{A}$ be an adversary and $\Omega_m$ a set of adversaries, both of length $m$.
  
  If $\mathcal{A}\models \phi_{\restrict\Omega_m}$ and  $\Omega_m$
  generates $\mathscr{A}$  then
  $\mathcal{A} \models \phi_{\restrict\mathscr{A}}$.
\end{proposition}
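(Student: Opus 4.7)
The proof combines the reduction to single tuples supplied by the $\Pi_2$ framework (Fact~\ref{fac:union} / Lemma~\ref{lemma:union}) with the fact that polymorphisms of $\mathcal{A}$ preserve the atoms of the quantifier-free matrix. Since $\phi$ is $\Pi_2$, by Fact~\ref{fac:union} it suffices to show $\mathcal{A}\models\phi_{\restrict\{t\}}$ for every single tuple $t\in\mathscr{A}$. Fix such a $t=(a_1,\ldots,a_m)$. Using that $\Omega_m$ generates $\mathscr{A}$, pick a $k$-ary polymorphism $f=f_t$ of $\mathcal{A}$ and tuples $t_1,\ldots,t_k\in\Omega_{\text{tuples}}$ with $f(t_1,\ldots,t_k)=t$ (applied componentwise).

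First, because each $t_j\in\Omega_{\text{tuples}}$, the hypothesis $\mathcal{A}\models\phi_{\restrict\Omega_m}$ combined with Fact~\ref{fac:union} gives $\mathcal{A}\models\phi_{\restrict\{t_j\}}$ for every $j\in[k]$. For each such $j$, unpack the definition of a winning strategy to obtain, for every existential variable $x$ of $\phi$, a value $\sigma^j_x\in A$ such that the assignment $h^j$ sending the universal variables $x_1,\ldots,x_m$ to the components of $t_j$ and each existential $x$ to $\sigma^j_x$ is a homomorphism from $\mathcal{D}_\psi$ to $\mathcal{A}$ (here we use crucially that $\phi$ is $\Pi_2$, so the existentials sit outside all universals and the Skolem functions reduce to constants).

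Next, define the candidate strategy for the tuple $t$ by componentwise application of $f$: for each existential variable $x$, set $\sigma_x := f(\sigma^1_x,\ldots,\sigma^k_x)$. The resulting assignment $h$ sends each universal $x_i$ to $f(t_1[i],\ldots,t_k[i])=t[i]=a_i$ by choice of $f$, and each existential $x$ to $\sigma_x$. For every atom $R(y_1,\ldots,y_r)$ of the matrix $\psi$ and every $j\in[k]$, the tuple $(h^j(y_1),\ldots,h^j(y_r))$ lies in $R^{\mathcal{A}}$. Since $f$ is a polymorphism of $\mathcal{A}$, it preserves $R$, so
\[
\bigl(f(h^1(y_1),\ldots,h^k(y_1)),\ldots,f(h^1(y_r),\ldots,h^k(y_r))\bigr)=(h(y_1),\ldots,h(y_r))\in R^{\mathcal{A}}.
\]
Hence $h$ is a homomorphism from $\mathcal{D}_\psi$ to $\mathcal{A}$ witnessing $\mathcal{A}\models\phi_{\restrict\{t\}}$. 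Ranging over $t\in\mathscr{A}$ and invoking Fact~\ref{fac:union} once more yields $\mathcal{A}\models\phi_{\restrict\mathscr{A}}$.

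\textbf{Where the difficulty lies.} There is no real obstacle beyond bookkeeping; the point is that generation by a polymorphism is a strictly weaker requirement than reactive composition, and this weakening is exactly what the $\Pi_2$ restriction permits. In the general pH case (Theorem~\ref{thm:hubieReactiveComposition}), one needs the incremental maps $g^j_i$ to reconcile the $k$ different plays interleaved with the existential player's responses across several quantifier alternations; in the $\Pi_2$ case all universals precede all existentials, so one can simply collect the $k$ constant Skolem values and amalgamate them via $f$ in one shot. The only thing worth watching is that $f$ genuinely is a polymorphism of $\mathcal{A}$ (hence preserves each relation appearing in $\psi$), which is built into the definition of ``$\Omega_m$ generates $\mathscr{A}$''.
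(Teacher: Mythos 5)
Your proof is correct and follows essentially the same route as the paper's: reduce to singleton adversaries $\{t\}$ via the principle of union (which is where the $\Pi_2$ hypothesis is used), obtain constant Skolem values against each $\{t_j\}$, and amalgamate them through the polymorphism $f_t$. The only difference is presentational: the paper observes that generation of $\{t\}$ is reactive composition with trivially induced partial maps $g^j_i$ and then cites Theorem~\ref{thm:hubieReactiveComposition}, whereas you inline the (in this special case, easy) verification that componentwise application of $f_t$ yields a homomorphism.
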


\begin{proof}
  The hypothesis that $\Omega_m$ generates $\mathscr{A}$
  can be rephrased as follows : for each tuple $t$ in $\mathscr{A}$,
  $\{t\}\reactivelycomposable f_t(t_1,t_2,\ldots,t_k)$, where
  $t_1,t_2,\ldots,t_k$ belong to $\Omega_{\text{tuples}}$. 
  To see this, it remains to note that the suitable $g^j_i$'s from the
  definition of composition are induced trivially as there is no
  choice: for every $j$ in $[k]$ and every $i$  
  in $[m]$ pick $g^j_i(a_1,a_2,\ldots,a_i)= t_{i,j}$ where $t_{i,j}$ is the $i$th
  element of $t_j$. 
  So by Theorem~\ref{thm:hubieReactiveComposition}, if
  $\mathcal{A}\models \phi_{\restrict\Omega_{\text{tuples}}}$ then
  $\mathcal{A}\models \phi_{\restrict\{t\}}$.
  As this holds for any tuple $t$ in $\mathscr{A}$, via the principle of union, it
  follows that $\mathcal{A} \models \phi_{\restrict\mathscr{A}}$.
\end{proof}

We will construct a \emph{canonical $\Pi_2$-sentence to assert that an adversary is
  generating.} Let $\mathscr{O}$ be some adversary of length $m$.
Let $\sigma^{(m)}$ be the signature $\sigma$ expanded with a sequence of $m$ constants. For a map $\mu$ from $[m]$ to $A$, we write $\mu\in \mathscr{O}$ as shorthand for $(\mu(1),\mu(2),\ldots,\mu(m))\in \mathscr{O}$.
For some set $\Omega_m$ of adversaries of length $m$, we consider the following $\sigma^{(m)}$-structure: 
$$\bigotimes_{\mathscr{O}\in \Omega_m} \bigotimes_{\mu \in \mathscr{O}} \mathfrak{A}_{\mu}$$
where the $\sigma^{(m)}$-structure $\mathfrak{A}_{\mu}$ denotes the expansion of $\mathcal{A}$ by $m$ constants as given by the map $\mu$.
Let $\phi_{\Omega_m,\mathcal{A}}$ be the $\Pi_2$-pH-sentence\footnotemark{} created
from the canonical query of the $\sigma$-reduct of this
$\sigma^{(m)}$-structure with the $m$ constants $c_{j}$ becoming
variables $w_{j}$, universally quantified outermost, when all
\emph{constants are pairwise distinct}.
\footnotetext{For two constraint languages $\mathcal{A}$ and $\mathcal{B}$, when
  $\Omega_m$ is $A^m$ and $m$ is $|A|^{B}$, $\mathcal{B}$ models this
  canonical sentence iff  $\QCSP(\mathcal{A}) \subseteq \QCSP(\mathcal{B})$~\cite{LICS2008}}
Otherwise, we will say that $\Omega_m$ is \emph{degenerate}, and not
define the canonical sentence.

Note that adversaries such as $\Upsilon_{m,p,B}$ corresponding to $p$-collapsibility 
are not degenerate for $p>0$, and degenerate for $p=0$.
\begin{proposition}
  \label{thm:characteringPi2}
  Let $\Omega_m$ be a set of adversaries of length $m$ that is not degenerate.
  The following are equivalent.
  \begin{romannum}
  \item for any $\Pi_2$-pH sentence $\psi$, $\mathcal{A}\models
    \psi_{\restrict\Omega_m}$ implies $\mathcal{A}\models \psi$.
    \label{pi2abstracto:logical:pi2}
  \item for any $\Pi_2$-pH sentence $\psi$, $\mathcal{A}\models \psi_{\restrict\mathscr{O}_{\cup\Omega}}$ implies $\mathcal{A}\models \psi$.
  \item for any $\Pi_2$-pH sentence $\psi$, $\mathcal{A}\models \psi_{\restrict\Omega_{\text{tuples}}}$ implies $\mathcal{A}\models \psi$.
  \item $\mathcal{A}\models
    \phi_{\mathscr{O}_{\cup\Omega},\mathcal{A}}$
    \label{pi2abstracto:canonical:pi2}
  \item $\mathcal{A}\models \phi_{\Omega_{\text{tuples}},\mathcal{A}}$
  \item $\Omega_m$ generates $A^m$.\label{pi2abstracto:algebraic:pi2}
  \end{romannum}
\end{proposition}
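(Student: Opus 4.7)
The plan is to close a cycle of implications $(\text{vi})\Rightarrow(\text{i})\Rightarrow(\text{ii})\Rightarrow(\text{iii})\Rightarrow(\text{v})\Leftrightarrow(\text{iv})\Rightarrow(\text{vi})$. The step $(\text{vi})\Rightarrow(\text{i})$ is an immediate application of Proposition~\ref{prop:pi2composition} specialised to $\mathscr{A}=A^m$, since $\mathcal{A}\models\psi_{\restrict A^m}$ is just $\mathcal{A}\models\psi$. The equivalences $(\text{i})\Leftrightarrow(\text{ii})\Leftrightarrow(\text{iii})$ are handed to us by Lemma~\ref{lemma:union}, which identifies the three restricted-satisfaction notions on $\Pi_2$-pH sentences. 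So all the real content lies in the steps involving the canonical sentence.

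For $(\text{iv})\Leftrightarrow(\text{v})$ I would unfold the definitions: the product underlying $\phi_{\mathscr{O}_{\cup\Omega},\mathcal{A}}$ is $\bigotimes_{t\in\mathscr{O}_{\cup\Omega}}\mathfrak{A}_t$, which coincides with the product $\bigotimes_{\{t\}\in\Omega_{\text{tuples}}}\bigotimes_{\mu\in\{t\}}\mathfrak{A}_\mu$ underlying $\phi_{\Omega_{\text{tuples}},\mathcal{A}}$, so the two canonical sentences are literally the same. For $(\text{iii})\Rightarrow(\text{v})$, I would apply (iii) to $\psi:=\phi_{\Omega_{\text{tuples}},\mathcal{A}}$ and verify $\mathcal{A}\models\psi_{\restrict\Omega_{\text{tuples}}}$ via Fact~\ref{fac:union}, tuple by tuple: for each $t\in\mathscr{O}_{\cup\Omega}$ with universal play $w_j:=t(j)$, the coordinate projection $\pi_t$ from $\mathcal{P}=\bigotimes_{\mu\in\mathscr{O}_{\cup\Omega}}\mathfrak{A}_\mu$ onto its $t$-factor is a homomorphism sending each constant $c_j=(\mu(j))_\mu$ to $t(j)$, and its values on the non-constant elements furnish a winning Skolem assignment.

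The hard part, and the conceptual heart of the proof, is $(\text{iv})\Rightarrow(\text{vi})$. I would fix an arbitrary target $(a_1,\ldots,a_m)\in A^m$ and play universally $w_j:=a_j$ in $\phi_{\mathscr{O}_{\cup\Omega},\mathcal{A}}$. Statement (iv) produces a Skolem assignment which, when the variables of the canonical sentence are read as the elements of $\mathcal{P}\cong\mathcal{A}^K$ with $K=|\mathscr{O}_{\cup\Omega}|$, is exactly a homomorphism $h:\mathcal{P}\to\mathcal{A}$ sending $c_j$ to $a_j$, equivalently a $K$-ary polymorphism of $\mathcal{A}$. Enumerating $\mathscr{O}_{\cup\Omega}$ as $\mu_1,\ldots,\mu_K$, the constant $c_j$ is by construction the tuple $(\mu_1(j),\ldots,\mu_K(j))$, so $h(c_j)=a_j$ translates into $h$ acting componentwise on $\mu_1,\ldots,\mu_K\in\Omega_{\text{tuples}}$ to produce $(a_1,\ldots,a_m)$; this is exactly the definition of $\Omega_m$ generating $(a_1,\ldots,a_m)$, and since the target was arbitrary, $\Omega_m$ generates $A^m$. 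The subtle point to watch throughout is the role of non-degeneracy: it is what guarantees that the $c_j$ are pairwise distinct in $\mathcal{P}$, so that the $m$ universal variables $w_j$ of the canonical sentence can be played independently and the map $h$ can be read off as a genuine $K$-ary polymorphism on all of $A^K$ rather than on a forced diagonal.
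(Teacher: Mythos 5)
Your proposal is correct and follows essentially the same route as the paper: the equivalence of (i)--(iii) via Lemma~\ref{lemma:union}, the identification of the two canonical sentences for (iv)$\Leftrightarrow$(v), the projection homomorphism for (iii)$\Rightarrow$(v), reading off a polymorphism from the Skolem assignment for the canonical sentence to get (vi), and Proposition~\ref{prop:pi2composition} for (vi)$\Rightarrow$(i). Your added remark on why non-degeneracy is needed (distinct constants so the universal variables are independent) is consistent with the paper's setup rather than a deviation.
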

\begin{proof}
  The first three items are equivalent by Lemma~\ref{lemma:union} (these implications have the same conclusion and equivalent premises).
  The fourth and fifth items are trivially equivalent since $\phi_{\mathscr{O}_{\cup\Omega},\mathcal{A}}$ and $\phi_{\Omega_{\text{tuples}},\mathcal{A}}$ are the same sentence.

  We show the implication from the third item to the fifth. By
  construction, $\phi_{\Omega_{\text{tuples}},\mathcal{A}}$ is $\Pi_2$
  and it suffices to show that there exists a winning strategy for
  $\exists$  against any adversary $\{t\}$ in
  $\Omega_{\text{tuples}}$. This is true by construction.
  Indeed, note that there exists a winning strategy for $\exists$ in
  the $(\mathcal{A},\phi_{\Omega_{\text{tuples}},\mathcal{A}})$-game
  against adversary  $\{t\}$ iff   there is a homomorphism from the
  $\sigma^{(m)}$-structure $\bigotimes_{t'\in \Omega_{\text{tuples}}} \mathfrak{A}_{\mu_{t'}}$
  to  the $\sigma^{(m)}$-structure $\mathfrak{A}_{\mu_t}$, where
  $\mu_t:[m]\to A$ is the map induced 
  naturally by $t$. The projection is such a homomorphism.
  
  The penultimate item implies the last one: instantiate the universal
  variables of $\phi_{\Omega_{\text{tuples}},\mathcal{A}}$ as given by
  the $m$-tuple $t$ and pick for $f_t$ the homomorphism from the product
  structure witnessing that $\exists$ has a winning strategy. 

  Finally, the last item implies the first one by
  Proposition~\ref{prop:pi2composition}. 
\end{proof}

\subsection{The unbounded case}
\label{sec:unbounded}
Let $n$ denote the number of elements of the structure
$\mathcal{A}$. Let $\mathscr{B}$ be an adversary from
$\Omega_{n\cdot m}$. We will denote by $\text{Proj}\mathscr{B}$ the set of
adversaries of  length $m$ induced by projecting over some arbitrary
choice of $m$ coordinates, one in each block of size $n$; that is
$1\leq i_1\leq n,n+1\leq i_2\leq 2\cdot n, \ldots, n\cdot (m-1)+ 1\leq
i_m\leq n \cdot m$. 
Of special concern to us are \emph{projective sequences of adversaries}
$\mathbf{\Omega}$ satisfying the following  for every $m\geq 1$,
\begin{equation*}
  \forall \mathscr{B} \in \Omega_{n\cdot m} \,\,
  \exists \mathscr{A} \in \Omega_{m} \,\,
  \bigwedge_{\widetilde{\mathscr{B}} \in \text{Proj}\mathscr{B}} 
  \widetilde{\mathscr{B}} \subseteq \mathscr{A} \,\,\,\,
  \text{($m$-\textbf{projectivity})}
\end{equation*}
As an example, consider the adversaries for collapsibility.
\begin{fact} 
  \label{fact:collapsible:adversary:is:projective}
  Let $B \subseteq A$ and $p\geq 0$.
  The sequence of adversaries $\mathbf\Upsilon_{p,B}$ 
  are projective.
\end{fact}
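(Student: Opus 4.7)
The plan is to unwind the definitions and, for each $\mathscr{B}\in\Upsilon_{nm,p,B}$, explicitly construct a dominating $\mathscr{A}\in\Upsilon_{m,p,B}$. Here $\mathscr{B}$ is a rectangular adversary $B_1\times\cdots\times B_{nm}$ with each factor in $\{A,\{x\}\}$ for some fixed $x\in B$ and with exactly $p$ factors equal to $A$; let $J\subseteq[nm]$ denote the set of those $A$-indices.

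First I would record how projection interacts with the block structure. Partition $[nm]$ into $m$ consecutive blocks $I_1,\ldots,I_m$ of size $n$ each, and set $K:=\{k\in[m]:J\cap I_k\neq\emptyset\}$. Since each element of $J$ lies in a unique block, $|K|\leq|J|=p$. A projection $\widetilde{\mathscr{B}}\in\text{Proj}\mathscr{B}$ picks one index $i_k\in I_k$ per block and equals the rectangular adversary $B_{i_1}\times\cdots\times B_{i_m}$; its $k$-th factor is $A$ precisely when $i_k\in J$, which forces $k\in K$, and is $\{x\}$ otherwise.

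Next I would choose any $K'\subseteq[m]$ with $K\subseteq K'$ and $|K'|=p$ (this is possible when $m\geq p$; the edge regime $m<p$ is either vacuous or trivial depending on the convention for $\Upsilon_{m,p,B}$) and take $\mathscr{A}$ to be the length-$m$ rectangular adversary with $A$ at the coordinates in $K'$ and $\{x\}$ elsewhere. By construction $\mathscr{A}\in\Upsilon_{m,p,x}\subseteq\Upsilon_{m,p,B}$, and factor-by-factor one checks $\widetilde{\mathscr{B}}\subseteq\mathscr{A}$: if $k\in K'$, then $\mathscr{A}$'s $k$-th factor is $A$ and dominates anything; if $k\notin K'$, then $k\notin K$, so the projection's $k$-th factor is $\{x\}$, matching $\mathscr{A}$'s.

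The argument is essentially bookkeeping on how the $p$ many $A$-factors of $\mathscr{B}$ distribute across the $m$ blocks, with the slack to enlarge $K$ to $K'$ absorbing the differences between the various projections. No step is really obstructive; the only mildly delicate point is the edge regime $p>m$, which is disposed of by whichever convention is adopted for $\Upsilon_{m,p,B}$ in that range.
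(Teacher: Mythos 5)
Your proof is correct and matches the paper's (implicit) argument: the paper states this as a Fact without a formal proof, offering only the worked example with $A=\{0,1,2\}$, $p=2$, $m=4$, and your bookkeeping on how the $p$ many $A$-factors distribute over the $m$ blocks — including the padding of $K$ up to a set $K'$ of size exactly $p$ when several $A$-coordinates fall in the same block — is precisely the general form of that example. Your aside about the degenerate regime $p>m$ is a reasonable caveat that the paper likewise leaves unaddressed.
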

\begin{example}
  For a concrete illustration consider $A=\{0,1,2\}$ (thus $n=3$).
  We illustrate the fact that $\mathbf\Upsilon_{p=2,B=\{0\}}$ is
  projective for $m=4$ and some adversary $\mathscr{B}\in \Omega_{n\cdot
      m}=\Upsilon_{p=2,B=\{0\},3\cdot 4=12}$. Adversaries are depicted
    vertically with horizontal lines separating the blocks.
  $${\scriptsize \begin{array}{c|cccc|c}
    \mathscr{B}\in \Omega_{n\cdot
      m}&\multicolumn{4}{c|}{\text{Proj}\mathscr{B}}& \mathscr{A} \in
    \Omega_{m}\\
    \hline
    \hline
    A &A          &A          &      &\xcancel{A} &  \\
    0 &\xcancel{0}&\xcancel{0}&\ldots&\xcancel{0} & A\\
    0 &\xcancel{0}&\xcancel{0}&      &0           &  \\
    \hline
    0 &0          &0          &      &\xcancel{0}&  \\
    0 &\xcancel{0}&\xcancel{0}&\ldots&\xcancel{0}& 0\\
    0 &\xcancel{0}&\xcancel{0}&      &0          &  \\
    \hline
    0 &0          &0          &      &\xcancel{0}&  \\
    0 &\xcancel{0}&\xcancel{0}&\ldots&\xcancel{0}& 0\\
    0 &\xcancel{0}&\xcancel{0}&      &0          &  \\
    \hline
    0 &0          &\xcancel{0}&      &\xcancel{0}&  \\
    A &\xcancel{A}&A          &\ldots&\xcancel{A}& A\\
    0 &\xcancel{0}&\xcancel{0}&      &0          &  \\
  \end{array}}$$
The adversary $\mathscr{A}$ dominates any adversary obtained by
projecting the original larger adversary $\mathscr{B}$ by keeping a
single position per block. 
\end{example}
We could actually consider \mbox{w.l.o.g.} sequences of \emph{singleton}
adversaries.
\begin{fact}
  If $\mathbf{\Omega}$ is projective then so is the sequence
  $\bigl(\bigcup_{\mathscr{O}\in\Omega_m} \mathscr{O}\bigr)_{m \in  \integerset}$.
\end{fact}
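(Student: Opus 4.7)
The plan is to unwind the definitions. Write $\mathscr{O}^*_m := \bigcup_{\mathscr{O}\in\Omega_m}\mathscr{O}$, and interpret the new sequence as $\mathbf{\Omega}' = (\Omega'_m)_{m\in\mathbb{N}}$ with $\Omega'_m := \{\mathscr{O}^*_m\}$ a singleton set. Since $\Omega'_{nm}$ and $\Omega'_m$ each contain exactly one adversary, checking $m$-projectivity of $\mathbf{\Omega}'$ reduces to showing that every $\widetilde{\mathscr{B}'} \in \mathrm{Proj}\,\mathscr{O}^*_{nm}$ is contained in $\mathscr{O}^*_m$.

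I would proceed tuple by tuple. Fix a choice of coordinates $i_1, \ldots, i_m$ with $i_j$ in the $j$th block of size $n$, and fix a tuple $\widetilde{t} = (b_{i_1},\ldots,b_{i_m})$ belonging to the corresponding projection of $\mathscr{O}^*_{nm}$. By construction, $\widetilde{t}$ extends to some full tuple $t = (b_1,\ldots,b_{nm}) \in \mathscr{O}^*_{nm}$, and by definition of the union, there is some original adversary $\mathscr{B} \in \Omega_{nm}$ with $t \in \mathscr{B}$. Applying the $m$-projectivity of $\mathbf{\Omega}$ to this $\mathscr{B}$ yields an $\mathscr{A} \in \Omega_m$ that dominates every projection of $\mathscr{B}$. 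In particular it dominates the projection on coordinates $i_1,\ldots,i_m$, which contains $\widetilde{t}$; hence $\widetilde{t} \in \mathscr{A} \subseteq \mathscr{O}^*_m$.

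Since the tuple and coordinates were arbitrary, this shows $\widetilde{\mathscr{B}'} \subseteq \mathscr{O}^*_m$ for every $\widetilde{\mathscr{B}'} \in \mathrm{Proj}\,\mathscr{O}^*_{nm}$, which is exactly the $m$-projectivity condition for $\mathbf{\Omega}'$. There is no real obstacle here: the argument is a careful bookkeeping exercise, and the only subtlety worth flagging is that the witness $\mathscr{A} \in \Omega_m$ provided by projectivity of $\mathbf{\Omega}$ can depend on the original $\mathscr{B}$ (hence on the tuple $\widetilde{t}$), but this is harmless because we immediately absorb $\mathscr{A}$ into the single adversary $\mathscr{O}^*_m$ of the collapsed sequence.
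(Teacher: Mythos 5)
Your proof is correct. The paper states this as a Fact without providing any proof, and your argument --- projection commutes with union, so any tuple of a projected union lies in some projected $\mathscr{B}\in\Omega_{n\cdot m}$, whose dominating $\mathscr{A}\in\Omega_m$ is absorbed into $\bigcup_{\mathscr{O}\in\Omega_m}\mathscr{O}$ --- is precisely the routine verification the authors leave implicit, including the (harmless) dependence of the witness $\mathscr{A}$ on the tuple.
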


A \emph{canonical sentence for composability for arbitrary pH-sentences} with $m$ universal variables
may be constructed similarly to the canonical sentence for the $\Pi_2$ case, except that it will have $m.n$ universal variables, which we view as $m$ blocks of $n$ variables, where $n$ is the number of elements of the structure $\mathcal{A}$.
Let $\mathscr{O}$ be some adversary of length $m$.
Let $\sigma^{(n\cdot m)}$ be the signature $\sigma$ expanded with a sequence of $n\cdot m$ constants $c_{1,1},\ldots, c_{n,1}, c_{1,2} \ldots, c_{n,2}, \ldots c_{1,m} \ldots, c_{n,m}$. We say that a map $\mu$ from $[n]\times [m]$ to $A$ is \emph{consistent} with $\mathscr{O}$ iff for every $(i_1,i_2,\ldots,i_m)$ in $[n]^m$, the tuple $(\mu(i_1,1),\mu(i_2,2),\ldots,\mu(i_m,m))$ belongs to the adversary $\mathscr{O}$. We write $A^{[n\cdot m]}_{\restrict \mathscr{O}}$ for the set of such consistent maps.
For some set $\Omega_m$ of adversaries of length $m$, we consider the following $\sigma^{(n\cdot m)}$-structure: 
$$\bigotimes_{\mathscr{O}\in \Omega_m} \bigotimes_{\mu \in A^{[n\cdot m]}_{\restrict \mathscr{O}}} \mathfrak{A}_{\mathscr{O},\mu}$$
where the $\sigma^{(n\cdot m)}$-structure $\mathfrak{A}_{\mathscr{O},\mu}$ denotes the expansion of $\mathcal{A}$ by $n\cdot m$ constants as given by the map $\mu$.
Let $\phi_{n,\Omega_m,\mathcal{A}}$ be the $\Pi_2$-pH-sentence created from the canonical query of the $\sigma$-reduct of this $\sigma^{(n\cdot m)}$ product structure with the $n\cdot m$ constants $c_{ij}$ becoming variables $w_{ij}$, universally quantified outermost.
As for the canonical sentence of the $\Pi_2$-case, this sentence is
not well defined if constants are not pairwise distinct, which occurs precisely
for degenerate adversaries. 
\begin{lemma}
  \label{lem:CanonicalSentenceImpliesReactiveComposability}
  Let $\Omega_m$ be a set of adversaries of length $m$ that is not degenerate. Let $\mathcal{A}$ be a structure of size $n$.
  If $\mathcal{A}$ models $\phi_{n,\Omega_m,\mathcal{A}}$ then the full adversary $A^m$ is reactively composable from $\Omega_m$.
  That is,
  $\mathcal{A} \models \phi_{n,\Omega_m,\mathcal{A}} \quad \implies \quad A^m \reactivelycomposable \Omega_m$
\end{lemma}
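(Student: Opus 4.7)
The plan is to extract, from the truth of the $\Pi_2$-sentence $\phi_{n,\Omega_m,\mathcal{A}}$ under a carefully chosen instantiation of its universal variables, a single polymorphism $f$ of $\mathcal{A}$ that will serve as the reactive-composition operation, with the partial functions $g^j_i$ read off directly from the indexing maps $\mu$ of the factors of $\mathfrak{P}:=\bigotimes_{\mathscr{O}\in\Omega_m}\bigotimes_{\mu\in A^{[n\cdot m]}_{\restrict\mathscr{O}}}\mathfrak{A}_{\mathscr{O},\mu}$.

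First I would fix an enumeration $A=\{\alpha_1,\ldots,\alpha_n\}$ with inverse bijection $\iota\colon A\to[n]$. Non-degeneracy of $\Omega_m$ is precisely what guarantees the $nm$ universal variables $w_{i,j}$ are distinct, so one may evaluate $\mathcal{A}\models\phi_{n,\Omega_m,\mathcal{A}}$ at the \emph{diagonal} instantiation $w_{i,j}\mapsto\alpha_i$. This supplies a homomorphism $h$ from the $\sigma$-reduct of $\mathfrak{P}$ to $\mathcal{A}$ sending each constant-element $c_{i,j}=(\mu(i,j))_{(\mathscr{O},\mu)}$ to $\alpha_i$. Since the $\sigma$-reduct of $\mathfrak{P}$ is the $K$-th direct power of $\mathcal{A}$ (with $K$ the total number of factors), $h$ is automatically a $K$-ary polymorphism of $\mathcal{A}$, and I set $f:=h$. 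The key property to retain is that $f\bigl((\mu(i,j))_{(\mathscr{O},\mu)}\bigr)=\alpha_i$ for every $i\in[n]$ and $j\in[m]$.

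For each factor $(\mathscr{O},\mu)$ and each coordinate $i\in[m]$ I then define $g^{(\mathscr{O},\mu)}_i(a_1,\ldots,a_i):=\mu(\iota(a_i),i)$; this depends only on its last argument, as foreshadowed in the Remark following the definition of reactive composition. Given any target $(a_1,\ldots,a_m)\in A^m$, the backward tuple $\bigl(\mu(\iota(a_1),1),\ldots,\mu(\iota(a_m),m)\bigr)$ lies in $\mathscr{O}$ by the very definition of ``$\mu$ consistent with $\mathscr{O}$'' (take the selection $i_\ell=\iota(a_\ell)$), and applying $f$ coordinate-wise yields at position $i$ the value $f\bigl((\mu(\iota(a_i),i))_{(\mathscr{O},\mu)}\bigr)=\alpha_{\iota(a_i)}=a_i$ by the key property above. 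Both defining conditions of $A^m\reactivelycomposable\Omega_m$ are thus satisfied.

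I do not expect a real obstacle: once one sees that the diagonal instantiation is the one that aligns the polymorphism extracted from the canonical sentence with the row-selection structure of consistent maps, the rest is bookkeeping. Conceptually this is the lift, from the $\Pi_2$ case of Proposition~\ref{thm:characteringPi2}, of the $m$-variable generative construction to an $nm$-variable reactive construction whose $m$ blocks of $n$ variables enumerate one potential witness per element of $A$.
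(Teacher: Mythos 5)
Your proposal is correct and follows essentially the same route as the paper's proof: instantiate the universal variables diagonally ($w_{i,j}\mapsto\alpha_i$), read the existential witness as a polymorphism of $\mathcal{A}$ (one coordinate per pair $(\mathscr{O},\mu)$), and define each $g^{(\mathscr{O},\mu)}_i$ from $\mu$ via the last argument only, with consistency of $\mu$ with $\mathscr{O}$ giving membership of the backward tuple in $\mathscr{O}$. You merely spell out the bookkeeping (the key property $f\bigl((\mu(i,j))_{(\mathscr{O},\mu)}\bigr)=\alpha_i$ and the coordinate-wise verification) that the paper leaves implicit.
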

\begin{proof}
  We let each block of $n$ universal variables of the canonical
  sentence $\phi_{n,\Omega_m,\mathcal{A}}$ enumerate the elements of $A$.
  That is, given an enumeration $a_1,a_2,\ldots,a_n$ of $A$, we set $w_{i,j}=a_i$ for every $j$ in $[m]$ and every $i$ in $[n]$.  

  The assignment to the existential variables provides us with a
  $k$-ary polymorphism (the sentence being built as the conjunctive
  query of a product of $k$ copies of $\mathcal{A}$) together with the
  desired partial maps. A coordinate $r$ in $[k]$ corresponds to a
  choice of some adversary $\mathscr{O}$ of $\Omega_m$ and some map
  $\mu_r$ from $[n]\times[m]$ to $A$, consistent with this adversary. The partial map $g^r_\ell:A^\ell\to A$ with $\ell$ in $[m]$ (and $r$ in $[k]$) is given by $\mu_r$ as follows:
  $g^r_\ell(a_{i_1},\ldots,a_{i_\ell})$ depends only on the last coordinate $a_{i_\ell}$ and takes value $\mu(i,\ell)$ if $a_{i_\ell}=a_i$.
  By construction of the sentence and the property of consistency of such $\mu_r$ with the adversary $\mathscr{O}$, these partial functions satisfy the properties as given in the definition of reactive composition.
\end{proof}
\begin{lemma} 
  Let $\mathbf{\Omega}$ be a sequence of sets of adversaries that has
  the $m$-projectivity property for some $m\geq 1$ such that
  $\Omega_{n\cdot m}$ is not degenerate. The following holds.
  \label{lem:AdversariesWinnableOnCanonicalSentence}
  \begin{romannum}
  \item 
    $\mathcal{A}\models \psi_{\restrict \Omega_{\mathbf{n\cdot m}}}, \text{ where } \psi={\phi_{n,\Omega_\mathbf{m},\mathcal{A}}}$
  \item If for every $\Pi_2$-sentence $\psi$ with $m.n$ universal variables, it holds that $\mathcal{A}\models \psi_{\restrict \Omega_{\mathbf{m.n}}}$ implies $\mathcal{A}\models \psi$, then
    $\mathcal{A}\models \phi_{n,\Omega_\mathbf{m},\mathcal{A}}$.
  \end{romannum}
\end{lemma}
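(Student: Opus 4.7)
The plan is to establish part (i) by explicitly constructing a winning strategy for $\exists$ that leverages $m$-projectivity to identify a suitable factor of the product structure used to define $\phi_{n,\Omega_m,\mathcal{A}}$; part (ii) then follows immediately by applying the hypothesis to this $\Pi_2$-sentence.

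For part (i), write $\psi := \phi_{n,\Omega_m,\mathcal{A}}$ and fix an adversary $\mathscr{B}\in\Omega_{n\cdot m}$. Since $\psi$ is $\Pi_2$, by the principle of union (Lemma~\ref{lemma:union}) it suffices to exhibit a winning strategy for $\exists$ in the $(\mathcal{A},\psi)$-game against every singleton adversary $\{t\}$ with $t\in\mathscr{B}$. Write $t=(a_{1,1},\ldots,a_{n,m})$ and interpret it as a map $\mu:[n]\times[m]\to A$ via $\mu(i,j):=a_{i,j}$. For any choice $(i_1,\ldots,i_m)\in[n]^m$ of one coordinate per block, the tuple $(\mu(i_1,1),\ldots,\mu(i_m,m))$ is a projection of $t$ and hence belongs to some $\widetilde{\mathscr{B}}\in\text{Proj}\mathscr{B}$; by $m$-projectivity there is a single $\mathscr{A}'\in\Omega_m$ dominating every such $\widetilde{\mathscr{B}}$, so this tuple lies in $\mathscr{A}'$. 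Therefore $\mu\in A^{[n\cdot m]}_{\restrict\mathscr{A}'}$, and the pair $(\mathscr{A}',\mu)$ indexes a genuine factor $\mathfrak{A}_{\mathscr{A}',\mu}$ of the product used to build $\psi$. The coordinate projection of the full product onto this factor is a homomorphism of $\sigma^{(n\cdot m)}$-structures, and in $\mathfrak{A}_{\mathscr{A}',\mu}$ the constant $c_{i,j}$ is interpreted as $\mu(i,j)=a_{i,j}$. Passing to $\sigma$-reducts and viewing the former $c_{i,j}$ as the universally quantified $w_{i,j}$, this projection gives precisely the homomorphism that witnesses an $\exists$-winning strategy against $\{t\}$; assembling these over all $t\in\mathscr{B}$ via the principle of union yields $\mathcal{A}\models\psi_{\restrict\mathscr{B}}$, and then over all $\mathscr{B}\in\Omega_{n\cdot m}$ yields (i).

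Part (ii) is then immediate: $\psi=\phi_{n,\Omega_m,\mathcal{A}}$ is a $\Pi_2$-pH-sentence with $n\cdot m$ universal variables, so by (i) we have $\mathcal{A}\models\psi_{\restrict\Omega_{n\cdot m}}$, and the stated hypothesis applied to $\psi$ delivers $\mathcal{A}\models\psi$.

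The main difficulty is not conceptual but bookkeeping: the three layers of indexing must be tracked carefully, namely the block structure partitioning the $n\cdot m$ universal variables into $m$ blocks of $n$, the choice of one coordinate per block defining $\text{Proj}\mathscr{B}$, and the double-indexing $(\mathscr{O},\mu)$ of the product factors. One should also invoke non-degeneracy of $\Omega_{n\cdot m}$ at the outset so that $\psi$ is even well-defined as a sentence, and keep the constraint language $\mathcal{A}$ notationally separate from the projective adversary $\mathscr{A}'\in\Omega_m$.
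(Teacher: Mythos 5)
Your proof is correct and follows essentially the same route as the paper's: fix an adversary in $\Omega_{n\cdot m}$, use $m$-projectivity to find a single dominating adversary $\mathscr{A}'\in\Omega_m$, observe that each tuple $t$ induces a map $\mu_t$ consistent with $\mathscr{A}'$, and take the coordinate projection onto the factor $\mathfrak{A}_{\mathscr{A}',\mu_t}$ as the witnessing homomorphism, with (ii) following immediately. The only cosmetic difference is that you route the reduction to singleton adversaries explicitly through the principle of union, which the paper leaves implicit in the $\Pi_2$ setting.
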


\begin{proof}
  The second statement is a direct consequence of the first one.
  The proof of the first statement generalises an argument used in the proof of Proposition~\ref{thm:characteringPi2}.
  Consider any adversary $\mathscr{O}$ in $\Omega_{n\cdot m}$. For convenience, we name the positions of this adversary in a similar fashion to the universal variables of the sentence, namely by a pair $(i,j)$ in $[n]\times [m]$. By projectivity, there exists an adversary $\mathscr{O}'$ in $\Omega_{m}$ which dominates any adversary $\tilde{\mathscr{O}}$ in $\text{Proj}\mathscr{O}$ (obtained by projecting over an arbitrary choice of one position in each of the $m$ blocks of size $n$).
  In the product structure underlying the formula $\phi_{n,\Omega_m,\mathcal{A}}$, we consider the following structure:
  $$\bigotimes_{\mu \in A^{[n\cdot m]}_{\upharpoonright \mathscr{O}'}} \mathfrak{A}_{\mathscr{O}',\mu}$$ 

  An instantiation of the universal variables of $\phi_{n,\Omega_m,\mathcal{A}}$ according to some tuple $t$ from the adversary $\mathscr{O}$ corresponds naturally to a map $\mu_t$ from $[n]\times [m]$ to $A$. Observe that our choice of $\mathscr{O}'$ ensures that this map $\mu_t$ is consistent with $\mathscr{O}'$.
  An instantiation of the universal variables by $\mu_t$ induces a $\sigma^{(n\cdot m)}$-structure $\mathfrak{A}_{\mu_t}$ and a winning strategy for $\exists$ amounts to a homomorphism from the product $\sigma^{(n\cdot m)}$-structure underlying the sentence to this $\mathfrak{A}_{\mu_t}$. Since the component $\mathfrak{A}_{\mathscr{O}',\mu_t}$ of this product structure is isomorphic to $\mathfrak{A}_{\mu_t}$, we may take for a homomorphism the corresponding projection. This shows that     $\mathcal{A}\models \psi_{\restrict \Omega_{\mathbf{n\cdot m}}}$ where $\psi={\phi_{n,\Omega_\mathbf{m},\mathcal{A}}}$.
\end{proof}

\begin{theorem}
  \label{theo:Pi2ToGeneral}
  Let $\mathbf{\Omega}$ be a sequence of sets of adversaries that has
  the $m$-projectivity property for some $m\geq 1$ such that
  $\Omega_{n\cdot m}$ is not degenerate.
  The following chain of implications holds
  $$
  \ref{Pi2ToGeneral:Pi2GoodApproximation}
  \implies
  \ref{Pi2ToGeneral:BigPi2Canonicalsentence}
  \implies 
  \ref{Pi2ToGeneral:ReactiveComposition}
  \implies 
  \ref{Pi2ToGeneral:GoodApproximation}
  $$
  where,
  \begin{romannum}
  \item For every $\Pi_2$-pH-sentence $\psi$ with $m.n$ universal variables, $\mathcal{A}\models \psi_{\restrict \Omega_{m.n}}$ implies $\mathcal{A}\models \psi$.\label{Pi2ToGeneral:Pi2GoodApproximation}
  \item $\mathcal{A}\models \phi_{n,\Omega_m,\mathcal{A}}$.\label{Pi2ToGeneral:BigPi2Canonicalsentence}
  \item $A^m \reactivelycomposable \Omega_m$.\label{Pi2ToGeneral:ReactiveComposition}
  \item For every pH-sentence $\psi$ with $m$ universal variables, $\mathcal{A}\models \psi_{\restrict \Omega_{m}}$ implies $\mathcal{A}\models \psi$.\label{Pi2ToGeneral:GoodApproximation}
  \end{romannum}
\end{theorem}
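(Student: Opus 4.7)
The plan is to observe that the entire chain of implications is already essentially packaged by the lemmas that immediately precede the theorem, so the proof is a matter of threading them together.

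For \ref{Pi2ToGeneral:Pi2GoodApproximation}~$\implies$~\ref{Pi2ToGeneral:BigPi2Canonicalsentence}, I would apply Lemma~\ref{lem:AdversariesWinnableOnCanonicalSentence}(ii) directly. The canonical sentence $\phi_{n,\Omega_m,\mathcal{A}}$ is by construction a $\Pi_2$-pH-sentence with exactly $n\cdot m$ universal variables, and Lemma~\ref{lem:AdversariesWinnableOnCanonicalSentence}(i) already established that $\mathcal{A}\models (\phi_{n,\Omega_m,\mathcal{A}})_{\restrict\Omega_{n\cdot m}}$; the hypothesis \ref{Pi2ToGeneral:Pi2GoodApproximation} then upgrades this to full satisfaction. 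Non-degeneracy of $\Omega_{n\cdot m}$ is required precisely to ensure that $\phi_{n,\Omega_m,\mathcal{A}}$ is well-defined, and we have this as a standing assumption in the theorem.

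For \ref{Pi2ToGeneral:BigPi2Canonicalsentence}~$\implies$~\ref{Pi2ToGeneral:ReactiveComposition}, this is exactly Lemma~\ref{lem:CanonicalSentenceImpliesReactiveComposability}, whose hypothesis matches ours (non-degeneracy of $\Omega_m$ follows from non-degeneracy of $\Omega_{n\cdot m}$, since projection of a non-degenerate adversary remains non-degenerate on the relevant coordinates, but since only $\phi_{n,\Omega_m,\mathcal{A}}$ needs to be well-defined here we in fact need this only at level $n\cdot m$). The lemma produces the $k$-ary polymorphism and the partial functions $g^j_i$ depending only on the last argument that witness $A^m \reactivelycomposable \Omega_m$.

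For \ref{Pi2ToGeneral:ReactiveComposition}~$\implies$~\ref{Pi2ToGeneral:GoodApproximation}, this is an immediate application of Theorem~\ref{thm:hubieReactiveComposition}: given any pH-sentence $\psi$ with $m$ universal variables satisfying $\mathcal{A}\models \psi_{\restrict\Omega_m}$, and given that $A^m \reactivelycomposable \Omega_m$, the theorem yields $\mathcal{A}\models \psi_{\restrict A^m}$, which (since $A^m$ is the full adversary) is the same as $\mathcal{A}\models \psi$.

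Since each of the three implications is a direct quotation of an already-proved lemma or theorem, I do not anticipate an obstacle; the only small subtlety to record is the bookkeeping that the non-degeneracy assumption at length $n\cdot m$ is exactly what is needed for the canonical sentence in the first two steps, whereas the third step is purely semantic and needs no such hypothesis. The proof itself can therefore be written in three short lines, one per implication.
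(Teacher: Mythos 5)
Your proposal is correct and follows exactly the same route as the paper: the first implication is Lemma~\ref{lem:AdversariesWinnableOnCanonicalSentence}(ii) (the step where projectivity is used), the second is Lemma~\ref{lem:CanonicalSentenceImpliesReactiveComposability}, and the third is Theorem~\ref{thm:hubieReactiveComposition}. The paper's proof is precisely this three-line assembly, so there is nothing to add.
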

\begin{proof}
  The first implication holds by the previous lemma (second item of Lemma~\ref{lem:AdversariesWinnableOnCanonicalSentence}, this is the step where we use projectivity).
  The second implication is Lemma~\ref{lem:CanonicalSentenceImpliesReactiveComposability}. The last implication is Theorem~\ref{thm:hubieReactiveComposition}.
\end{proof}
Thus, in the projective case, when an adversary is good enough in the
$\Pi_2$-case, it is good enough in general. This can be characterised
logically via canonical sentences or ``algebraically'' in terms of
reactive composition or the weaker and more usual composition
property (see \ref{abstracto:algebraic:pi2}  below). 
  \begin{theorem}[\textbf{In abstracto}]\label{MainResult:InAbstracto}
    Let $\mathbf{\Omega}=\bigl(\Omega_m\bigr)_{m \in \integerset}$ be a projective sequence of adversaries,
    none of which are degenerate. 
    The following are equivalent.
  \begin{romannum}
  \item For every $m\geq 1$, for every pH-sentence $\psi$ with $m$
    universal variables, $\mathcal{A}\models \psi_{\restrict
      \Omega_{m}}$ implies $\mathcal{A}\models \psi$. 
    \label{abstracto:logical:general}
  \item For every $m\geq 1$, for every $\Pi_2$-pH-sentence $\psi$ with $m$ universal
    variables, $\mathcal{A}\models \psi_{\restrict \Omega_{m}}$
    implies $\mathcal{A}\models \psi$.
    \label{abstracto:logical:pi2}
  \item For every $m\geq 1$, $\mathcal{A}\models
    \phi_{n,\Omega_m,\mathcal{A}}$.
    \label{abstracto:canonical:general}
  \item For every $m\geq 1$, $\mathcal{A}\models
    \phi_{\mathscr{O}_{\cup\Omega},\mathcal{A}}$.
    \label{abstracto:canonical:pi2}
  \item For every $m\geq 1$, $A^m \reactivelycomposable \Omega_m$.
    \label{abstracto:algebraic:general}
  \item For every $m\geq 1$, $\Omega_m$ generates $A^m$.
    \label{abstracto:algebraic:pi2}
  \end{romannum}
\end{theorem}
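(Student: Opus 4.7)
The plan is to close the cycle of six equivalences by assembling two ingredients already established in this section: Proposition~\ref{thm:characteringPi2} for the $\Pi_2$ side, and Theorem~\ref{theo:Pi2ToGeneral} for the passage from $\Pi_2$ to arbitrary pH. The two hypotheses---that $\mathbf{\Omega}$ is projective and that no $\Omega_m$ is degenerate---are exactly what is needed to apply these results at every arity. The trivial direction \ref{abstracto:logical:general} $\Rightarrow$ \ref{abstracto:logical:pi2} is of course free, since every $\Pi_2$-pH sentence is a pH sentence.

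Next I would establish the equivalence of the three ``$\Pi_2$-side'' items \ref{abstracto:logical:pi2}, \ref{abstracto:canonical:pi2} and \ref{abstracto:algebraic:pi2} by applying Proposition~\ref{thm:characteringPi2} to $\Omega_m$ for each fixed $m$ and then quantifying over all $m\geq 1$. Non-degeneracy of $\Omega_m$ is the only hypothesis of that proposition and holds by assumption, so this step is essentially a repackaging of what has already been proved in the $\Pi_2$ subsection.

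Finally I would close the cycle through the ``general'' items by invoking Theorem~\ref{theo:Pi2ToGeneral} at each $m$. Its premise \ref{Pi2ToGeneral:Pi2GoodApproximation} is a $\Pi_2$-approximation statement for sentences with $n\cdot m$ universal variables, which is exactly what \ref{abstracto:logical:pi2} supplies when instantiated at arity $n\cdot m$; its three conclusions then hand us, successively, \ref{abstracto:canonical:general}, \ref{abstracto:algebraic:general} and \ref{abstracto:logical:general} at arity $m$. Combined with the previous two steps, this yields the full six-way equivalence.

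There is no substantive obstacle here; the theorem is essentially an assembly of the two prior results. The only bookkeeping subtlety is that Theorem~\ref{theo:Pi2ToGeneral} must be applied with arity $n\cdot m$ in its $\Pi_2$-premise (not $m$), which is why \ref{abstracto:logical:pi2} is required uniformly ``for every $m\geq 1$'' rather than at a single arity. Projectivity of $\mathbf{\Omega}$ enters precisely at this lift, in the first implication of Theorem~\ref{theo:Pi2ToGeneral}, while non-degeneracy is used both to form the canonical sentences $\phi_{\mathscr{O}_{\cup\Omega},\mathcal{A}}$ and $\phi_{n,\Omega_m,\mathcal{A}}$ and to apply the lemmas underpinning that theorem.
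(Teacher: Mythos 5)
Your proposal is correct and follows essentially the same route as the paper's proof: Proposition~\ref{thm:characteringPi2} for the fixed-$m$ equivalence of \ref{abstracto:logical:pi2}, \ref{abstracto:canonical:pi2} and \ref{abstracto:algebraic:pi2}, the chain of Theorem~\ref{theo:Pi2ToGeneral} (applied with the $\Pi_2$-premise at arity $n\cdot m$, which is why the parameter is not preserved and \ref{abstracto:logical:pi2} must hold for all $m$) to obtain \ref{abstracto:logical:pi2} $\Rightarrow$ \ref{abstracto:canonical:general} $\Rightarrow$ \ref{abstracto:algebraic:general} $\Rightarrow$ \ref{abstracto:logical:general}, and the trivial implication \ref{abstracto:logical:general} $\Rightarrow$ \ref{abstracto:logical:pi2} to close the cycle. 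You also correctly locate where projectivity and non-degeneracy are used, matching the paper's remarks.
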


\begin{proof}  
  Propositions~\ref{thm:characteringPi2} establishes the equivalence
  between  \ref{abstracto:logical:pi2}, \ref{abstracto:canonical:pi2} and
  \ref{abstracto:algebraic:pi2} for fixed values of $m$ (numbered
  there as \ref{pi2abstracto:logical:pi2}, \ref{pi2abstracto:canonical:pi2} and
  \ref{pi2abstracto:algebraic:pi2}, respectively).
  
  To lift these relatively trivial equivalences to the general case,
  the principle of our current proof no longer preserves the parameter $m$. The chain of
  implications of Theorem~\ref{theo:Pi2ToGeneral} translates here, once
  the parameter is universally quantified, to the chain of implications
  $$
  \ref{abstracto:logical:pi2}
  \implies 
  \ref{abstracto:canonical:general} 
  \implies 
  \ref{abstracto:algebraic:general}
  \implies 
  \ref{abstracto:logical:general}
  $$
  
  The fact that \ref{abstracto:logical:general} implies
  \ref{abstracto:logical:pi2} is trivial\footnotemark{}, which
  concludes the proof. 
  
  \footnotetext{We note in passing and for purely pedagogical reason that 
  the implication~\ref{abstracto:algebraic:general} to
  \ref{abstracto:algebraic:pi2} is also trivial, while the natural implication
  \ref{abstracto:canonical:general} to \ref{abstracto:canonical:pi2}
  will appear as an evidence to the reader once the definition of the canonical
  sentences is digested.}

\end{proof}
\begin{remark}
  The above equivalences can be read along two dimensions:
  \begin{center}
    \begin{tabular}[h]{lc|c}
      & general        & $\Pi_2$       \\
      \hline
      logical interpolation &
      \ref{abstracto:logical:general}&~\ref{abstracto:logical:pi2}\\ 
      \hline
      canonical sentences &
      \ref{abstracto:canonical:general}&~\ref{abstracto:canonical:pi2}\\ 
      \hline
      algebraic interpolation & 
      \ref{abstracto:algebraic:general}&
      \ref{abstracto:algebraic:pi2}\\ 
    \end{tabular}
  \end{center}
\end{remark}
Chen's original definitions of collapsibility and switchability correspond with item $(i)$, while the definitions given in the introduction correspond with item $(vi)$. For example, it is the formulation $(i)$ that provides Chen's original proof that switchability yields a QCSP in NP (Theorem 7.11 in \cite{AU-Chen-PGP}). In that same paper, the property of switchability as defined in the introduction is only shown to yield that the $m$-alternation-QCSP (allow only inputs in $\Pi_m$ prenex form, where $m$ is fixed) is in NP (Proposition 3.3 in \cite{AU-Chen-PGP}).
Let $\CSP_c(\mathcal{A})$ and $\QCSP_c(\mathcal{A})$ be the versions of $\CSP(\mathcal{A})$ and $\QCSP(\mathcal{A})$, respectively, in which constants naming the elements of $\mathcal{A}$ may appear in instances.
The following ostensibly generalises Theorem 7.11 \cite{AU-Chen-PGP} to effective and ``projective'' PGP, though we now know from \cite{ZhukGap2015} via Theorem \ref{MainResult:InAbstracto} that switchability explains all finite-domain algebra PGP.
\begin{corollary}\label{cor:EffectiveProjectivePGPImpliesQCSPInNP}
  Let $\mathcal{A}$ be a constraint language.  Let $\mathbf{\Omega}$ be a
  sequence of non degenerate adversaries that is effective, projective and
  polynomially bounded such that $\Omega_m$ generates $A^m$ for every
  $m\geq 1$.
  
  Let $\mathcal{A}'$ be the constraint language $\mathcal{A}$, possibly expanded
  with constants naming elements, at least one for each element that occurs in $\mathbf{\Omega}$.
  The problem $\QCSP(\mathcal{A})$ reduces in polynomial time to
  $\CSP(\mathcal{A}')$.
  In particular, if $\mathcal{A}$ has all constants, the problem $\QCSP_c(\mathcal{A})$ reduces in polynomial time to
  $\CSP_c(\mathcal{A})$.
\end{corollary}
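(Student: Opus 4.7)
The plan is to use the equivalences of Theorem~\ref{MainResult:InAbstracto} to turn the QCSP instance $\phi$ into a polynomial-size CSP instance $\psi$ over $\mathcal{A}'$. Given a pH-sentence $\phi$ with $m$ universal variables, the hypotheses together with Theorem~\ref{MainResult:InAbstracto} ensure that $\mathcal{A}\models \phi$ iff $\mathcal{A}\models\phi_{\restrict\Omega_m}$, i.e.\ iff for every adversary $\mathscr{B}\in\Omega_m$ the existential player has a winning strategy against $\mathscr{B}$. Effectiveness of $\mathbf{\Omega}$ lets us compute $\Omega_m$ in time polynomial in $\mathrm{width}(\Omega_m)$, which itself is bounded by $p(m)$.

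For each adversary $\mathscr{B}\in\Omega_m$, I would encode ``there is a winning Skolem strategy against $\mathscr{B}$'' as a CSP instance as follows. For every existential variable $x$ of $\phi$, with preceding universal variables $Y_x$, introduce a fresh CSP variable $y^{\mathscr{B}}_{x,\rho}$ for each assignment $\rho\colon Y_x\to A$ that arises as the restriction to $Y_x$ of some tuple of $\mathscr{B}$. Then, for every tuple $t\in\mathscr{B}$, add a copy of the quantifier-free part of $\phi$ in which each universal variable $x_i$ is replaced by the constant $t_i\in A$ (naming an element occurring in $\Omega_m$, hence available in $\mathcal{A}'$) and each existential variable $x$ is replaced by $y^{\mathscr{B}}_{x,t\restrict Y_x}$. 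Conjoining these over all tuples of $\mathscr{B}$ and then over all $\mathscr{B}\in\Omega_m$ (with disjoint $y$-variables for distinct $\mathscr{B}$) produces the single CSP instance $\psi$ over $\mathcal{A}'$.

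By construction, an assignment of the $y^{\mathscr{B}}_{x,\rho}$ satisfying $\psi$ is exactly a family of partial Skolem functions $\sigma^{\mathscr{B}}_x(\rho):=y^{\mathscr{B}}_{x,\rho}$ witnessing $\mathcal{A}\models\phi_{\restrict\mathscr{B}}$ for every $\mathscr{B}\in\Omega_m$; conversely, any winning strategy against each $\mathscr{B}$ yields such an assignment. Hence $\mathcal{A}'\models\psi$ iff $\mathcal{A}\models\phi_{\restrict\Omega_m}$ iff $\mathcal{A}\models\phi$. The non-degeneracy hypothesis is used implicitly to make sure the constants appearing in different positions of a tuple can be named independently, so that the encoding is well-defined. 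The size of $\psi$ is at most $|\phi|\cdot\mathrm{width}(\Omega_m)\leq |\phi|\cdot p(m)$, and $\psi$ is built in time polynomial in $|\phi|$ by effectiveness.

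The main obstacle is really only conceptual: making sure the CSP variables $y^{\mathscr{B}}_{x,\rho}$ are shared correctly across tuples of a single adversary (so that the encoded strategy is genuinely a Skolem function depending only on preceding universals) and, conversely, kept disjoint across different $\mathscr{B}$'s (since reactive composition allows different local strategies on different adversaries, which are then glued by Theorem~\ref{thm:hubieReactiveComposition}). Once this bookkeeping is done and combined with the polynomial bound on $\mathrm{width}(\Omega_m)$, the reduction is polynomial time. The final statement about $\QCSP_c(\mathcal{A})$ reducing to $\CSP_c(\mathcal{A})$ is immediate: if $\mathcal{A}$ already has all constants, then $\mathcal{A}'=\mathcal{A}$.
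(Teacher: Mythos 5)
Your proposal is correct and follows essentially the same route as the paper: it invokes Theorem~\ref{MainResult:InAbstracto} to reduce $\mathcal{A}\models\phi$ to $\mathcal{A}\models\phi_{\restrict\mathscr{B}}$ for each $\mathscr{B}\in\Omega_m$, and then performs the instantiation of universal variables by constants with existential variables shared across tuples agreeing on the preceding universals, which is exactly the paper's identification of $x_t$ with $x_{t'}$ (following Chen's Lemma~7.12). Your indexing by the restriction $\rho=t\restrict Y_x$ is just a more explicit way of writing that identification, and the size and effectiveness analysis matches.
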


 \begin{proof}
  To check whether a pH-sentence $\phi$ with $m$ universal variables
  holds in $\mathcal{A}$, by Theorem~\ref{MainResult:InAbstracto}, 
  we only need to check that $\mathcal{A}\models \phi_{\restrict
    \mathscr{B}}$ for every $\mathscr{B}$ in $\Omega_m$. 
  The reduction proceeds as in the proof
  of~\cite[Lemma~7.12]{AU-Chen-PGP}, which we outline
  here for completeness. 

  Pretend first that we reduce $\mathcal{A}\models \phi_{\restrict
    \mathscr{B}}$ to a collection of CSP instances, one for each tuple
  $t$ of $\mathscr{B}$, obtained by instantiation of the universal variables with the
  corresponding constants. If $x$ is an existential variable in
  $\phi$, let $x_t$ be the corresponding variable in the CSP instance
  corresponding to $t$. We will in fact enforce equality constraints
  via renaming of variables to
  ensure that we are constructing Skolem functions. For any two tuples
  $t$ and $t'$ in $\mathscr{B}$ that agree on their first $\ell$ coordinates, let
  $Y_\ell$ be the corresponding universal variables of $\phi$. For every
  existential variable $x$ such that $Y_x$ (the universally quantified
  variables of $\phi$ preceding $x$) is contained in $Y_\ell$, we identify
  $x_t$ with $x_{t'}$.
\end{proof}
Since Zhuk has proved that all cases of PGP in finite algebras come from switchability, the most important cases of In Abstracto (Theorem~\ref{MainResult:InAbstracto}) and  Corollary \ref{cor:EffectiveProjectivePGPImpliesQCSPInNP} involve the already introduced adversaries $\Xi_{m,p}$ (for some $p$) substituted for the placeholder $\Omega_m$. Note that when $p>0$, $\Xi_{m,p}$ is non-degenerate, and the sequence $\left( \Xi_{m,p} \right)_{m \in \mathbb{N}}$ is readily seen to be projective. We have resisted giving In Abstracto (Theorem~\ref{MainResult:InAbstracto}) only for switchability in order to emphasise that the proof comes alone from non-degenerate and projective. However, let us state its consequence nonetheless.
\begin{corollary}
 Let $\mathcal{A}$ be a finite constraint language, with constants naming all of its elements, so that $\Pol(\mathcal{A})$ is switchable. Then $\QCSP(\mathcal{A})$ reduces to a polynomial number of instances of  $\CSP(\mathcal{A})$ and is in \NP.
\label{cor:in-NP}
\end{corollary}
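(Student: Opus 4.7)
The plan is to instantiate Corollary~\ref{cor:EffectiveProjectivePGPImpliesQCSPInNP} with the switchability adversaries $\Xi_{m,p}$ and verify that its three hypotheses -- effectiveness, projectivity and polynomial boundedness -- are satisfied, together with non-degeneracy, so that the corollary yields a polynomial reduction from $\QCSP(\mathcal{A})$ to $\CSP(\mathcal{A})$.

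First, I would unpack the switchability hypothesis. Since $\Pol(\mathcal{A})$ is switchable, there exists some $p \geq 1$ such that $\Pol(\mathcal{A})$ is simply $p$-switchable, which by definition means exactly that $\Xi_{m,p}$ generates $A^m$ for every $m \in \mathbb{N}$. Since $\mathcal{A}$ has constants naming all its elements, we may take $\mathcal{A}' = \mathcal{A}$ in the statement of Corollary~\ref{cor:EffectiveProjectivePGPImpliesQCSPInNP}.

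Next, I would verify the four properties of the sequence $\mathbf{\Xi}_p := (\Xi_{m,p})_{m \in \mathbb{N}}$. (a) \emph{Non-degeneracy}: for $p \geq 1$, any tuple in $\Xi_{m,p}$ with at least one genuine switch point already involves two distinct values, so $\Xi_{m,p}$ is not a singleton of a constant tuple and the associated canonical sentence is well-defined. (b) \emph{Polynomial boundedness}: a tuple in $\Xi_{m,p}$ is determined by choosing at most $p$ switch positions out of $m$ and assigning values from $A$ to the at most $p+1$ constant blocks, giving $|\Xi_{m,p}| \leq \binom{m}{p} \cdot |A|^{p+1} = O(m^p)$. (c) \emph{Effectiveness}: this enumeration is straightforward and can be carried out in polynomial time in the width. (d) \emph{Projectivity}: the $m$-projectivity property is routine to verify because projecting one coordinate from each of $m$ blocks of size $n$ in a tuple of $\Xi_{n m, p}$ yields a tuple whose number of value changes is bounded by the number of value changes in the original, hence at most $p$; a suitable dominating $\mathscr{A} \in \Xi_{m,p}$ is obtained by retaining the value pattern across blocks. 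This is the step the excerpt explicitly flags as ``readily seen.''

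With these hypotheses in hand, Corollary~\ref{cor:EffectiveProjectivePGPImpliesQCSPInNP} immediately gives a polynomial-time reduction of $\QCSP(\mathcal{A})$ to $\CSP(\mathcal{A})$ which, inspecting the proof sketch there, produces one CSP instance per tuple $t$ in each adversary $\mathscr{B} \in \Xi_{m,p}$ (with existential variables identified according to the agreement pattern of universal prefixes). The total number of such instances is bounded by $\sum_{\mathscr{B} \in \Xi_{m,p}} |\mathscr{B}|$, which is polynomial in $m$ by step (b). Since $\CSP(\mathcal{A})$ is in $\NP$ for any finite $\mathcal{A}$, so is the disjunction of polynomially many of its instances, yielding $\QCSP(\mathcal{A}) \in \NP$.

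The only nontrivial verification is the projectivity of $\mathbf{\Xi}_p$; everything else is bookkeeping on top of Theorem~\ref{MainResult:InAbstracto} and Corollary~\ref{cor:EffectiveProjectivePGPImpliesQCSPInNP}. Should a cleaner argument for projectivity be desired, one can observe that the number of distinct values taken along a projected subtuple cannot exceed the number of distinct values of the parent tuple, so the switch count is monotone under block-projection, which is precisely what projectivity of $\Xi_{m,p}$ requires.
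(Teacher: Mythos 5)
Your proposal follows exactly the paper's route: the paper obtains Corollary~\ref{cor:in-NP} by substituting the switching adversaries $\Xi_{m,p}$ for the placeholder $\Omega_m$ in Corollary~\ref{cor:EffectiveProjectivePGPImpliesQCSPInNP}, noting only that $\Xi_{m,p}$ is non-degenerate for $p>0$ and that $\left(\Xi_{m,p}\right)_{m\in\mathbb{N}}$ is projective, with the polynomial bound and effectiveness left implicit where you make them explicit. The one trivial slip is that the acceptance condition over the polynomially many CSP instances is a conjunction (the existential player must win against every tuple) rather than a disjunction, but since \NP{} is closed under polynomial conjunctions the conclusion is unaffected.
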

 
\subsection{Studies of collapsibility}
\label{sec:collapse}

Let $\mathcal{A}$ be a constraint language, $B\subseteq A$ and $p \geq 0$. Recall the structure
$\mathcal{A}$ is \emph{$p$-collapsible with source $B$} when for all
$m\geq 1$, for all pH-sentences $\phi$ with $m$ universal quantifiers, 
$\mathcal{A} \models \phi$ iff $\mathcal{A}\models \phi_{\restrict \Upsilon_{m,p,B}}$.
Collapsible structures are very important: to the best of our
knowledge, they are in fact the only examples of structures that enjoy a form of polynomial QCSP
to CSP reduction. This is different if one considers structures with infinitely many relations where the more
general notion of \emph{switchability} crops up~\cite{AU-Chen-PGP}.
Our abstract results of the previous section apply to both
switchability and collapsibility but we concentrate here on the latter. 
This  result applies since the underlying sequence of
adversaries are projective (see
Fact~\ref{fact:collapsible:adversary:is:projective}), as long as $p>0$
(non degenerate case).
\begin{corollary}[\textbf{In concreto}]
  \label{MainResult:InConcreto:Collapsibility}  
  Let $\mathcal{A}$ be a structure, $\emptyset \subsetneq B\subseteq A$ and $p>0$.
  The following are equivalent.
  \begin{romannum}
  \item $\mathcal{A}$ is $p$-collapsible from source $B$.
    \label{concreto:logicalcollapsibility:general}
  \item $\mathcal{A}$ is $\Pi_2$-$p$-collapsible from source $B$.
    \label{concreto:logicalcollapsibility:pi2}
  \item For every $m$, the structure $\mathcal{A}$ satisfies the
    canonical $\Pi_2$-sentence with $m\cdot |A|$ universal variables
    $\varphi_{n,\Upsilon_{m,p,B},\mathcal{A}}$.
  \item For every $m$, the structure $\mathcal{A}$ satisfies the
    canonical $\Pi_2$-sentence with $m$ universal variables
    $\varphi_{\mathscr{U},\mathcal{A}}$, where
    $\mathscr{U}=\bigcup_{\mathcal{O}\in \Upsilon_{m,p,B}} \mathcal{O}$.
  \item For every $m$, there exists a polymorphism $f$ of $\mathcal{A}$ witnessing
    that $A^m \reactivelycomposable \Upsilon_{m,p,B}$.
    \label{concreto:algebraiccollapsibility:general}
    \label{concreto:algebraiccollapsibility:iii}
  \item For every $m$, for every tuple $t$ in $A^m$, there is a
    polymorphism $f_t$ of $\mathcal{A}$ of arity $k$ at most $\binom{m}{p}\cdot |B|$ and
    tuples $t_1,t_2,\ldots,t_k$ in $\Upsilon_{m,p,B}$ such that
    $f_t(t_1,t_2,\ldots,t_k)=t$.
    \label{concreto:algebraiccollapsibility:pi2}
  \end{romannum}
\end{corollary}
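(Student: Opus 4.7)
The plan is to deduce the corollary as a specialisation of In Abstracto (Theorem~\ref{MainResult:InAbstracto}) applied to the adversary sequence $\mathbf{\Upsilon}_{p,B}$. The two standing hypotheses of In Abstracto both hold: non-degeneracy is exactly the condition $p>0$ (the paper records that $\Upsilon_{m,p,B}$ is degenerate precisely when $p=0$), and projectivity of the sequence is Fact~\ref{fact:collapsible:adversary:is:projective}. Granted these, items (i)--(v) of the corollary are items (i)--(v) of Theorem~\ref{MainResult:InAbstracto} after the substitution $\Omega_m := \Upsilon_{m,p,B}$: item (i) is Chen's very definition of $p$-collapsibility from source $B$, item (ii) is its $\Pi_2$-restriction, (iii) and (iv) are the general and $\Pi_2$ canonical sentences, and (v) is the reactive composability $A^m \reactivelycomposable \Upsilon_{m,p,B}$. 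Five of the six equivalences are thus immediate corollaries of In Abstracto.

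The remaining work is the equivalence between item (vi) and any one of the others, with the extra constraint that the polymorphism $f_t$ has arity at most $k \leq \binom{m}{p}\cdot|B|$. The direction (vi) $\Rightarrow$ (v) routes through In Abstracto item (vi): dropping the arity bound yields the unbounded generating condition, which Theorem~\ref{MainResult:InAbstracto} already equates with In Concreto (v). For the converse, assuming (v) and fixing $t \in A^m$, I would, for each of the $\binom{m}{p}\cdot|B|$ rectangles $R \in \Upsilon_{m,p,B}$ (with free coordinates $P_R \subseteq [m]$ of size $p$ and fixed value $x_R \in B$), pick the canonical representative $t_R \in R$ defined by $t_R[i] := t[i]$ if $i \in P_R$ and $t_R[i] := x_R$ otherwise. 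The task is then to extract from the (unbounded) generating hypothesis a single polymorphism of arity $|\Upsilon_{m,p,B}|$ sending the family $(t_R)_R$ to $t$.

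The main obstacle is exactly this arity reduction. The canonical-sentence route supplied by Theorem~\ref{MainResult:InAbstracto} natively produces, via a winning Skolem strategy for $\exists$ in the game for $\varphi_{\mathscr{U},\mathcal{A}}$, a polymorphism of arity $|\mathscr{U}|$ with $\mathscr{U} = \bigcup_{R} R$, bounded only by $\binom{m}{p}\cdot|B|\cdot|A|^p$---one coordinate per tuple, not per rectangle. Contracting this to arity $|\Upsilon_{m,p,B}|$ amounts to arguing that, once the representative $t_R$ is fixed, the other tuples of $R$ contribute no additional information towards recovering $t$: the fixed coordinates of $R$ are forced by the rectangle itself, and the free coordinates agree with $t$ by construction of $t_R$. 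Making this contraction rigorous while preserving the polymorphism (homomorphism) conditions underlying $f_t$---plausibly by collapsing each inner product $\bigotimes_{\mu \in R} \mathfrak{A}_\mu$ to a single factor indexed by $t_R$, exploiting that the remark after reactive composition lets partial maps depend only on their last argument---is the delicate step and, to my reading, the only part of the corollary that does not reduce mechanically to Theorem~\ref{MainResult:InAbstracto}.
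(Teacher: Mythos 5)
Your treatment of items (i)--(v) matches the paper exactly: the paper gives no proof beyond observing that $\mathbf\Upsilon_{p,B}$ is projective (Fact~\ref{fact:collapsible:adversary:is:projective}) and non-degenerate for $p>0$, so that Theorem~\ref{MainResult:InAbstracto} applies with $\Omega_m:=\Upsilon_{m,p,B}$, and your identification of the corollary's items with the theorem's is correct. The direction (vi)~$\Rightarrow$~(v) via the unbounded generating condition is also fine.

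The gap is your unfinished argument for (v)~$\Rightarrow$~(vi), and it arises because you are routing through the wrong item. You try to extract the arity-bounded polymorphism from the canonical sentence / generating condition, which indeed natively produces arity $|\mathscr{U}|\approx\binom{m}{p}|B|\cdot|A|^p$ (one coordinate per \emph{tuple}), and you then face a contraction problem that you do not solve. But item (v) already hands you the bound directly: by definition, $A^m \reactivelycomposable \Upsilon_{m,p,B}$ is witnessed by a polymorphism $f$ whose arity $k$ is the \emph{number of adversaries} in $\Upsilon_{m,p,B}$, which is at most $\binom{m}{p}\cdot|B|$ (one choice of the $p$ free positions, one choice of $x\in B$). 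Given $t=(a_1,\ldots,a_m)\in A^m$, the first clause of the definition of reactive composition produces, for each $j\in[k]$, the tuple $t_j:=\bigl(g^j_1(a_1),g^j_2(a_1,a_2),\ldots,g^j_m(a_1,\ldots,a_m)\bigr)$ lying in the $j$th adversary, and the second clause says precisely that $f(t_1,\ldots,t_k)=t$ coordinatewise. So $f_t:=f$ and these $t_j$ already satisfy (vi); no canonical representatives $t_R$ and no collapsing of the inner products $\bigotimes_{\mu\in R}\mathfrak{A}_\mu$ are needed (nor does (vi) require the $t_j$ to be the representatives you prescribe). The step you flag as delicate simply evaporates once (vi) is read off from (v) rather than from the $\Pi_2$/generating formulation.
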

\begin{remark}\label{rem:0collapsibility}
  When $p=0$, we obtain degenerate adversaries and this is due to the fact that if a QCSP is permitted equalities, then $0$-collapsibility can never manifest (think of $\forall x, y \ x=y$).
\end{remark}
Suppose $\mathcal{A}$ is expanded with constants naming all the elements. Then in \cite{hubie-sicomp}, Case~\ref{concreto:algebraiccollapsibility:general} of Corollary~\ref{MainResult:InConcreto:Collapsibility} is equivalent to $\Pol(\mathcal{A})$ being $p$-collapsible (in the algebraic sense). It is proved in \cite{hubie-sicomp} that if $\Pol(\mathcal{A})$, is $k$-collapsible (in the algebraic sense), then $\mathcal{A}$ is $k$-collapsible (in the relational sense). We note that Corollary~\ref{MainResult:InConcreto:Collapsibility} proves the converse, finally tying together the two forms of collapsibility (algebraic and relational) that appear in \cite{hubie-sicomp} .


We will now give an application of
Corollary~\ref{MainResult:InConcreto:Collapsibility}. For a sequence $\beta \in \{0,1\}^*$, of length $|\beta|$, let $\mathcal{P}_\beta$ be the undirected path on $|\beta|$ vertices such that the $i^{th}$ vertex has a loop iff the $i^{th}$ entry of $\beta$ is $1$ (we may say that the path $\mathcal{P}$ is \emph{of the form} $\beta$). A path $\mathcal{H}$ is \emph{quasi-loop-connected} if it is of either of the forms 
\begin{romannum}
\item $0^a 1^b \alpha$, for $b>0$ and some $\alpha$ with $|\alpha|=a$,
  or
  \label{def:qlc:i}
\item $0^a \alpha$, for some  $\alpha$ with $|\alpha|\in \{a,a-1\}$.
  \label{def:qlc:ii}
\end{romannum}
\noindent A path whose self-loops induce a connected component is further said to be \emph{loop-connected}.
\begin{application}
\label{app:FairmontHotelTrick}
  Let $\mathcal{A}$ be a partially reflexive path (no constants are present)
  that is quasi-loop connected then $\Pol(\mathcal{A})$ has the PGP. 
\end{application}
\begin{proof}
  Indeed, a partially reflexive path $\mathcal{A}$ that is quasi-loop
  connected has the same QCSP as a partially reflexive path that is
  loop-connected $\mathcal{B}$ \cite{DBLP:conf/cp/MadelaineM12} since
  for some $r_a>0$ there is a surjective homomorphism $g$ from
  $\mathcal{A}^{r_a}$ to $\mathcal{B}$ and for some $r_b>0$ there is a
  surjective homomorphism $h$ from $\mathcal{B}^{r_b}$ to
  $\mathcal{A}$ (see main result of~\cite{LICS2008}). We also know
  that $\mathcal{B}$ admits a majority polymorphism $m$
  \cite{QCSPforests} and is therefore $2$-collapsible from any
  singleton source \cite{hubie-sicomp}
  and that Theorem~\ref{MainResult:InConcreto:Collapsibility} holds
  for $\mathcal{B}$.  Pick some arbitrary element $a$ in $\mathcal{A}$
  such that there is some $b$ in $\mathcal{B}$ satisfying
  $g(a,a,\ldots,a)=b$. Use $b$ as a source for $\mathcal{B}$.

  We proceed to lift~\ref{concreto:algebraiccollapsibility:pi2} of
  Corollary~\ref{MainResult:InConcreto:Collapsibility} from
  structure $\mathcal{B}$ to $\mathcal{A}$, which we recall here for
  $\mathcal{B}$ : for every $m$, for every tuple $t$ in $B^m$, there
  is a polymorphism $f_t$ of $\mathcal{B}$ of arity $k$ and tuples
  $t_1,t_2,\ldots,t_k$ in $\Upsilon_{m,2,b}$ such that
  $f_t(t_1,t_2,\ldots,t_k)=t$.

  Let $g^{k}$ denote the surjective homomorphism from
  $(\mathcal{A}^{r_a})^k$ to $\mathcal{B}^k$ that applies $g$
  blockwise.  Going back from $t_i$ through $g$, we can find $r_a$
  tuples $t_{i,1},t_{i,2},\ldots,t_{i,r_a}$ all in $\Upsilon_{m,2,a}$
  (adversaries based on the domain of $\mathcal{A}$) such that
  $g(t_{i,1},t_{i,2},\ldots,t_{i,r_a})=t_i$.  Thus, we can generate
  any $\widetilde{t}$ in $\mathcal{B}$ via $f_{\widetilde{t}}\circ(g^k)$ from
  tuples of $\Upsilon_{m,2,a}$.

  Let $\hat{t}$ be now some tuple of $\mathcal{A}$. By surjectivity of
  $h$, let 
  $\widetilde{t_1},\widetilde{t_2},\ldots,\widetilde{t_{r_b}}$
  be tuples of $\mathcal{B}$ such that
  $h(\widetilde{t_1},\widetilde{t_2},\ldots,\widetilde{t_{r_b}})=\hat{t}$. 
  The polymorphism of $\mathcal{A}$ 
  $(f_{\widetilde{t_1}}\circ(g^k),f_{\widetilde{t_2}}\circ(g^k),\ldots,f_{\widetilde{t_{r_b}}}\circ(g^k))$
  shows that $\Upsilon_{m,2,a}$ generates $\hat{t}$.
  This shows that $\mathcal{A}$ is also 2-collapsible from a
  singleton source. 
\end{proof}

The last two conditions of Corollary~\ref{MainResult:InConcreto:Collapsibility} provide us with
a semi-decidability result: for each $m$, we may look for a particular
polymorphism \ref{concreto:algebraiccollapsibility:general} or several
polymorphisms \ref{concreto:algebraiccollapsibility:pi2}. Instead of a sequence of
polymorphisms, we now strive for a better algebraic
characterisation. We will only be able to do so for the special case
of a singleton source, but this is the only case hitherto found in nature.  

  Let $\mathcal{A}$ be a structure with a constant $x$ naming some element.
  Call a $k$-ary polymorphism of $\mathcal{A}$ such that $f$ is surjective when restricted at any position to $\{x\}$ a \emph{Hubie-pol in $\{x\}$}.
Chen uses the following lemma to show $4$-collapsibility of bipartite
graphs and disconnected graphs~\cite[Examples 1 and
2]{Meditations}. Though, we know via a direct
argument~\cite{CiE2006} that these examples are in
fact $1$-collapsible from a singleton source.
\begin{lemma}[Chen's lemma {\cite[Lemma
    5.13]{hubie-sicomp}}]
  \label{lem:ChensLemma}
  Let $\mathcal{A}$ be a structure with a constant $x$ naming some element, so that $\mathcal{A}$ has a Hubie-pol in $\{x\}$. Then $\mathcal{A}$ is $(k-1)$-collapsible from source $\{x\}$.
\end{lemma}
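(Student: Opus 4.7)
The plan is to invoke the algebraic form (vi) of Corollary~\ref{MainResult:InConcreto:Collapsibility} with $p=k-1$ and $B=\{x\}$: it suffices to show that, for every $m\geq 1$ and every $t=(a_1,\ldots,a_m)\in A^m$, there is a term operation $g_t$ of $\mathbb{A}$ and tuples $t_1,\ldots,t_\ell$ in the union of adversaries from $\Upsilon_{m,k-1,\{x\}}$ such that $g_t(t_1,\ldots,t_\ell)=t$. Since composition of polymorphisms is a polymorphism and $x$ is named by a constant (so any polymorphism fixes $x$), the natural proof is by strong induction on the measure $s(t):=|\{j\in[m]:t_j\neq x\}|$, peeling one layer of $f$ at a time.

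In the base case $s(t)\leq k-1$, the tuple $t$ already lies in (a tuple of) an adversary from $\Upsilon_{m,k-1,\{x\}}$, and $g_t$ can be taken to be the identity projection. For the inductive step, assume $s(t)\geq k$ and fix any partition of $\{j:t_j\neq x\}$ into $k$ nonempty blocks $P_1,\ldots,P_k$. Build $m$-tuples $v_1,\ldots,v_k$ position by position: if $t_j=x$, set $v_{l,j}:=x$ for every $l\in[k]$; if $t_j\neq x$ and $j\in P_i$, use the Hubie-pol property of $f$ at coordinate $i$ to choose $v_{1,j},\ldots,v_{k,j}\in A$ with $v_{i,j}=x$ and $f(v_{1,j},\ldots,v_{k,j})=t_j$. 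Since $f(x,\ldots,x)=x$, we obtain $f(v_1,\ldots,v_k)=t$ coordinatewise. Moreover each $v_l$ equals $x$ on $P_l\cup\{j:t_j=x\}$, whence $s(v_l)\leq s(t)-|P_l|<s(t)$; the inductive hypothesis expresses each $v_l$ via a term $g_l$ applied to tuples in $\Upsilon_{m,k-1,\{x\}}$, and composing with $f$ yields the required term expression for $t$.

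The real hinge is that $s(t)\geq k$ permits a partition of the non-$x$ coordinates into $k$ nonempty pieces, forcing each $v_l$ to strictly decrease the measure; this is precisely where the arity $k$ of the Hubie-pol lines up with the collapsibility parameter $k-1$, and any weakening of the arity/collapsibility relation would break the recursion. All choices are made locally per coordinate, so the witnessing partial functions depend only on their last argument, in line with the Remark following the definition of reactive composition; thus we actually obtain item (v) of Corollary~\ref{MainResult:InConcreto:Collapsibility} as well, and hence $p$-collapsibility from source $\{x\}$ with $p=k-1$.
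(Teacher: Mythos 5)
Your proof is correct and follows essentially the same route as the paper: both reduce, via Corollary~\ref{MainResult:InConcreto:Collapsibility}, to showing that $\Upsilon_{m,k-1,x}$ generates $A^m$, and both do so by iterating the Hubie-pol $f$ coordinatewise, exploiting its surjectivity at the position restricted to $\{x\}$ together with $f(x,\ldots,x)=x$. Your top-down induction on the number of non-$x$ coordinates of the target tuple (partitioned into $k$ nonempty blocks) is simply a tidier bookkeeping of the paper's bottom-up iteration, which grows the set of generated adversaries from $k-1$ free coordinates up to the full adversary $A^m$.
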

\begin{proof}

  We sketch the proof for pedagogical reasons.
  Via Corollary~\ref{MainResult:InConcreto:Collapsibility}, it suffices to show
  that for any $m$, $A^m$ is generated by $\Upsilon_{m,k-1,x}$
  (instead of the notion of reactive composition). 

  Consider adversaries of length $m=k$ for now, that is from $\Upsilon_{k,k-1,x}$. 
  If we apply $f$ to these $k$ adversaries, we generate the full adversary $A^k$. With a picture (adversaries are drawn as columns):
  $$f
  \begin{pmatrix}
    \{x\}  &A      &A      &\ldots & A      \\
    A      &\{x\}  &A      &\ldots & A      \\
    \vdots &       &\ddots &       & \vdots \\
    A      &\ldots & A     & \{x\} & A      \\
    A      &\ldots & A     & A     & \{x\} \\
  \end{pmatrix}
  = 
  \begin{pmatrix}
   A \\
   A \\
   \vdots \\
   A \\
   A \\
  \end{pmatrix}
  =
  A^k
  $$
  Expanding these adversaries uniformly with singletons $\{x\}$ to the
  full length $m$, we may produce an adversary from $\Upsilon_{m,k,x}$. 
  With a picture for \textsl{e.g.} trailing singletons:
  $$f
  \begin{pmatrix}
    \{x\}  &A      &A      &\ldots & A      \\
    A      &\{x\}  &A      &\ldots & A      \\
    \vdots &       &\ddots &       & \vdots \\
    A      &\ldots & A     & \{x\} & A      \\
    A      &\ldots & A     & A     & \{x\} \\
    \{x\}  &\{x\}  &\{x\}  &\ldots & \{x\}  \\
    \vdots &\vdots &\vdots &\vdots &\vdots  \\
    \{x\}  &\{x\}  &\{x\}  &\ldots & \{x\}  \\
  \end{pmatrix}
  = 
  \begin{pmatrix}
    A \\
    A \\
    \vdots \\
    A \\
    A \\
    \{x\} \\
    \vdots \\
    \{x\} \\
  \end{pmatrix}
  $$
  Shifting the first additional row of singletons in the top block, we
  will obtain the family of adversaries from $\Upsilon_{m,k,x}$ with
  a single singleton in the first $k+1$ positions. It should be now
  clear that we may iterate this process to derive $A^m$ eventually
  via some term $f'$ which is a superposition of $f$ and projections
  and is therefore also a polymorphism of $\mathcal{A}$.
\end{proof}

\begin{remark}
  An extended analysis of our proof should convince the careful reader that
  we may in the same fashion prove reactive composition (the polymorphism's
  action is determined for a row independently of the others). Thus, appealing to
  the previous section is not essential, though it does allow for a
  simpler argument. 
\end{remark}

An interesting consequence of last section's formal work is a form of
converse of Chen's Lemma, which allows us to give an algebraic
characterisation of collapsibility from a singleton source.
\begin{proposition}
  \label{prop:singletoncollapse:WeakCharacterisation}
  Let $x$ be a constant in $\mathcal{A}$.
  The following are equivalent:
  \begin{romannum}
  \item $\mathcal{A}$ is collapsible from $\{x\}$. 
  \item $\mathcal{A}$ has a Hubie-pol in $\{x\}$.
    \label{item:HubiePolx}
  \end{romannum}
\end{proposition}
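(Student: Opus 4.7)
The direction (ii) $\Rightarrow$ (i) is exactly Chen's Lemma (Lemma~\ref{lem:ChensLemma}), so the plan is to prove the converse: if $\mathcal{A}$ is $p$-collapsible from $\{x\}$ for some $p > 0$, then $\mathcal{A}$ admits a Hubie-pol in $\{x\}$.

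First I would apply Corollary~\ref{MainResult:InConcreto:Collapsibility} (item \ref{concreto:algebraiccollapsibility:iii}) at the specific value $m = p+1$. The set of adversaries $\Upsilon_{p+1, p, \{x\}}$ is particularly transparent at this value: it consists of exactly $p+1$ rectangular adversaries $\mathscr{B}_1, \ldots, \mathscr{B}_{p+1}$, where $\mathscr{B}_i$ has $\{x\}$ at coordinate $i$ and $A$ at all other coordinates. By $p$-collapsibility, there exists a $(p+1)$-ary polymorphism $f$ of $\mathcal{A}$ witnessing $A^{p+1} \reactivelycomposable \{\mathscr{B}_1, \ldots, \mathscr{B}_{p+1}\}$, together with partial maps $g_i^j$ as in the definition of reactive composition.

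My claim is that this $f$ is precisely a Hubie-pol in $\{x\}$. To see this, fix an arbitrary position $i \in [p+1]$ and an arbitrary target element $a \in A$. Choose any tuple $(a_1,\ldots,a_{p+1}) \in A^{p+1}$ with $a_i = a$. The defining property of reactive composition gives
\[
a = a_i = f\bigl(g_i^1(a_1,\ldots,a_i),\; g_i^2(a_1,\ldots,a_i),\; \ldots,\; g_i^{p+1}(a_1,\ldots,a_i)\bigr),
\]
where the tuple $(g_1^j(a_1), g_2^j(a_1,a_2), \ldots, g_{p+1}^j(a_1,\ldots,a_{p+1}))$ lies in $\mathscr{B}_j$. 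Since $\mathscr{B}_i$ has $\{x\}$ at coordinate $i$, the value $g_i^i(a_1,\ldots,a_i)$ is forced to equal $x$. Hence
\[
a = f\bigl(y_1,\ldots,y_{i-1},\, x,\, y_{i+1},\ldots,y_{p+1}\bigr)
\]
for some choice of $y_j \in A$, which shows that $f$ restricted at position $i$ to $\{x\}$ still surjects onto $A$. As $i$ was arbitrary, $f$ is a Hubie-pol in $\{x\}$.

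The only subtlety — and really the main conceptual step — is the correct choice of $m = p+1$: this is the smallest value of $m$ for which $\Upsilon_{m,p,\{x\}}$ is nontrivial and for which each coordinate position is ``pinned'' to $\{x\}$ in exactly one adversary of the set. This ensures that the arity of the witnessing $f$ matches the number $p+1$ of adversaries, and that restricting any single input of $f$ to $x$ corresponds to playing against precisely the adversary $\mathscr{B}_i$ that forces $\{x\}$ at coordinate $i$. Once this matching is set up, the surjectivity in the Hubie-pol condition falls out immediately from reactive composition, with no further algebraic computation required.
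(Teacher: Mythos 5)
Your proposal is correct and follows essentially the same route as the paper: one direction is Chen's Lemma, and for the converse the paper likewise invokes the reactive-composition characterisation at $m=p+1$ to extract a $(p+1)$-ary polymorphism $f$ witnessing $A^{p+1}\reactivelycomposable \Upsilon_{p+1,p,x}$ and observes it is a Hubie-pol in $\{x\}$. The only difference is that the paper dismisses the final verification with ``clearly,'' whereas you spell out why $g^i_i$ is forced to take the value $x$; that added detail is accurate and harmless.
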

\begin{proof}
  Lemma~\ref{lem:ChensLemma} shows that \ref{item:HubiePolx} implies
  collapsibility. We prove the converse.
  
  Assume $p$-collapsibility. By
  Fact~\ref{fact:collapsible:adversary:is:projective}, we may apply
  Theorem~\ref{MainResult:InAbstracto}. 
  For $m = p+1$, item \ref{abstracto:algebraic:general} of this theorem states that
  there is a polymorphism $f$ witnessing that $A^{p+1} \reactivelycomposable
  \Upsilon_{p+1,p,x}$ (diagrammatically, we may draw a similar picture
  to the one we drew at the beginning of the previous proof). Clearly,
  $f$ satisfies~\ref{item:HubiePolx}.
\end{proof}
\noindent In the proof of the above, for $(i)\Rightarrow (ii) \Rightarrow (i)$, we no longer control the collapsibility parameter as the
arity of our polymorphism is larger than the parameter we start with.
By inspecting more carefully the properties of the polymorphism $f$ we
get as a witness that $\mathcal{A}$ models a canonical sentence, we
may derive in fact $p$-collapsibility by an argument akin to the one
used above in the proof of Chen's Lemma. We obtain this way a nice
concrete result to counterbalance the abstract
Theorem~\ref{MainResult:InAbstracto}.
\begin{theorem}[\textbf{$p$-collapsibility from a singleton source}]
  \label{theo:singletonSource:StrongCharacterisation}
  Let $x$ be a constant in $\mathcal{A}$ and $p>0$.
  The following are equivalent.
  \begin{romannum}
  \item $\mathcal{A}$ is $p$-collapsible from $\{x\}$.
  \item For every $m\geq 1$, the full adversary $A^m$ is reactively composable from $\Upsilon_{m,p,x}$.
  \item $\mathcal{A}$ is $\Pi_2$-$p$-collapsible from $\{x\}$. 
  \item For every $m\geq 1$, $\Upsilon_{m,p,x}$ generates $A^m$.
  \item $\mathcal{A}$ models $\phi_{n,\Upsilon_{p+1,p,x},\mathcal{A}}$
    (which implies that $\mathcal{A}$ admits a particularly well
    behaved Hubie-pol in $\{x\}$ of arity $(p+1)n^p$).
  \end{romannum}
\end{theorem}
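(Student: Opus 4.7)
The plan is to establish the five conditions by combining the equivalences already obtained in Corollary~\ref{MainResult:InConcreto:Collapsibility} with a ``lifting'' argument in the style of Chen's Lemma. Specifically, items (i)--(iv) of the present theorem correspond (for the singleton source $B=\{x\}$) to items in Corollary~\ref{MainResult:InConcreto:Collapsibility} and are therefore equivalent by that result. So the real content is to close the loop through item (v), by proving (i)$\Rightarrow$(v) trivially and (v)$\Rightarrow$(ii) (or equivalently (v)$\Rightarrow$(iv)) via a careful finite-to-infinite lifting.

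For (i)$\Rightarrow$(v): apply Corollary~\ref{MainResult:InConcreto:Collapsibility}, specifically the canonical-sentence item, with the parameter $m$ set to $p+1$. This immediately yields $\mathcal{A}\models \phi_{n,\Upsilon_{p+1,p,x},\mathcal{A}}$. In particular, by Lemma~\ref{lem:CanonicalSentenceImpliesReactiveComposability}, $f$ can be extracted of arity equal to the total number of consistent maps summed over adversaries in $\Upsilon_{p+1,p,x}$: there are $\binom{p+1}{p}=p+1$ such rectangular adversaries, and for each one, $p$ coordinates range over $A$ (size $n$) while one coordinate is pinned to $x$, giving $n^p$ consistent maps and hence total arity $(p+1)n^p$.

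For (v)$\Rightarrow$(ii), which is the main step: from $\mathcal{A}\models \phi_{n,\Upsilon_{p+1,p,x},\mathcal{A}}$, Lemma~\ref{lem:CanonicalSentenceImpliesReactiveComposability} gives a polymorphism $f$ of arity $(p+1)n^p$ witnessing that $A^{p+1}\reactivelycomposable \Upsilon_{p+1,p,x}$. Because each argument block of $f$ is indexed by a consistent map $\mu$ for some adversary $\mathscr{O}\in \Upsilon_{p+1,p,x}$ (so one coordinate of each tuple of $\mathscr{O}$ is fixed to $x$), $f$ is in particular a Hubie-pol in $\{x\}$, and indeed a ``particularly well-behaved'' one, as advertised in the statement. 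Now replay the shifting diagram used in the proof of Lemma~\ref{lem:ChensLemma}: to handle general $m>p+1$, extend each generating adversary of length $p+1$ to length $m$ by appending singleton blocks $\{x\}$, obtaining adversaries in $\Upsilon_{m,p,x}$, and shift these singleton positions across the $m$ coordinates. Iterating and superposing with projections, we generate the full adversary $A^m$, and as noted in the remark following Chen's Lemma, this process can be carried out row-by-row so as to yield reactive composition rather than mere composition. This establishes (ii) for every $m\geq 1$.

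The main obstacle is the bookkeeping in this last step: one must verify that the partial functions $g^j_i$ required for reactive composition from $\Upsilon_{m,p,x}$ really can be obtained from the ones witnessing $A^{p+1}\reactivelycomposable \Upsilon_{p+1,p,x}$ through the singleton-padding and shifting procedure, exploiting the observation in the remark after the definition of reactive composition that one may always choose $g^j_i$ depending only on its last argument. Once this is verified, combined with Corollary~\ref{MainResult:InConcreto:Collapsibility} applied to $B=\{x\}$, all five conditions are linked and the equivalence is complete.
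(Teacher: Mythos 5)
Your proposal is correct and follows essentially the same route as the paper: equivalence of (i)--(iv) via Corollary~\ref{MainResult:InConcreto:Collapsibility}, item (v) obtained by specialising the canonical sentence to $m=p+1$, and the converse by extracting the polymorphism from the canonical sentence and replaying the singleton-padding/shifting argument of Chen's Lemma. The only cosmetic difference is that you close the loop at item (ii) (reactive composition) where the paper closes it at item (iv) (generation); this is immaterial given the already-established equivalences.
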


\begin{proof}
  Equivalence of the first four points appears in
   Corollary~\ref{MainResult:InConcreto:Collapsibility}, 
  as does the
  equivalence with the statement :
  For every $m \geq 1$, $\mathcal{A}$ models $\phi_{n,\Upsilon_{m,p,x},\mathcal{A}}$.
  So they imply trivially the last point by selecting $m = p+1$.
  
  We show that the last point implies the penultimate one. The proof principle is similar to that of Chen's Lemma.
  As we have argued similarly before, the last point implies the existence of a polymorphism $f$.
  This polymorphism enjoys the following property (each column represents in fact $n^p$ coordinates of $A$):
  $$f
  \begin{pmatrix}
    \begin{array}[h]{c|c|c|c|c}
    \{x\}   &A      &A      &\ldots & A      \\
    A      &\{x\}  &A      &\ldots & A      \\
    \vdots &       &\ddots &       & \vdots \\
    A      &\ldots & A     & \{x\} & A      \\
    A      &\ldots & A     & A     & \{x\} \\
  \end{array}
  \end{pmatrix}
  = 
  \begin{pmatrix}
   A \\
   A \\
   \vdots \\
   A \\
   A \\
  \end{pmatrix}
  =
  A^{p+1}
  $$
  So arguing as in the proof of Chen's Lemma, we may conclude similarly that 
  for all $m$, the full adversary $A^m$ is composable from $\Upsilon_{m,p,x}$.
\end{proof}

\begin{remark}\label{rem:collapsibilityOnConservativeSourceSet}
  We say that a structure $\mathcal{A}$ is \emph{$B$-conservative} where $B$ is a subset
  of its domain iff for any polymorphism $f$ of $\mathcal{A}$ and
  any $C \subseteq B$, we have $f(C,C,\ldots,C)\subseteq C$. 
  Provided that the structure is conservative on the source set $B$,
  we may prove a similar result for $p$-collapsibility from a conservative source.
\end{remark}

\subsubsection{The curious case of $0$-collapsibility}

Expanding on Remark~\ref{rem:0collapsibility}, we note that if we
forbid equalities in the input to a QCSP, then we can observe the
natural case of $0$-collapsibility, to which now we turn. This is not
a significant restriction in a context of complexity, since in all but
trivial cases of a one element domain, one can propagate equality out
through renaming of variables. 

We investigated a similar notion in the context of positive equality
free first-order logic, the syntactic restriction of first-order logic
that consists of sentences using only $\exists, \forall, \land$ and $\lor$.
For this logic, relativisation of quantifiers fully explains the complexity classification of the model
checking problem (a tetrachotomy between Pspace-complete, NP-complete,
co-NP-complete and Logspace)~\cite{LICS2011}. In
particular, a complexity in NP is characterised algebraically by the
preservation of the structure by a \emph{simple $A$-shop} (to be
defined shortly), which is equivalent to a strong form of
$0$-collapsibility since it applies not only to pH-sentences but also
to sentences of positive equality free first-order logic. 
We will show that this notion corresponds in fact to $0$-collapsibility
from a singleton source. Let us recall first some definitions.

A \emph{shop} on a set $B$, short for surjective hyper-operation, is a
function $f$ from $B$ to its powerset such that $f(x)\neq \emptyset$
for any $x$ in $B$ and for every $y$ in $B$, there exists $x$ in $B$
such that $f(x)\ni y$. An \emph{A-shop}\footnotemark{}
\footnotetext{The A does not stand for the name of the set, it is
  short for \emph{All}.} satisfies further that there is some $x$ such
that $f(x)=B$. A \emph{simple $A$-shop} satisfies further that
$|f(x')|=1$ for every $x'\neq x$.
We say that a shop $f$ is a \emph{she} of the structure $\mathcal{B}$, short for \emph{surjective
  hyper-endomorphism}, iff for any relational symbol $R$ in $\sigma$
of arity $r$, for any elements $a_1,a_2\ldots,a_r$ in $B$, if
$R(a_1,\ldots,a_r)$ holds in $\mathcal{B}$ then $R(b_1,\ldots,b_r)$
holds in $\mathcal{B}$ for any $b_1 \in f(a_1),\ldots,b_r \in f(a_r)$. 
We say that $\mathcal{B}$ \emph{admits a (simple) $A$-she} if there is a (simple) $A$-shop
$f$ that is a she of $\mathcal{B}$.

\begin{theorem}\label{theorem:0collapsibility}
  Let $\mathcal{B}$ be a finite structure.
  The following are equivalent.
  \begin{romannum}
  \item $\mathcal{B}$ is $0$-collapsible from source $\{x\}$ for some
    $x$ in $B$ for equality-free pH-sentences.\label{0collapsibility:singleton:pH}
  \item $\mathcal{B}$ admits a simple $A$-she.\label{0collapsibility:she}
  \item $\mathcal{B}$ is $0$-collapsible from source $\{x\}$ for some $x$ in $B$ for sentences of positive
    equality free first-order logic. \label{0collapsibility:mylogic}
  \end{romannum}
\end{theorem}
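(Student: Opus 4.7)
The plan is to close the cycle (iii)~$\Rightarrow$~(i)~$\Rightarrow$~(ii)~$\Rightarrow$~(iii). The implication (iii)~$\Rightarrow$~(i) is immediate, since equality-free pH-sentences form a syntactic fragment of positive equality-free first-order logic.

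For (i)~$\Rightarrow$~(ii), I would produce a tailored equality-free pH-witness for a simple $A$-she. Enumerate $B=\{x,e_{2},\ldots,e_{n}\}$ and let $\mathrm{Diag}(\mathcal{B})$ denote the set of pairs $(R,(c_{1},\ldots,c_{r}))$ with $R$ a basic relation of $\mathcal{B}$ and $(c_{1},\ldots,c_{r})\in R^{\mathcal{B}}$. Introduce one universal variable $w_{T,k}$ for each pair $(T,k)$ where $T=(R,(c_{1},\ldots,c_{r}))\in\mathrm{Diag}(\mathcal{B})$ and $c_{k}=x$, and consider the equality-free pH-sentence
\[
\phi^{\ast} \;=\; \exists y_{2}\cdots\exists y_{n}\,\forall \vec{w}\,\bigwedge_{T=(R,(c_{1},\ldots,c_{r}))\in\mathrm{Diag}(\mathcal{B})} R(z_{T,1},\ldots,z_{T,r}),
\]
with $z_{T,k}=w_{T,k}$ when $c_{k}=x$ and $z_{T,k}=y_{j}$ when $c_{k}=e_{j}$. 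When every $w_{T,k}$ is set to $x$, substituting $y_{j}=e_{j}$ reduces the body to the diagram of $\mathcal{B}$ itself, so $\exists$ wins against the $x$-everywhere adversary. Hypothesis (i) applied to $\phi^{\ast}$ then yields $\mathcal{B}\models\phi^{\ast}$ and so witnesses $g_{j}$ for the $y_{j}$. Setting $f(x)=B$ and $f(e_{j})=\{g_{j}\}$ defines a simple $A$-shop (surjectivity follows from $f(x)=B$) whose she property is precisely the universally quantified content of $\phi^{\ast}$.

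For (ii)~$\Rightarrow$~(iii), fix a simple $A$-she $f$ and a selection map $g:B\to B$ with $g(y)\in f(y)$. I would prove by structural induction on subformulas $\chi$ that whenever partial assignments $\pi',\pi$ to the free variables of $\chi$ satisfy $\pi(v)\in f(\pi'(v))$ for every free $v$, if $\exists$ wins $\chi$ under $\pi'$ in the $x$-everywhere game then $\exists$ wins $\chi$ under $\pi$ in the general game. The atomic case is exactly the she property; the $\wedge,\vee$ cases carry the winning branch through the IH; for $\exists v.\chi_{1}$, replace $\exists$'s response $c$ under $\pi'$ by $g(c)\in f(c)$ under $\pi$, preserving the hypothesis. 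The pivotal case is $\forall v.\chi_{1}$: in the $x$-everywhere game $\forall$ is forced to set $v\mapsto x$, while in the general game $\forall$ may play any $a\in B$; since $a\in B=f(x)$, the hypothesis remains compatible and the IH applies to $\chi_{1}$. Instantiating the induction at the root of $\psi$ delivers (iii).

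The main obstacle is exactly this $\forall$-step: translating $\forall$'s forced $x$-move in the collapsed game to an arbitrary move in the general game demands precisely $f(x)=B$, which is the $A$-property of the she. Conversely, in (i)~$\Rightarrow$~(ii) the quantifier order $\exists^{\ast}\forall^{\ast}$ in $\phi^{\ast}$ is essential: it forces the witnesses $g_{j}$ to be independent of the universal plays, which is what makes the resulting she \emph{simple} rather than a more general $A$-she.
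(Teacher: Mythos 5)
Your proof is correct, but it closes the cycle of implications along a genuinely different route from the paper. The paper handles (ii)$\Leftrightarrow$(iii) by citation to an earlier result (Theorem~8 of the LICS~2009 paper on positive equality-free logic, together with the observation that an $A$-she exists iff a simple $A$-she does), takes (ii)$\Rightarrow$(i) as trivial, and proves (i)$\Rightarrow$(ii) \emph{by contraposition}: assuming no simple $A$-she, each of the finitely many candidate simple $A$-shops $\xi$ with $\xi(x)=B$ fails to be a she on some single atom $R_\xi$, and the conjunction of these atoms, prenexed as an $\exists^*\forall^*$ sentence, is true when the universals are relativised to $x$ yet false in general. You instead prove (i)$\Rightarrow$(ii) \emph{directly}, via a canonical $\exists^*\forall^*$ sentence built from the entire positive diagram of $\mathcal{B}$, and you replace the citation for (ii)$\Rightarrow$(iii) with a self-contained structural induction on subformulas whose only nontrivial step (the $\forall$-case) uses exactly $f(x)=B$. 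Your version has the merit of being constructive and self-contained: the simple $A$-she is read off from the Skolem witnesses, and the relativisation argument for full positive equality-free logic is spelled out rather than outsourced. The paper's contrapositive construction is more economical -- one atom per candidate shop rather than the whole diagram -- and, because only finitely many atoms are ever needed, it does not require the signature to be finite, whereas your $\phi^*$ is a first-order sentence only when $\mathrm{Diag}(\mathcal{B})$ is finite; this is harmless under the theorem's hypothesis that $\mathcal{B}$ is a finite structure, but is worth noting as the one place where your argument is strictly less robust. Your closing observation that the outermost existential quantification is what forces the she to be \emph{simple} is accurate and matches the role the same quantifier pattern plays in the paper's sentence.
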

\begin{proof}
  The last two points are equivalent~\cite[Theorem
  8]{DBLP:conf/lics/MadelaineM09} (this result is stated with
  $A$-she rather than simple $A$-she but clearly, $\mathcal{A}$ has an A-she iff it has a simple A-she).
  The implication \ref{0collapsibility:she} to
  \ref{0collapsibility:singleton:pH} 
  follows trivially. 
  
  We prove the implication \ref{0collapsibility:singleton:pH} to
  \ref{0collapsibility:she} by contraposition.
  Assume that $A=[n]=\{1,\ldots,n\}$ and suppose that $\mathcal{A}$ has no simple A-she. We will prove that
  $\mathcal{A}$ does not admit universal relativisation to $x$
  for pH-sentences. We assume also \mbox{w.l.o.g.} that $x=1$.
  Let $\Xi$ be the set of simple A-shops $\xi$ \mbox{s.t.}
  $\xi(1)=[n]$. Since each $\xi$ is not a she of $\mathcal{A}$, we
  have a quantifier-free formula with $2n-1$ variables $R_\xi$ that
  consists of a single positive atom (not all variables need appear
  explicitly in this atom) such that $\mathcal{A} \models R_\xi(1,\ldots,1,2,\ldots,n)$\footnote{There are $n$ ones.}, but $\mathcal{A} \notmodels R_\xi(\xi^1,\ldots,\xi^n,\xi(2),\ldots,\xi(n))$ for some $\xi^1,\ldots,\xi^n \in [n]=\xi(1)$.

  \newcommand{\Eta}{\mathrm{E}}
  This means that for each $\eta:\{2,\ldots,n\} \rightarrow
  [n]$ there is some $2n-1$-ary ``atom'' $R_\eta$ such that
  $\mathcal{A} \models R_\eta(1,\ldots,1,1,2,\ldots,n)$\footnote{There
    are $n$ ones.}, but $\mathcal{A} \notmodels
  R_\eta(\xi^1,\ldots,\xi^n,\eta(2),\ldots,\eta(n))$ for some
  $\xi^1,\ldots,\xi^n \in [n]$. Let $\Eta=[n]^{[n-1]}$ denotes the set
  of $\eta$s.

  Suppose we had universal relativisation to $1$. Then we know that 
\[ \mathcal{A} \models \bigwedge_{\eta \in \Eta} R_\eta(1,\ldots,1,1,2,\ldots,n), \] that is, 
\[ \mathcal{A} \models \exists y_1,\ldots,y_n \bigwedge_{\eta \in \Eta} R_\eta(1,\ldots,1,y_1,y_2,\ldots,y_n).\]
According to relativisation this means also that 
\[ \mathcal{A} \models \exists y_1,\ldots,y_n \forall x_1, \ldots,x_n \bigwedge_{\eta \in \Eta} R_\eta(x_1,\ldots,x_n,y_1,y_2,\ldots,y_n).\]
But we know 
\[ \mathcal{A} \models \forall y_1,\ldots,y_n \exists x_1, \ldots,x_n \bigvee_{\eta \in \Eta} \neg R_\eta(x_1,\ldots,x_n,y_1,y_2,\ldots,y_n),\]
since the $\eta$s range over all maps $[n]$ to $[n]$. Contradiction.
\end{proof}
The above applies to singleton source only, but up to taking a power
of a structure (which satisfies the same QCSP), we may always place
ourselves in this singleton setting for $0$-collapsibility.
\begin{theorem}\label{theorem:0collapsibility:rainbowsource}
  Let $\mathcal{B}$ be a structure. The following are equivalent.
  \begin{romannum}
  \item $\mathcal{B}$ is $0$-collapsible from source $C$
  \item $\mathcal{B}^{|C|}$ is $0$-collapsible from some (any) singleton source $x$
    which is a (rainbow) $|C|$-tuple containing all elements of $C$.
  \end{romannum}
\end{theorem}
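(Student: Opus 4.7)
The plan is to unpack both sides of the equivalence as purely game-theoretic statements about the same pH-sentence $\phi$ and then observe that a rainbow singleton in $\mathcal{B}^{|C|}$ encodes exactly the same information as the source set $C$ in $\mathcal{B}$. Enumerate $C = \{c_1, \ldots, c_k\}$ with $k = |C|$ and set $x := (c_1, \ldots, c_k) \in B^k$. Any other rainbow enumeration of $C$ corresponds to a coordinate permutation of $\mathcal{B}^k$, yielding an isomorphic structure, which justifies the ``\emph{some (any)}'' in the statement.

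Two elementary facts drive the argument. First, pH-sentences are preserved by direct products and by projections, so $\mathcal{B} \models \phi$ iff $\mathcal{B}^k \models \phi$ for every pH-sentence $\phi$. Second, against the singleton adversary $\{(x, x, \ldots, x)\}$ the universal player has no genuine choices, so each Skolem function $\sigma_z$ into $B^k$ collapses to a constant, namely a $k$-tuple of elements of $B$. Since a pH-atom holds in $\mathcal{B}^k$ iff it holds in every coordinate, existential winning in $\mathcal{B}^k$ against $\{(x, \ldots, x)\}$ decouples completely across the $k$ coordinates, giving
\[ \mathcal{B}^k \models \phi_{\restrict \{(x,\ldots,x)\}} \iff \forall i \in [k] \colon \mathcal{B} \models \phi_{\restrict \{(c_i,\ldots,c_i)\}} \iff \mathcal{B} \models \phi_{\restrict \Upsilon_{m,0,C}}, \]
where the final step is just the definition of $\Upsilon_{m,0,C}$ as the union over $c \in C$ of the constant-$c$ singleton adversaries.

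Combining these, the defining condition ``$\mathcal{B} \models \phi \iff \mathcal{B} \models \phi_{\restrict \Upsilon_{m,0,C}}$'' for $0$-collapsibility of $\mathcal{B}$ from $C$ translates verbatim into the defining condition ``$\mathcal{B}^k \models \phi \iff \mathcal{B}^k \models \phi_{\restrict \{(x,\ldots,x)\}}$'' for $0$-collapsibility of $\mathcal{B}^k$ from $\{x\}$, and the equivalence follows. The only delicate point is the bookkeeping in the second fact above: one must verify cleanly that identifying $B^k$-valued Skolem functions with $k$-tuples of $B$-valued ones is compatible with the pH-semantics of the quantifier-free part of $\phi$. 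This reduces to the routine observation that a conjunction of atoms holds in a direct product iff it holds in every coordinate, so I expect no real obstacle — the entire result is essentially a reformulation exercise that exploits how the rainbow tuple packages the elements of $C$ into a single element of the power.
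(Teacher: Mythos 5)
Your proposal is correct and follows essentially the same route as the paper's own proof: the paper likewise observes that $\mathcal{B}^{|C|}\models \varphi_{\restrict (x,\ldots,x)}$ is equivalent to $\mathcal{B}\models \varphi_{\restrict (c,\ldots,c)}$ for every $c\in C$ (the coordinate-wise decoupling against a singleton adversary), and then invokes the fact that a structure and its finite powers satisfy the same pH-sentences to transfer unrestricted truth between $\mathcal{B}$ and $\mathcal{B}^{|C|}$. The only difference is presentational: you phrase it as one chain of equivalences, while the paper splits it into a downwards and an upwards implication.
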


\begin{proof}
  Let $B=\{1,2,\ldots,b\}$.
  \begin{itemize}
  \item (downwards). 
    Let $x$ be $|B|$-tuple containing all elements of $B$, \mbox{w.l.o.g.} $x=(1,2,\ldots,b)$.
    Let $\varphi$ be a pH sentence.
    Assume that $\mathcal{A}^{|B|}\models \varphi_{\restrict
      (x,x,\ldots,x)}$.
    Equivalently, for any $i$ in $B$, $\mathcal{A}\models \varphi_{\restrict
      (i,i,\ldots,i)}$. Thus, $0$-collapsibility from source $B$
    implies that $\mathcal{A}\models \varphi$. Since $A$ and its power
    satisfy the same
    pH-sentences\cite{LICS2008}
    we may conclude that $\mathcal{A}^{|B|}\models \varphi$.
  \item (upwards). Assume that for any $i$ in $B$, $\mathcal{A}\models \varphi_{\restrict
      (i,i,\ldots,i)}$. Equivalently, $\mathcal{A}^{|B|}\models \varphi_{\restrict
      (x,x,\ldots,x)}$ where $x$ is any $|B|$-tuple containing all
    elements of $B$. By assumption, $\mathcal{A}^{|B|}\models \varphi$
    and we may conclude that $\mathcal{A}\models \varphi$.
  \end{itemize}
\end{proof}

\subsection{Issues of decidability}

The following is a corollary of Theorem \ref{theo:singletonSource:StrongCharacterisation}.
\begin{corollary}
  Given $p \geq 1$, a structure $\mathcal{A}$ and $x$ a constant in $\mathcal{A}$, we may decide
  whether $\mathcal{A}$ is $p$-collapsible from $\{x\}$.
\end{corollary}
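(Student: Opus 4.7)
The plan is simply to invoke the fifth clause of Theorem~\ref{theo:singletonSource:StrongCharacterisation}, which reduces the apparently infinitary question of $p$-collapsibility (quantifying over all $m$) to the verification of a single finite sentence. Concretely, the theorem tells us that $\mathcal{A}$ is $p$-collapsible from $\{x\}$ if and only if $\mathcal{A} \models \phi_{n,\Upsilon_{p+1,p,x},\mathcal{A}}$, where $n=|A|$.

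Given $p$, $\mathcal{A}$ and $x$ as input, the first step is to construct the adversary set $\Upsilon_{p+1,p,x}$ explicitly; by definition this is the set of rectangular adversaries of length $p+1$ having $p$ coordinates equal to $A$ and one coordinate equal to $\{x\}$, so there are exactly $p+1$ of them and each has size $n^p$. From these we assemble the $\sigma^{(n(p+1))}$-structure $\bigotimes_{\mathscr{O}\in \Upsilon_{p+1,p,x}}\bigotimes_{\mu \in A^{[n(p+1)]}_{\restrict \mathscr{O}}} \mathfrak{A}_{\mathscr{O},\mu}$ described in Section~\ref{sec:unbounded} and read off the canonical $\Pi_2$-pH-sentence $\phi_{n,\Upsilon_{p+1,p,x},\mathcal{A}}$. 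Since $p\geq 1$, the adversary set is non-degenerate, so this sentence is well defined.

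The final step is to decide whether $\mathcal{A}\models \phi_{n,\Upsilon_{p+1,p,x},\mathcal{A}}$. This is a model-checking problem for a fixed first-order sentence on a finite structure, which is trivially decidable: one enumerates assignments to the $n(p+1)$ universally quantified variables, and for each such assignment searches for a witness to the existentially quantified variables among the finitely many elements of $A$. By Theorem~\ref{theo:singletonSource:StrongCharacterisation} the answer to this model-checking question is exactly the answer to the $p$-collapsibility question, so we are done.

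There is no real obstacle: all the work has already been done in building up to Theorem~\ref{theo:singletonSource:StrongCharacterisation}, which collapses the universal quantification over $m$ to the single instance $m=p+1$. The corollary is essentially an observation that finite model-checking is decidable. Note, however, that the procedure is non-uniform in $p$ and gives only a (rather expensive) decision procedure for each fixed $p$; it says nothing about deciding $p$-collapsibility when $p$ itself is part of the input, nor about deciding collapsibility outright.
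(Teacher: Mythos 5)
Your proposal is correct and matches the paper's intended argument exactly: the paper states this as an immediate corollary of Theorem~\ref{theo:singletonSource:StrongCharacterisation}, whose last item reduces $p$-collapsibility from $\{x\}$ to the single, finitely checkable question of whether $\mathcal{A}$ models the canonical sentence $\phi_{n,\Upsilon_{p+1,p,x},\mathcal{A}}$. Your additional remarks on non-degeneracy for $p\geq 1$ and on the non-uniformity in $p$ are accurate and consistent with the paper's own caveat that no decision procedure is known when $p$ is unspecified.
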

\noindent We are not aware of a similar decidability result when the source is not a singleton. Neither are we aware of a decision procedure for collapsibility in general (when the $p$ is not specified).

The case of switchability in general can be answered by \cite{ZhukGap2015}. Let $\alpha,\beta$  be strict subsets of $A$ so that $\alpha \cup \beta = A$. A $k$-ary operation $f$ on $A$ is said to be \emph{$\alpha\beta$-projective} if there exists $i \in [k]$ so that $f(x_1,\ldots,x_k) \in \alpha$, if $x_i \in \alpha$, and $f(x_1,\ldots,x_k) \in \beta$, if $x_i \in \beta$. A constraint language $\mathcal{A}$, expanded with constants naming all the elements, is switchable iff there exists some $\alpha$ and $\beta$, strict subsets of $A$, so that $\alpha \cup \beta = A$ and some polymorphism of  $\mathcal{A}$ is not $\alpha\beta$-projective. If the maximal number of tuples in a relation of  $\mathcal{A}$ is $m$ then only polymorphisms of arity $m$ need be considered.

\section{The Chen Conjecture for infinite languages}

\subsection{NP-membership}

We need to revisit Theorem \ref{MainResult:InAbstracto} in the case of infinite  languages (signatures) and switchability. We omit parts of the theorem that are not relevant to us.
 \begin{theorem}[\textbf{In abstracto levavi}]\label{MainResult:InAbstractoLevavi}
    Let $\mathbf{\Omega}=\bigl(\Omega_m\bigr)_{m \in \integerset}$ be the sequence of the set of all ($k$-)switching $m$-ary adversaries over the domain of $\mathcal{A}$, a finite-domain structure with an infinite signature. The following are equivalent.
  \begin{romannum}
  \item[$(i)$] For every $m\geq 1$, for every pH-sentence $\psi$ with $m$ universal
    variables, $\mathcal{A}\models \psi_{\restrict \Omega_{m}}$
    implies $\mathcal{A}\models \psi$.
  \item[$(vi)$] For every $m\geq 1$, $\Omega_m$ generates $\mathrm{Pol}(\mathcal{A})^m$.
  \end{romannum}
\end{theorem}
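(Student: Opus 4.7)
The plan is to lift Theorem~\ref{MainResult:InAbstracto} to the infinite-signature setting by reducing the ``easy'' direction to the finite-signature case, and by using the richness of the infinite signature to cook up a witnessing sentence for the ``hard'' direction. Throughout, we rely on the fact (already noted after Theorem~\ref{MainResult:InAbstracto}) that the switching adversaries $\Xi_{m,p}$ are non-degenerate and projective for $p>0$, so the abstract machinery of Section~\ref{sec:from-pgp-complexity} applies to every finite-signature reduct.

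\textbf{$(vi) \Rightarrow (i)$:} Let $\psi$ be any pH-sentence with $m$ universal variables such that $\mathcal{A}\models \psi_{\restrict\Omega_m}$. Since $\psi$ mentions only finitely many relation symbols, let $\mathcal{A}_\psi$ be the reduct of $\mathcal{A}$ to this finite signature. Reducing the signature can only enlarge the polymorphism clone, so $\mathrm{Pol}(\mathcal{A})\subseteq \mathrm{Pol}(\mathcal{A}_\psi)$; hence if $\Omega_m$ generates $A^m$ using $\mathrm{Pol}(\mathcal{A})$, it also generates $A^m$ using $\mathrm{Pol}(\mathcal{A}_\psi)$. Now Theorem~\ref{MainResult:InAbstracto} applies to $\mathcal{A}_\psi$ and yields $\mathcal{A}_\psi\models \psi$, which is the same as $\mathcal{A}\models \psi$.

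\textbf{$(i) \Rightarrow (vi)$:} We argue by contrapositive. Suppose $\Omega_m$ fails to generate $A^m$ for some $m$. Let $S$ be the subuniverse of $\mathbb{A}^m$ generated by the tuples appearing in adversaries of $\Omega_m$, where $\mathbb{A}$ is the algebra with clone $\mathrm{Pol}(\mathcal{A})$. By hypothesis, $S\subsetneq A^m$. Since $S$ is closed under every operation of $\mathbb{A}$, we have $S\in \mathrm{Inv}(\mathrm{Pol}(\mathcal{A}))$. This is the step where the infinite-signature hypothesis is used: $\mathcal{A}$ has a relation symbol $R$ interpreted as $S$. Consider the $\Pi_1$-pH sentence
\[ \psi \;=\; \forall x_1 \forall x_2 \cdots \forall x_m\; R(x_1,\ldots,x_m). \]
Every tuple of every adversary in $\Omega_m$ lies in $S$ (they are the generators of $S$), so $\mathcal{A}\models \psi_{\restrict\Omega_m}$. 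On the other hand, $S\neq A^m$ means there is a tuple $t\in A^m\setminus S$ falsifying $\psi$, so $\mathcal{A}\not\models \psi$, contradicting $(i)$.

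The main obstacle is making sure that the invariant relation $S$ really is accessible in the signature of $\mathcal{A}$. This is automatic in the setting the theorem is meant for ($\mathcal{A}=\mathrm{Inv}(\mathbb{A})$ or, more generally, any $\mathcal{A}$ whose signature is closed under invariants of $\mathrm{Pol}(\mathcal{A})$); note that even under the ``simple logic'' encoding described in the introduction, $S$ need not have a short description as a quantifier-free formula, but it is nonetheless a relation of $\mathcal{A}$, which is all the argument needs. Apart from this, the proof is essentially a straight translation of the finite-language proof, the only novelty being that one side of the equivalence is reduced to the finite case one sentence at a time, rather than handled through a single (now infinite) canonical sentence.
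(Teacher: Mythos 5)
Your direction $(vi)\Rightarrow(i)$ is correct and is essentially the paper's argument: restrict to the finite reduct $\mathcal{A}_\psi$, observe that $\mathrm{Pol}(\mathcal{A})\subseteq\mathrm{Pol}(\mathcal{A}_\psi)$ so generation is inherited, and invoke Theorem~\ref{MainResult:InAbstracto} (the switching adversaries being projective and non-degenerate). The problem is the converse. Your witnessing-sentence argument for $(i)\Rightarrow(vi)$ needs the subuniverse $S$ of $\mathbb{A}^m$ generated by $\Omega_{\text{tuples}}$ to be \emph{a relation of} $\mathcal{A}$, and the theorem's hypothesis --- ``$\mathcal{A}$, a finite-domain structure with an infinite signature'' --- does not give you that. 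An infinite signature need not be closed under pp-definability, let alone contain every member of $\mathrm{Inv}(\mathrm{Pol}(\mathcal{A}))$; a structure with infinitely many relations can easily omit the particular invariant $S$. You flag this yourself, but resolving it by restricting to structures whose signature contains all invariants is a genuine strengthening of the hypotheses, not a repair of the proof: as stated, the theorem is used as a bridge between two formulations of switchability for a general infinite-signature $\mathcal{A}$, and your argument does not establish it at that level of generality (it does suffice for the application in Theorem~\ref{thm:easy}, where $\mathcal{A}=\mathrm{Inv}(\mathbb{A})$).

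The paper closes this direction without any richness assumption on the signature. It first deduces, from $(i)$, the corresponding statement $(i')$ for every finite-signature reduct $\mathcal{A}'$ (any failing sentence mentions only finitely many relations), then applies the finite-language Theorem~\ref{MainResult:InAbstracto} to each reduct to get that $\Omega_m$ generates $A^m$ inside each clone $\mathrm{Pol}(\mathcal{A}_i)$ along an increasing chain of reducts exhausting the signature. The remaining step is a compactness/pigeonhole argument: for each target tuple the witnessing operation can be normalised to a bounded arity, so only finitely many candidate witnesses exist; one of them recurs for arbitrarily large $i$, and since the clones $\mathrm{Pol}(\mathcal{A}_i)$ decrease to $\mathrm{Pol}(\mathcal{A})=\bigcap_i\mathrm{Pol}(\mathcal{A}_i)$, that recurring operation is a polymorphism of $\mathcal{A}$ itself. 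This limiting argument is the idea missing from your proposal; if you want to keep your cleaner direct argument, you must either add the hypothesis that the signature of $\mathcal{A}$ is closed under invariants of its polymorphisms, or supply the compactness step for the general case.
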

\begin{proof}
We know from Theorem~\ref{MainResult:InAbstracto} that the following are equivalent.
 \begin{romannum}
  \item[$(i')$] For every finite-signature reduct $\mathcal{A}'$ of $\mathcal{A}$ and $m\geq 1$, for every pH-sentence $\psi$ with $m$ universal
    variables, $\mathcal{A}' \models \psi_{\restrict \Omega_{m}}$
    implies $\mathcal{A}' \models \psi$.
  \item[$(vi')$] For every finite-signature reduct $\mathcal{A}'$ of $\mathcal{A}$ and every $m\geq 1$, $\Omega_m$ generates $\mathrm{Pol}(\mathcal{A}')^m$.
  \end{romannum}
Since it is clear that both  $(i) \Rightarrow (i')$ and $(vi) \Rightarrow (vi')$, it remains to argue that $(i') \Rightarrow (i)$ and $(vi') \Rightarrow (vi)$.

[$(i') \Rightarrow (i)$.] By contraposition, if $(i)$ fails then it fails on some specific  pH-sentence $\psi$ which only mentions a finite number of relations of $\mathcal{A}'$. Thus $(i')$ also fails on some finite reduct of $\mathcal{A}'$ mentioning these relations.

[$(vi') \Rightarrow (vi)$.] Let $m$ be given. Consider some chain of finite reducts $\mathcal{A}_1,\ldots,\mathcal{A}_i,\ldots$ of $\mathcal{A}$ so that each $\mathcal{A}_i$ is a reduct of $\mathcal{A}_j$ for $i<j$ and every relation of $\mathcal{A}$ appears in some $\mathcal{A}_i$. We can assume from $(vi')$ that $\Omega_m$ generates $\mathrm{Pol}(\mathcal{A}_i)^m$, for each $i$. However, since the number of tuples $(a_1,\ldots,a_m)$ and operations mapping $\Omega_m$ pointwise to $(a_1,\ldots,a_m)$, witnessing generation in $\mathrm{Pol}(\mathcal{A}')^m$, is finite, the sequence of operations $(f^i_1,\ldots,f^i_{|A|^m})$ (where $f^i_j$ witnesses generation of the $j$th tuple in $A^m$) witnessing these must have an infinitely recurring element as $i$ tends to infinity. One such recurring element we call $(f_1,\ldots,f_{|A|^m})$ and this witnesses generation in $\mathrm{Pol}(\mathcal{A})^m$.
\end{proof}

Note that in $(vi') \Rightarrow (vi)$ above we did not need to argue uniformly across the different $(a_1,\ldots,a_m)$ and it is enough to find an infinitely recurring operation for each of these individually.

The following result is the infinite language counterpoint to Corollary \ref{cor:in-NP}, that follows from  Theorem \ref{MainResult:InAbstractoLevavi} just as Corollary \ref{cor:in-NP} followed from Theorem \ref{MainResult:InAbstracto}.
\begin{theorem}
\label{thm:easy}
Let $\mathbb{A}$ be an idempotent algebra on a finite domain $A$. If $\mathbb{A}$ satisfies PGP, then QCSP$(\mathrm{Inv}(\mathbb{A}))$ reduces to a polynomial number of instances of CSP$(\mathrm{Inv}(\mathbb{A}))$ and is in NP.
\end{theorem}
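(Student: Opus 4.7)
The plan is to lift Corollary~\ref{cor:in-NP} from the finite-signature setting to the infinite-signature one by replacing the invocation of Theorem~\ref{MainResult:InAbstracto} with its infinite-signature counterpart, Theorem~\ref{MainResult:InAbstractoLevavi}. First I would appeal to Zhuk's gap theorem \cite{ZhukGap2015}: on a finite domain, PGP is precisely simple $k$-switchability for some constant $k$ depending only on $\mathbb{A}$. Consequently, the sequence $\mathbf{\Xi}_k = (\Xi_{m,k})_{m \in \integerset}$ is non-degenerate (since $k>0$ whenever the algebra is nontrivial), projective, effective and polynomially bounded, with $\mathrm{width}(\Xi_{m,k}) = O(m^k)$.

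Next, set $\mathcal{A} = \mathrm{Inv}(\mathbb{A})$, so that $\mathrm{Pol}(\mathcal{A}) = \mathbb{A}$. Condition $(vi)$ of Theorem~\ref{MainResult:InAbstractoLevavi} asks that $\Xi_{m,k}$ generate $\mathbb{A}^m$ for every $m \geq 1$; this is exactly simple $k$-switchability, and so holds. The theorem therefore gives condition $(i)$: for every pH-sentence $\psi$ with $m$ universal variables,
\[ \mathcal{A} \models \psi \iff \mathcal{A} \models \psi_{\restrict \Xi_{m,k}}. \]

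To turn this logical criterion into a polynomial-time many-one reduction from $\QCSP(\mathrm{Inv}(\mathbb{A}))$ to $\CSP(\mathrm{Inv}(\mathbb{A}))$, I would re-use the Skolem-consistency renaming from the proof of Corollary~\ref{cor:EffectiveProjectivePGPImpliesQCSPInNP}. Given an input pH-sentence $\phi$ with $m$ universal variables, the set $\bigcup_{\mathscr{B} \in \Xi_{m,k}} \mathscr{B}$ has cardinality $O(m^k)$; for each of its tuples $t$ substitute the entries of $t$ for the universal variables of $\phi$, obtaining a conjunction of (substituted) constraints, and then rename the existential variables so that $x_t$ and $x_{t'}$ are identified whenever $t$ and $t'$ coincide on the universal prefix preceding $x$. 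The resulting conjunction is a polynomial-size CSP instance $\Phi$ over $\mathrm{Inv}(\mathbb{A})$, encoded in the same ``simple'' quantifier-free logic as $\phi$ (substitution of constants preserves the encoding), and by the equivalence above $\mathcal{A} \models \phi$ iff $\mathcal{A} \models \Phi$. Since a candidate satisfying assignment to $\Phi$ can be verified in polynomial time against each quantifier-free constraint, $\CSP(\mathrm{Inv}(\mathbb{A}))$ is in NP under the simple-logic encoding, and therefore so is $\QCSP(\mathrm{Inv}(\mathbb{A}))$.

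The main obstacle is not an ingredient we still owe but a bookkeeping check: that the constant $k$ genuinely depends only on $\mathbb{A}$ (so that $m^k$ remains polynomial in the length of the input), that the simple-logic encoding is preserved through constant substitution and Boolean conjunction (so $\Phi$ is a legitimate CSP instance in the declared encoding), and that the renaming really encodes Skolem dependencies faithfully, which follows because two tuples agreeing on the universal prefix of $x$ cannot force incompatible values of $x$ under any winning strategy. Once these three points are verified, the two consequences in the theorem statement follow immediately.
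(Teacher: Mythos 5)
Your proposal is correct and follows essentially the same route as the paper, which proves Theorem~\ref{thm:easy} by combining Zhuk's result that PGP coincides with switchability, the infinite-signature equivalence of Theorem~\ref{MainResult:InAbstractoLevavi}, and the Skolem-consistency renaming reduction of Corollary~\ref{cor:EffectiveProjectivePGPImpliesQCSPInNP}; your write-up merely spells out the bookkeeping (polynomial width of $\Xi_{m,k}$, preservation of the encoding, NP-verifiability of the resulting CSP instance) that the paper leaves implicit.
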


\subsection{co-NP-hardness}

Suppose there exist $\alpha,\beta$ strict subsets of $A$ so that $\alpha \cup \beta = A$, define the relation $\tau_k(x_1,y_1,z_1\ldots,x_k,y_k,z_k)$ by
\[ \tau_k(x_1,y_1,z_1\ldots,x_k,y_k,z_k):=\rho'(x_1,y_1,z_1) \vee \ldots \vee \rho'(x_k,y_k,z_k),\]
where $\rho'(x,y,z)=(\alpha \times \alpha \times \alpha) \cup (\beta \times \beta \times \beta)$. Strictly speaking, the $\alpha$ and $\beta$ are parameters of $\tau_k$ but we dispense with adding them to the notation since they will be fixed at any point in which we invoke the $\tau_k$. The purpose of the relations $\tau_k$ is to encode co-NP-hardness through the complement of the problem  (monotone) \emph{$3$-not-all-equal-satisfiability} (3NAESAT). Let us introduce also the important relations $\sigma_k(x_1,y_1,\ldots,x_k,y_k)$ defined by
\[ \sigma_k(x_1,y_1,\ldots,x_k,y_k):=\rho(x_1,y_1) \vee \ldots \vee \rho(x_k,y_k),\]
where $\rho(x,y)=(\alpha \times \alpha) \cup (\beta \times \beta)$.
\begin{lemma}
The relation $\tau_k$ is pp-definable in $\sigma_k$.
\label{lem:new-revision}
\end{lemma}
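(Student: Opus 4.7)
The plan rests on the pointwise equivalence
\[
\rho'(x,y,z) \;\Longleftrightarrow\; \rho(x,y) \wedge \rho(y,z) \wedge \rho(x,z),
\]
valid on $A$ because $\alpha \cup \beta = A$. The forward direction is immediate: if $\{x,y,z\} \subseteq \alpha$ (or $\subseteq \beta$) then every pair is also contained in $\alpha$ (resp.\ $\beta$). For the converse I would argue by contraposition. Suppose $\neg\rho'(x,y,z)$; then $\{x,y,z\}$ is contained in neither $\alpha$ nor $\beta$, so some coordinate lies outside $\alpha$ and some outside $\beta$. Since $\alpha \cup \beta = A$, the former lies in $\beta \setminus \alpha$ and the latter in $\alpha \setminus \beta$, in particular these are distinct coordinates. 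This pair fails $\rho$ (since any pair of elements, one from $\alpha \setminus \beta$ and one from $\beta \setminus \alpha$, lies in neither $\alpha \times \alpha$ nor $\beta \times \beta$), and it is one of the three pairs $(x,y), (y,z), (x,z)$. Hence one of the three $\rho$-conjuncts fails.

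With the equivalence in hand, I would substitute into the definition of $\tau_k$ to get
\[
\tau_k(\bar x, \bar y, \bar z) \;=\; \bigvee_{i=1}^{k} \bigl[\rho(x_i,y_i) \wedge \rho(y_i,z_i) \wedge \rho(x_i,z_i)\bigr],
\]
and then distribute $\vee$ over $\wedge$. Writing $(u_i^1,v_i^1)=(x_i,y_i)$, $(u_i^2,v_i^2)=(y_i,z_i)$, $(u_i^3,v_i^3)=(x_i,z_i)$, the result is
\[
\tau_k(\bar x,\bar y,\bar z) \;=\; \bigwedge_{j : [k] \to \{1,2,3\}} \; \bigvee_{i=1}^{k} \rho\bigl(u_i^{j(i)}, v_i^{j(i)}\bigr),
\]
and each inner disjunction is, by definition, an instance of $\sigma_k$ applied to the appropriate sequence of variables among $x_1,y_1,z_1,\ldots,x_k,y_k,z_k$. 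This is a (quantifier-free, equality-free) conjunction of $3^k$ atomic $\sigma_k$-formulas, hence a pp-definition of $\tau_k$ from $\sigma_k$.

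The main conceptual obstacle is exactly the pointwise equivalence: one must see that the ``pair-local'' constraints $\rho$ propagate to the ``triple-global'' constraint $\rho'$, and this is precisely where the hypothesis $\alpha \cup \beta = A$ is used. The distribution step afterwards is routine and, although it produces exponentially many conjuncts in $k$, this is harmless for pp-definability, which places no bound on formula size.
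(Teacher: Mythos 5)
Your proposal is correct, and it arrives at exactly the same pp-definition as the paper: the conjunction of the $3^k$ instances of $\sigma_k$ obtained by choosing, for each $i$, one of the three pairs from $(x_i,y_i,z_i)$. Where you differ is in how the equivalence of that conjunction with $\tau_k$ is verified. You isolate the pointwise identity $\rho'(x,y,z) \Leftrightarrow \rho(x,y)\wedge\rho(y,z)\wedge\rho(x,z)$ (correctly observing that this is the one place where $\alpha\cup\beta=A$ is used: a triple outside $\rho'$ must contain an element of $\beta\setminus\alpha$ in one coordinate and an element of $\alpha\setminus\beta$ in another, and that pair violates $\rho$), and then reduce the rest to the generic distributive law $\bigvee_i\bigwedge_j P_{ij}\Leftrightarrow\bigwedge_{f}\bigvee_i P_{i f(i)}$. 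The paper instead proves the nontrivial entailment $\Phi\Rightarrow\tau_k$ by contradiction with a counting argument: assuming $\tau_k$ fails, at most two of the three pairs at each index $i$ can satisfy $\rho$ (this is the same pointwise fact, used implicitly), and tracking how many of the $3^k$ conjuncts can be discharged index by index leaves one conjunct unsatisfied. The two arguments have identical mathematical content, but your factorisation is cleaner: it makes the role of the hypothesis explicit, replaces the bookkeeping with a standard propositional identity, and would generalise without change to disjunctions of any fixed-arity $\rho'$ expressible as a conjunction of $\rho$-atoms.
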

\begin{proof}
We will argue that $\tau_k$ is definable by the conjunction $\Phi$ of $3^k$ instances of $\sigma_k$ that each consider the ways in which two variables may be chosen from each of the $(x_i,y_i,z_i)$, i.e. $x_i\sim y_i$ or $y_i\sim z_i$ or $x_i\sim z_i$ (where $\sim$ is infix for $\rho$). We need to show that this conjunction $\Phi$ entails $\tau_k$ (the converse is trivial). We will assume for contradiction that $\Phi$ is satisfiable but $\tau_k$ not. In the first instance of $\sigma_k$ of $\Phi$ some atom must be true, and it will be of the form $x_i\sim y_i$ or $y_i\sim z_i$ or $x_i\sim z_i$. Once we have settled on one of these three, $p_i\sim q_i$, then we immediately satisfy $3^{k-1}$ of the conjunctions of $\Phi$, leaving $2\cdot 3^{k-1}$ unsatisfied. Now we can evaluate to true no more than one other among $\{x_i\sim y_i, y_i\sim z_i, x_i\sim z_i\} \setminus \{p_i\sim q_i\}$, without contradicting our assumptions. If we do evaluate this to true also, then we leave $3^{k-1}$ conjunctions unsatisfied.
Thus we are now down to looking at variables with subscript other than $i$ and in this fashion we have made the space one smaller, in total $k-1$. Now, we will need to evaluate in $\Phi$ some other atom of the form  $x_j\sim y_j$ or $y_j\sim z_j$ or $x_j\sim z_j$, for $j\neq i$. Once we have settled on at most two of these three then we immediately satisfy $3^{k-2}$ of the conjunctions remaining of $\Phi$, leaving $3^{k-2}$ still unsatisfied. Iterating this thinking, we arrive at a situation in which $1$ clause is unsatisfied after we have gone through all $k$ subscripts, which is a contradiction. 
\end{proof}
\begin{theorem}
\label{thm:hard}
Let $\mathbb{A}$ be an idempotent algebra on a finite domain $A$. If $\mathbb{A}$ satisfies EGP, then QCSP$(\mathrm{Inv}(\mathbb{A}))$ is co-NP-hard.
\end{theorem}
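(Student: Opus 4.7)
The plan is to reduce the co-NP-complete problem co-3NAESAT (the complement of monotone not-all-equal 3-SAT) to $\QCSP(\mathrm{Inv}(\mathbb{A}))$, using $\tau_k$ as essentially the only relation in the reduction. First I would invoke the characterisation recalled at the end of Section~2 from \cite{ZhukGap2015}: since $\mathbb{A}$ is idempotent and has EGP it is not switchable, which amounts to saying that for every pair of strict subsets $\alpha,\beta\subsetneq A$ with $\alpha\cup\beta=A$, \emph{every} polymorphism of $\mathbb{A}$ is $\alpha\beta$-projective. I would fix any such pair (\mbox{e.g.}~$\alpha=A\setminus\{a\}$ and $\beta=A\setminus\{b\}$ for distinct $a,b\in A$, which automatically makes $\alpha\setminus\beta$ and $\beta\setminus\alpha$ non-empty) and define $\rho,\rho',\sigma_k,\tau_k$ relative to it.

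Next I would verify that $\tau_k\in\mathrm{Inv}(\mathbb{A})$ for every $k\geq 1$. Given any $n$-ary polymorphism $f$ of $\mathbb{A}$, $\alpha\beta$-projective at some coordinate $i^\ast$, and tuples $t^{(1)},\ldots,t^{(n)}\in\tau_k$, the tuple $t^{(i^\ast)}$ satisfies $\rho'$ at some triple-index $j$, so that triple lies entirely in $\alpha$ or entirely in $\beta$; $\alpha\beta$-projectivity at $i^\ast$ then forces the $j$-th triple of $f(t^{(1)},\ldots,t^{(n)})$ to lie in the same part of $A$, so the image still satisfies $\tau_k$. (Alternatively, one could first check $\sigma_k\in\mathrm{Inv}(\mathbb{A})$ by the same argument and conclude via Lemma~\ref{lem:new-revision}.) Note that $\tau_k$ admits a ``simple'' (indeed DNF) encoding of size $O(k)$ over the fixed domain $A$, hence is legitimate as input to $\QCSP(\mathrm{Inv}(\mathbb{A}))$ under our encoding.

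For the reduction, given a 3NAESAT instance with clauses $C_1,\ldots,C_k$ on variables $v_1,\ldots,v_n$, where $C_i=(v_{i,1},v_{i,2},v_{i,3})$, I build the pH-sentence
\[
  \Phi \;:=\; \forall v_1\cdots\forall v_n\; \tau_k\bigl(v_{1,1},v_{1,2},v_{1,3},\ldots,v_{k,1},v_{k,2},v_{k,3}\bigr).
\]
I claim $\Phi$ holds on $\mathrm{Inv}(\mathbb{A})$ iff the instance is not NAE-satisfiable. For the forward direction, given an NAE-assignment $s$, pick $a_0\in\alpha\setminus\beta$ and $a_1\in\beta\setminus\alpha$ and set $\pi(v):=a_{s(v)}$; then for each $i$ the image $\pi(C_i)$ meets both $\alpha\setminus\beta$ and $\beta\setminus\alpha$, so $\pi(C_i)\not\subseteq\alpha$ and $\pi(C_i)\not\subseteq\beta$, meaning every disjunct of $\tau_k$ fails. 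For the backward direction, assuming no NAE-assignment exists, given any $\pi\colon\{v_i\}\to A$ define $s(v)=0$ when $\pi(v)\in\alpha$ and $s(v)=1$ otherwise (so necessarily $\pi(v)\in\beta\setminus\alpha$); some clause $C_i$ must be monochromatic under $s$, giving $\pi(C_i)\subseteq\alpha$ (case all $0$) or $\pi(C_i)\subseteq\beta$ (case all $1$), which satisfies $\tau_k$ at disjunct~$i$. The reduction is polynomial and co-NP-hardness follows. The one step requiring care is the invocation of \cite{ZhukGap2015} providing universal $\alpha\beta$-projectivity; the only other subtlety, the possible overlap $\alpha\cap\beta\ne\emptyset$, is resolved cleanly by pushing $\alpha\cap\beta$ into the $0$-class in the backward direction.
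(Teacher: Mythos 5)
There is a genuine gap in your first step, and it is precisely the step you flag as ``requiring care''. EGP does \emph{not} amount to saying that for \emph{every} pair of strict subsets $\alpha,\beta\subsetneq A$ with $\alpha\cup\beta=A$ every polymorphism is $\alpha\beta$-projective; the correct quantification (Zhuk, \cite{ZhukGap2015}) is existential in the pair: there \emph{exists} a specific pair $\alpha,\beta$ for which all term operations are $\alpha\beta$-projective (equivalently, Lemma~11 there directly gives that $\sigma_k\in\mathrm{Inv}(\mathbb{A})$ for all $k$ for that pair). Your universally quantified version is false: on $A=\{0,1,2\}$ take the idempotent binary $m$ with $m(0,1)=m(1,0)=2$ and $m(x,y)=x$ otherwise; every operation in the clone it generates is $\{0,2\}\{1,2\}$-projective (so the algebra has EGP), yet $m$ is not $\{0,1\}\{1,2\}$-projective, and indeed $m$ fails to preserve the relation $\rho$ built from $\alpha'=\{0,1\}$, $\beta'=\{1,2\}$ (apply $m$ to $(0,1)$ and $(0,2)$ to land on $(2,0)\notin\rho$). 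Consequently, if you ``fix any such pair'' --- in particular your suggested $\alpha=A\setminus\{a\}$, $\beta=A\setminus\{b\}$ --- the relations $\tau_k$ you build need not lie in $\mathrm{Inv}(\mathbb{A})$ at all, and your reduction would be producing instances outside the constraint language. Your own verification that $\tau_k\in\mathrm{Inv}(\mathbb{A})$ silently relies on the false universal premise.

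The repair is local: take $\alpha,\beta$ to be the pair furnished by EGP rather than an arbitrary one, exactly as the paper does by invoking Lemma~11 of \cite{ZhukGap2015} to get $\sigma_k\in\mathrm{Inv}(\mathbb{A})$ (and then $\tau_k$ via Lemma~\ref{lem:new-revision}, or directly via your preservation argument, which the paper also records as an alternative). With that substitution the remainder of your argument --- the $O(k)$ DNF encoding of $\tau_k$, the reduction from the complement of 3NAESAT, and the two directions of correctness using $\alpha\setminus\beta$, $\beta\setminus\alpha$ and the don't-care treatment of $\alpha\cap\beta$ --- coincides with the paper's proof and is correct. Note also that $\alpha\setminus\beta\neq\emptyset$ and $\beta\setminus\alpha\neq\emptyset$ hold automatically for any valid pair (if $\alpha\subseteq\beta$ then $\beta=A$), so no special choice of pair is needed to secure that.
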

\begin{proof}
We know from Lemma 11 in \cite{ZhukGap2015} that there exist $\alpha,\beta$ strict subsets of $A$ so that $\alpha \cup \beta = A$ and the relation $\sigma_k$ is in $\mathrm{Inv}(\mathbb{A})$, for each $k \in \mathbb{N}$. From Lemma~\ref{lem:new-revision}, we know also that $\tau_k$  is in $\mathrm{Inv}(\mathbb{A})$, for each $k \in \mathbb{N}$.

We will next argue that $\tau_k$ enjoys a relatively small specification in DNF (at least, polynomial in $k$). We first give such a specification for $\rho'(x,y,z)$.
\[ \rho'(x,y,z):= \bigvee_{a,a',a'' \in \alpha} x=a \wedge y=a' \wedge z=a'' \vee \bigvee_{b,b',b'' \in \beta} x=b \wedge y=b' \wedge z=b''\]
which is constant in size when $A$ is fixed. Now it is clear from the definition that the size of $\tau_n$ is polynomial in $n$.

We will now give a very simple reduction from the complement of 3NAESAT to QCSP$(\mathrm{Inv}(\mathbb{A}))$. 3NAESAT is well-known to be NP-complete \cite{Papa} and our result will follow.

Take an instance $\phi$ of 3NAESAT which is the existential quantification of a conjunction of $k$ atoms $\mathrm{NAE}(x,y,z)$. Thus $\neg \phi$ is the universal quantification of a disjunction of $k$ atoms $x=y=z$. We build our instance $\psi$ of QCSP$(\mathrm{Inv}(\mathbb{A}))$ from $\neg \phi$ by transforming the quantifier-free part $x_1=y_1=z_1 \vee \ldots \vee x_k=y_k=z_k$ to $\tau_k=\rho'(x_1,y_1,z_1) \vee \ldots \vee \rho'(x_k,y_k,z_k)$.

($\neg \phi \in \mathrm{co\mbox{-}3NAESAT}$ implies $\psi \in \mathrm{QCSP}(\mathrm{Inv}(\mathbb{A}))$.) From an assignment to the universal variables $v_1,\ldots,v_m$ of $\psi$ to elements $x_1,\ldots,x_m$ of $A$, consider elements $x'_1,\ldots,x'_m \in \{0,1\}$ according to 
\begin{itemize}
\item $x_i \in \alpha \setminus \beta$ implies $x'_i=0$, 
\item $x_i \in \beta \setminus \alpha$ implies $x'_i=1$, and
\item $x_i \in \alpha \cap \beta$ implies we don't care, so \mbox{w.l.o.g.} say $x'_i=0$.
\end{itemize}
The disjunct that is satisfied in the quantifier-free part of $\neg \phi$ now gives the corresponding disjunct that will be satisfied in $\tau_k$.

($\psi \in \mathrm{QCSP}(\mathrm{Inv}(\mathbb{A}))$ implies $\neg \phi \in \mathrm{co\mbox{-}3NAESAT}$.) From an assignment to the universal variables $v_1,\ldots,v_m$ of $\neg \phi$ to elements $x_1,\ldots,x_m$ of $\{0,1\}$, consider elements $x'_1,\ldots,x'_m \in A$ according to 
\begin{itemize}
\item $x_i=0$ implies $x'_i$ is some arbitrarily chosen element in $\alpha \setminus \beta$, and
\item $x_i=1$ implies $x'_i$ is some arbitrarily chosen element in $\beta \setminus \alpha$.
\end{itemize}
The disjunct that is satisfied in $\tau_k$ now gives the corresponding disjunct that will be satisfied in the quantifier-free part of $\neg \phi$.
\end{proof}
\noindent The demonstration of co-NP-hardness in the previous theorem was inspired by a similar proof in \cite{BodirskyChenSICOMP}. Note that an alternative proof that $\tau_k$ is in $\mathrm{Inv}(\mathbb{A})$ is furnished by the observation that it is preserved by all $\alpha\beta$-projections (see \cite{ZhukGap2015}). We note surprisingly that co-NP-hardness in Theorem~\ref{thm:hard} is optimal, in the sense that some (but not all!) of the cases just proved co-NP-hard are also in co-NP.
\begin{proposition}
Let $\alpha,\beta$ be strict subsets of $A:=\{a_1,\ldots,a_n\}$ so that $\alpha \cup \beta = A$ and $\alpha \cap \beta \neq \emptyset$. Then QCSP$(A;\{\tau_k:k \in \mathbb{N}\},a_1,\ldots,a_n)$ is in co-NP.
\label{prop:coNP1}
\end{proposition}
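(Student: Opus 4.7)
The plan is to reduce truth of a pH-sentence in $\QCSP(A;\{\tau_k:k\in\mathbb{N}\},a_1,\ldots,a_n)$ to truth of a purely universal sentence by instantiating every existential variable to a fixed $c\in\alpha\cap\beta$ (which exists by hypothesis, and is named by some constant in the signature). Given an input pH-sentence $\Phi = Q_1 v_1 \cdots Q_m v_m\,\psi$, where $\psi$ is a conjunction of $\tau_k$-atoms over variables and constants, define $\Phi^{\ast}$ to be the $\Pi_1$-pH-sentence obtained from $\Phi$ by deleting every existential quantifier from the prefix and substituting the constant $c$ for every occurrence of each existentially quantified variable in $\psi$. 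The key claim is
\[ \mathcal{A}\models \Phi \quad \iff \quad \mathcal{A}\models \Phi^{\ast}. \]

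The backward direction is immediate: a winning strategy for $\Phi^{\ast}$ can be read as a (constant) Skolem strategy for $\Phi$ assigning every existential to $c$. For the forward direction, assume a Skolem strategy $\sigma$ witnesses $\mathcal{A}\models \Phi$, fix any assignment $\pi$ to the universal variables, and consider an atom $\tau_k(t_1,\ldots,t_{3k})$ of $\psi$. Under $(\pi,\sigma)$ this atom is satisfied, so some triple $(t_{3i+1},t_{3i+2},t_{3i+3})$ evaluates inside $\rho' = \alpha^3\cup\beta^3$; without loss of generality inside $\alpha^3$. Replacing $\sigma$ by the trivial strategy leaves the universal and constant entries of the triple unchanged, while each existential entry is replaced by $c\in\alpha\cap\beta\subseteq\alpha$; so all three entries remain in $\alpha$, and the triple still lies in $\rho'$. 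Hence the atom remains true under $(\pi,\mathrm{trivial})$. Since this argument is uniform in the atom and in $\pi$, we obtain $\mathcal{A}\models \Phi^{\ast}$.

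The membership in co-NP now follows, since $\Phi^{\ast}$ is a $\Pi_1$-pH-sentence whose truth in $\mathcal{A}$ is decidable in co-NP: its negation is witnessed by an assignment to the universal variables, together with a pointer to some $\tau_k$-atom of $\psi$ which fails under that assignment (and the trivial choice $c$ for existentials). Guessing the assignment is polynomial in the number of universal variables, and verifying failure of a single $\tau_k$-atom amounts to scanning its $k$ triples and testing membership in the fixed relation $\rho'$, which is polynomial in the size of the atom.

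The one delicate point is the forward direction of the equivalence $\Phi \iff \Phi^{\ast}$, and I expect that to be the only real obstacle: everything hinges on the fact that $c$ belongs to \emph{both} $\alpha$ and $\beta$, so substituting $c$ can never spoil an already-satisfied $\rho'$-triple, no matter which of the two ``sides'' of $\rho'$ the triple happened to realise. If $\alpha\cap\beta$ were empty, this step would fail, which is consistent with Theorem~\ref{thm:hard}, where the case $\alpha\cap\beta=\emptyset$ is already used to secure co-NP-hardness.
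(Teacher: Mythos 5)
Your core idea --- substitute a fixed $c\in\alpha\cap\beta$ for every existential variable and check the resulting purely universal sentence in co-NP --- is exactly the paper's, and your justification of why the substitution cannot spoil an already-satisfied $\rho'$-triple (because $c$ lies in both $\alpha$ and $\beta$, so the triple stays inside whichever of $\alpha^3$ or $\beta^3$ it occupied) is the right one. There is, however, a genuine gap: you assume the quantifier-free part $\psi$ is a conjunction of $\tau_k$-atoms only, whereas pH-sentences in this paper may also contain equality atoms, including equalities against the named constants $a_1,\ldots,a_n$. For such atoms your key claim $\mathcal{A}\models\Phi\iff\mathcal{A}\models\Phi^{\ast}$ is false as stated: $\exists v\,(v=a)$ is true for any constant $a$, but its image under your substitution is $c=a$, which fails whenever $a\neq c$; likewise $\forall w\,\exists v\,(v=w)$ is true while $\forall w\,(c=w)$ is false once $|A|>1$.

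The paper closes exactly this hole with a polynomial-time preprocessing phase carried out before the substitution: an atom $v=a$ with $v$ universal makes the input false outright (since $|A|>1$); with $v$ existential the atom is either discarded together with $v$ (if $v$ occurs in no non-equality atom), detected as contradictory against a second atom $v=a'$ with $a'\neq a$, or propagated by substituting $a$ for $v$ throughout. Only once the instance is free of equality atoms does one instantiate the remaining existential variables at some $a\in\alpha\cap\beta$. With that preprocessing step added (and a remark that variable--variable equalities are similarly propagated out by renaming), your argument coincides with the paper's proof; without it, the reduction to a $\Pi_1$-sentence is unsound on a class of legal inputs.
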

\begin{proof}
Assume $|A|>1$, \mbox{i.e.} $n>1$ (note that the proof is trivial otherwise).
Let $\phi$ be an input to QCSP$(A;\{\tau_k:k \in \mathbb{N}\},a_1,\ldots,a_n)$. We will now seek to eliminate atoms $v=a$ ($a \in \{a_1,\ldots,a_n\}$) from $\phi$. Suppose $\phi$ has an atom $v=a$. If $v$ is universally quantified, then $\phi$ is false (since $|A|>1$). Otherwise, either the atom $v=a$ may be eliminated with the variable $v$ since $v$ does not appear in a non-equality relation; or $\phi$ is false because there is another atom $v=a'$ for $a\neq a'$; or $v=a$ may be removed by substitution of $a$ into all non-equality instances of relations involving $v$. This preprocessing procedure is polynomial and we will assume \mbox{w.l.o.g.} that $\phi$ contains no atoms $v=a$. We now argue that $\phi$ is a yes-instance iff $\phi'$ is a yes-instance, where $\phi'$ is built from $\phi$ by instantiating all existentially quantified variables as any $a \in \alpha \cap \beta$. The universal $\phi'$ can be evaluated in co-NP (one may prefer to imagine the complement as an existential $\neg \phi' $ to be evaluated in NP) and the result follows.
\end{proof}
In fact, this being an algebraic paper, we can even do better. Let $\mathcal{B}$ signify a set of relations on a finite domain but not necessarily itself finite. For convenience, we will assume the set of relations of $\mathcal{B}$ is closed under all co-ordinate projections and instantiations of constants. 
Call $\mathcal{B}$ \emph{existentially trivial} if there exists an element $c \in B$ (which we call a \emph{canon}) such that for each $k$-ary relation $R$ of $\mathcal{B}$ and each $i \in [k]$, and for every $x_1,\ldots,x_k \in B$, whenever $(x_1,\ldots,x_{i-1},x_i,x_{i+1},\ldots,x_k) \in R^{\mathcal{B}}$ then also $(x_1,\ldots,x_{i-1},c,x_{i+1},\ldots,x_k) \in R^{\mathcal{B}}$. We want to expand this class to \emph{almost existentially trivial} by permitting conjunctions of the form $v=a_i$ or $v=v'$ with relations that are existentially trivial. 
\begin{lemma}
Let $\alpha,\beta$ be strict subsets of $A:=\{a_1,\ldots,a_n\}$ so that $\alpha \cup \beta = A$ and $\alpha \cap \beta \neq \emptyset$. The set of relations pp-definable in $(A;\{\tau_k:k \in \mathbb{N}\},a_1,\ldots,a_n)$ is almost existentially trivial.
\end{lemma}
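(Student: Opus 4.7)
The plan is to fix an element $c \in \alpha \cap \beta$ (nonempty by hypothesis) as the designated canon. The first step is to verify that each $\tau_k$ is by itself existentially trivial with canon $c$: if $(x_1,y_1,z_1,\ldots,x_k,y_k,z_k) \in \tau_k$ because $\rho'(x_i,y_i,z_i)$ holds, then $(x_i,y_i,z_i)$ lies in $\alpha^3 \cup \beta^3$, and replacing any single coordinate by $c$ either leaves that witnessing triple untouched or substitutes $c$ for one of $x_i,y_i,z_i$; since $c$ lies in both $\alpha$ and $\beta$, the modified triple remains in $\alpha^3$ or $\beta^3$, so $\tau_k$ is still satisfied.

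I would then record two elementary closure properties of existential triviality with a fixed canon $c$. Conjunction is immediate, since changing position $\ell$ of a tuple to $c$ preserves each conjunct separately. For existential quantification, if $R(\bar x, y)$ is existentially trivial with canon $c$ and a witness $y_0$ makes $R(\bar x, y_0)$ hold, then the same $y_0$ witnesses $R(\bar x', y_0)$ after replacing $x_\ell$ by $c$. So pp-formulas built from existentially trivial relations with a common canon remain existentially trivial, as long as no equality atoms involving free variables get in the way.

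Given an arbitrary pp-formula $R(\bar x) := \exists \bar y\, \phi(\bar x,\bar y)$, where $\phi$ is a conjunction of $\tau_{k_j}$-atoms together with constant atoms $v = a_i$ and equality atoms $v = v'$, I would preprocess $\phi$ by eliminating every equality atom touching an existential variable: $y = y'$ by renaming, $y = x_k$ by substituting $x_k$ for $y$, and $y = a_i$ by substituting $a_i$ for $y$; any clash among the substitutions makes $R$ empty. What remains is a conjunction of (modified) $\tau_{k_j}$-atoms together with equality atoms $x_k = a_i$ and $x_k = x_{k'}$ involving only free variables. By the first two steps the residual $\exists \bar y \bigwedge_j \tau_{k_j}(\cdot)$ is existentially trivial with canon $c$, while the free-variable equalities constitute exactly the ``almost'' conjuncts permitted by the definition, giving the decomposition we want.

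The main obstacle is the careful bookkeeping of the preprocessing step — verifying that every equality atom involving an existential variable can be cleanly substituted out (and that the resulting $\tau$-atoms, perhaps with repeated occurrences of some $x_k$, are still existentially trivial as a joint system) without disturbing the rest of the formula. The substantive algebraic content, however, is concentrated in the first paragraph: the crucial input is that $\alpha \cap \beta$ is nonempty, which supplies a single canon compatible with both halves of each $\rho'$ disjunct and hence with every $\tau_k$ simultaneously.
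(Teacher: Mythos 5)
Your proof is correct and follows essentially the same route as the paper's: substitute away equalities involving existential variables, then use the fact that the canon $c\in\alpha\cap\beta$ makes each $\tau_k$ existentially trivial to discharge the remaining existential quantifiers, leaving free-variable equalities as the ``almost'' part. If anything, your explicit closure lemmas (under conjunction, existential quantification, and coordinate identification/instantiation) are more careful than the paper's terse remark that the atoms containing existential variables ``will always be satisfied'' after instantiation at $c$ --- which is literally only true for atoms all of whose variables are existential; for mixed atoms one needs exactly your observation that the residual relation is still existentially trivial.
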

\begin{proof}
Consider a formula with a pp-definition in $(A;\{\tau_k:k \in \mathbb{N}\},a_1,\ldots,a_n)$. We assume that only free variables appear in equalities since otherwise we can remove these equalities by substitution. Now existential quantifiers can be removed and their variables instantiated as the canon $c$. Indeed, their atoms $\tau_n$ may now be removed since they will always be satisfied. Thus we are left with a conjunction of equalities and atoms $\tau_n$, and the result follows.
\end{proof}
\begin{proposition}
If $\mathcal{B}$ is comprised exclusively of relations that are almost existentially trivial, then QCSP$(\mathcal{B})$ is in co-NP under the \textbf{DNF encoding}.
\label{prop:coNP2}
\end{proposition}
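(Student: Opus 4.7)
The plan is to refine the preprocessing used in Proposition~\ref{prop:coNP1} to handle almost existentially trivial relations in general. Given an input pH-sentence $\phi = Q_1 w_1 \cdots Q_n w_n \bigwedge_{i=1}^t R_i(\bar{z}_i)$ with each $R_i \in \mathcal{B}$ presented in DNF, and writing $c$ for the canon, I would first extract from each atom $R_i(\bar{z}_i)$ its equality part $E_i$: the set of equations $v_j = v_k$ and $v_j = a$ that hold in every tuple of $R_i$. Because $R_i$ is in DNF, each such candidate equation is verified by inspecting every disjunct (hypercube), which is polynomial. By definition of almost existentially trivial, after identifying variables and instantiating constants forced by $E_i$, the residual relation $R'_i$ is existentially trivial with canon $c$.

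Next I would consolidate the combined equality system $\bigcup_i E_i$ by union-find on $w_1,\ldots,w_n$ with constant propagation, respecting the quantifier prefix. The sentence $\phi$ is immediately false in the pathological cases (two distinct constants forced equal, a universally quantified variable forced to a constant or merged with another universally quantified variable, or an existential variable merged with a universal variable bound strictly later in the prefix). Otherwise, we obtain an equivalent pH-sentence $\phi' = \forall \bar{x} \exists \bar{y} \bigwedge_i R'_i(\bar{z}'_i)$ whose atoms are all existentially trivial.

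I would then prove that $\phi'$ is equivalent to the purely universal sentence $\phi'' = \forall \bar{x} \bigwedge_i R'_i(\bar{z}'_i[\bar{y} := c])$ obtained by substituting the canon $c$ for every existential variable. The direction $\phi'' \Rightarrow \phi'$ is immediate by taking $\bar{y} = \bar{c}$; for $\phi' \Rightarrow \phi''$, fix any assignment to $\bar{x}$, take a witness $\bar{y}$ from $\phi'$, and successively replace each existential coordinate by $c$, using existential triviality of each $R'_i$ to preserve membership in every atom. Finally $\phi''$ is a universal pH-sentence with DNF-encoded atoms, whose negation has a polynomial-size witness (an assignment to $\bar{x}$ together with an index $i$ of a failing atom) that is polynomial-time verifiable, since DNF membership on a ground tuple is polynomial. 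Hence $\phi$ lies in co-NP.

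The main obstacle I anticipate is justifying that after the equality preprocessing the residual atoms $R'_i$ really are existentially trivial in the positions now occupied by the surviving existential variables, so that the canon substitution in the third step is sound; this relies critically on the particular decomposition built into the definition of almost existentially trivial, and on having propagated \emph{all} forced equalities before substituting. The DNF hypothesis is used essentially in extracting the equality part of each $R_i$ in polynomial time, as the analogous extraction from CNF is itself co-NP hard; the ground evaluation in the final step is harmless under either encoding.
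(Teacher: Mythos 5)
Your proposal is correct and follows essentially the same route as the paper's proof: use the DNF encoding to extract, in polynomial time, the equalities and constant-forcings hidden in each atom, propagate them (detecting the failure cases involving universal variables), observe that the residual atoms are existentially trivial, and then instantiate all surviving existential variables at the canon $c$, leaving a purely universal sentence decidable in co-NP. You are in fact somewhat more explicit than the paper about the pathological cases of the equality propagation and about the one-coordinate-at-a-time justification of the canon substitution.
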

\begin{proof}
The argument here is quite similar to that of Proposition~\ref{prop:coNP1} except that there is some additional preprocessing to find out variables that are forced in some relation to being a single constant or pairs of variables within a relation that are forced to be equal. In the first instance that some variable is forced to be constant in a $k$-ary relation, we should replace with the $(k-1)$-ary relation with the requisite forcing. In the second instance that a pair of variables are forced equal then we replace again the $k$-ary relation with a $(k-1)$-ary relation as well as an equality. Note that projecting a relation to a single or two co-ordinates can be done in polynomial time because the relations are encoded in DNF. After following these rules to their conclusion one obtains a conjunction of equalities together with relations that are existentially trivial. Now is the time to propagate variables to remove equalities (or find that there is no solution). Finally, when only existentially trivial relations are left, all remaining existential variables may be evaluated to the canon $c$.
\end{proof}
\begin{corollary}
Let $\alpha,\beta$ be strict subsets of $A:=\{a_1,\ldots,a_n\}$ so that $\alpha \cup \beta = A$ and $\alpha \cap \beta \neq \emptyset$. Then QCSP$(\mathrm{Inv}(\mathrm{Pol}(A;\{\tau_k:k \in \mathbb{N}\},a_,\ldots,a_n)))$ is in co-NP under the \textbf{DNF encoding}.
\label{cor:coNP2}
\end{corollary}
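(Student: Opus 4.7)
The plan is to stitch together the immediately preceding lemma with Proposition~\ref{prop:coNP2}, using the standard Inv--Pol Galois correspondence as the connecting tissue.

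First, because the domain $A$ is finite and the signature $(A;\{\tau_k:k\in\mathbb{N}\},a_1,\ldots,a_n)$ contains constants for every element, the Bodnarchuk--Kaluzhnin--Kotov--Romov / Geiger theorem identifies $\mathrm{Inv}(\mathrm{Pol}(A;\{\tau_k:k\in\mathbb{N}\},a_1,\ldots,a_n))$ with the set of relations that are pp-definable over $(A;\{\tau_k:k\in\mathbb{N}\},a_1,\ldots,a_n)$. The only mild point to make explicit is that, because the signature is infinite, every individual pp-definition witnessing membership in the closure mentions only finitely many of the $\tau_k$; this suffices for what follows, as each input relation is handled on its own.

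Next, apply the preceding lemma directly: every relation pp-definable over $(A;\{\tau_k:k\in\mathbb{N}\},a_1,\ldots,a_n)$ is almost existentially trivial, with the canon $c \in \alpha \cap \beta$ (which is non-empty by hypothesis) serving uniformly for the whole class. Combined with the previous paragraph, this tells us that the entire (infinite) constraint language $\mathcal{B}:=\mathrm{Inv}(\mathrm{Pol}(A;\{\tau_k:k\in\mathbb{N}\},a_1,\ldots,a_n))$ consists of almost existentially trivial relations. Since pp-definable relations are automatically closed under coordinate projections and substitution of constants, $\mathcal{B}$ also satisfies the closure hypothesis that Proposition~\ref{prop:coNP2} imposes.

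Finally, Proposition~\ref{prop:coNP2} applies verbatim, giving QCSP$(\mathcal{B})$ in co-NP under the DNF encoding. The one place where care is needed---and probably the only real obstacle---is checking that the preprocessing in Proposition~\ref{prop:coNP2} remains polynomial when input relations arrive as DNF formulas over $\mathcal{B}$ rather than as explicit tuple listings. Detecting whether a coordinate of a given relation is forced to a single constant, or whether two coordinates of a relation are forced equal, reduces to a straightforward pass over the disjuncts of the DNF, each of which already constrains its variables to specific values (or leaves them free); both tests are therefore polynomial in the size of the DNF input. Once preprocessing is complete, the remaining existential variables are instantiated at $c$ and the resulting universal pH-sentence is evaluated in co-NP, exactly as in the proof of Proposition~\ref{prop:coNP2}.
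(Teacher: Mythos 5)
Your proposal is correct and follows essentially the same route the paper intends: the corollary is meant to be read off directly from the Galois correspondence (identifying $\mathrm{Inv}(\mathrm{Pol}(\cdot))$ with the pp-definable relations over the finite domain), the preceding lemma establishing that these relations are almost existentially trivial, and Proposition~\ref{prop:coNP2}. Your extra remarks on the finiteness of each individual pp-definition and on the polynomiality of the DNF preprocessing are sound and only make explicit what the paper leaves implicit.
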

This last result, together with its supporting proposition, is the only time we seem to require the ``nice, simple'' DNF encoding, rather than arbitrary propositional logic. We do not require DNF for Proposition~\ref{prop:coNP1} as we have just a single relation in the signature for each arity and this is easy to keep track of. We note that the set of relations $\{\tau_k:k \in \mathbb{N}\}$ is not maximal with the property that with the constants it forms a co-clone of existentially trivial relations. One may add, for example, $(\alpha \times \beta) \cup (\beta \times \alpha)$.

The following, together with our previous results, gives the refutation of the Alternative Chen Conjecture.
\begin{proposition}
Let $\alpha,\beta$ be strict subsets of $A:=\{a_1,\ldots,a_n\}$ so that $\alpha \cup \beta = A$ and $\alpha \cap \beta \neq \emptyset$. Then, for each finite signature reduct $\mathcal{B}$ of $(A;\{\tau_k:k \in \mathbb{N}\},a_1,\ldots,a_n)$, QCSP$(\mathcal{B})$ is in NL.
\label{prop:finite-NL}
\end{proposition}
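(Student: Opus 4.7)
The plan is to exploit the finiteness of the signature of $\mathcal{B}$, which bounds the arities of its relations, and then to combine this with the preprocessing already used in the proof of Proposition~\ref{prop:coNP1} to reduce the problem to checking constantly-many constantly-sized universal subformulas.

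Since $\mathcal{B}$ is a finite-signature reduct of $(A;\{\tau_k:k\in\mathbb{N}\},a_1,\ldots,a_n)$, there is a constant $K$ such that every non-constant relation symbol of $\mathcal{B}$ is some $\tau_k$ with $k \leq K$. In particular, every atom of every input instance $\phi$ has arity at most $3K$. First I would apply the equality-elimination step from Proposition~\ref{prop:coNP1}: for each atom $v=a_i$, reject if $v$ is universal; otherwise record or confirm the forced value of the existential variable $v$, rejecting on conflicts $v=a_i$ and $v=a_{i'}$ with $a_i\neq a_{i'}$. This can be carried out in logspace by scanning the input once per variable and remembering only a single candidate value. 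Every remaining existentially quantified variable is then set to the canon $c\in \alpha\cap\beta$; as in Proposition~\ref{prop:coNP1}, this substitution preserves the truth value of $\phi$.

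The resulting sentence has the form $\forall \bar v\,\bigwedge_j \tau_{k_j}(\bar w_j)$, where each $\bar w_j$ contains at most $3K$ universal variables together with constants. Since universal quantification distributes over conjunction, this is equivalent to $\bigwedge_j \forall \bar w_j\,\tau_{k_j}(\bar w_j)$. Each conjunct can be checked by brute force: there are at most $|A|^{3K}=O(1)$ assignments to $\bar w_j$, and verifying that $\tau_{k_j}$ holds on a fixed assignment takes constant time. The overall decision procedure iterates through the atoms with a logspace pointer, running the constant-size check for each one and accepting if all succeed. The total space usage is $O(\log n)$, so QCSP$(\mathcal{B})\in\mathrm{L}\subseteq \mathrm{NL}$.

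The main technical care point is the preprocessing: one must make sure that propagating the equalities $v=a_i$ and recording the forced value of each existential variable can be done without storing a global assignment table. This is achieved by recomputing the forced value of each variable on demand from the input (rescanning for its $v=a_i$ atoms), which keeps the workspace logarithmic. Everything after this preprocessing is a constant-size check per atom, so the NL bound follows immediately; there is no need to invoke the more delicate arguments used for the co-NP upper bound in Proposition~\ref{prop:coNP1} or Corollary~\ref{cor:coNP2}.
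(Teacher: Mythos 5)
Your proof is correct in substance but takes a genuinely different route from the paper's. The paper proves that the finite reduct $\mathcal{B}$ (via the pp-definition of $\tau_i$ in $\sigma_i$ from Lemma~\ref{lem:new-revision}) admits a $(3m+1)$-ary near-unanimity polymorphism, chosen to send every tuple not forced by the near-unanimity criterion to an element of $\alpha\cap\beta$; by \cite{hubie-sicomp} this makes $\mathcal{B}$ collapsible, so QCSP$(\mathcal{B})$ reduces to polynomially many instances of CSP$(\mathcal{B})$, and the NL bound then comes from the near-unanimity polymorphism on the CSP side. You instead push the elementary method of Propositions~\ref{prop:coNP1} and~\ref{prop:Chen-fails} to its conclusion: eliminate equalities, send every surviving existential variable to the canon $c\in\alpha\cap\beta$ (valid because each $\tau_k$ is existentially trivial), distribute the universal quantifiers over the conjunction, and brute-force each atom over $|A|^{3K}=O(1)$ assignments. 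The paper itself remarks, immediately after Proposition~\ref{prop:Chen-fails}, that this style of argument yields an alternative proof placing the problem in P; your observation that in the finite-signature case the per-atom check has constant size, so the whole procedure runs in logarithmic space, is a genuine sharpening (to L, not merely NL). What the paper's route buys is the algebraic explanation --- the finite reducts are collapsible, which is thematically the point of the section --- whereas your route is self-contained and avoids invoking collapsibility or the NL-membership of CSPs with near-unanimity polymorphisms. One care point, which you share with the paper's own Proposition~\ref{prop:coNP1}: pH-sentences also permit equality atoms $v=v'$ between two variables, and these must be propagated out before the canon substitution, since an existential variable equated to a later universal one cannot be sent to $c$ (and two equated universals falsify the sentence when $|A|>1$). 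This amounts to an undirected-reachability computation on the equality graph together with a check of quantifier order, which still fits comfortably in NL, but your write-up should say so explicitly rather than treating only the atoms $v=a_i$.
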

\begin{proof}
We will assume $\mathcal{B}$ contains all constants (since we prove this case gives a QCSP in NL, it naturally follows that the same holds without constants). Take $m$ so that, for each $\tau_i \in \mathcal{B}$, $i\leq m$. Recall from Lemma~\ref{lem:new-revision} that $\tau_i$ is pp-definable in $\sigma_i$. We will prove that the structure $\mathcal{B}'$ given by $(A;\{\sigma_k:k \leq m \},a_1,\ldots,a_n)$ admits a $(3m+1)$-ary near-unanimity operation $f$ as a polymorphism, whereupon it follows that $\mathcal{B}$ admits the same near-unanimity polymorphism. We choose $f$ so that all tuples whose map is not automatically defined by the near-unanimity criterion map to some arbitrary $a \in \alpha \cap \beta$. To see this, imagine that this $f$ were not a polymorphism. Then some $(3m+1)$ $m$-tuples in $\sigma_i$ would be mapped to some tuple not in $\sigma_i$ which must be a tuple $\overline{t}$ of elements from $(\alpha \setminus \beta) \cup (\beta \setminus \alpha)$. Note that column-wise this map may only come from $(3m+1)$-tuples that have $3m$ instances of the same element. By the pigeonhole principle, the tuple $\overline{t}$ must appear as one of the $(3m+1)$ $m$-tuples in $\sigma_i$ and this is clearly a contradiction.

It follows from \cite{hubie-sicomp} that QCSP$(\mathcal{B})$ reduces to a polynomially bounded ensemble of ${n \choose 3m} \cdot n \cdot n^{3m}$ instances CSP$(\mathcal{B})$, and the result follows.
\end{proof}

\subsection{The question of the tuple-listing encoding}

\begin{proposition}
Let $\alpha:=\{0,1\}$ and $\beta:=\{0,2\}$. Then, QCSP$(\{0,1,2\};\{\tau_k:k \in \mathbb{N}\},0,1,2)$ is in P under the \textbf{tuple-listing encoding}.
\label{prop:Chen-fails}
\end{proposition}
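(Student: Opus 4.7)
The plan rests on two ingredients: that under the tuple-listing encoding the arity parameter $k$ of any $\tau_k$ occurring in the input must be $O(\log n)$ (where $n$ is the input size), and that the element $0 \in \alpha \cap \beta$ is ``absorbing'' for $\rho'$.

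First I would record the absorption property: if $(x,y,z) \in \rho'$ then replacing any subset of its coordinates by $0$ yields a tuple still in $\rho'$. This is immediate from $\rho' = \{0,1\}^3 \cup \{0,2\}^3$ and the fact that $0$ belongs to both $\{0,1\}$ and $\{0,2\}$: a tuple lying in $\alpha^3$ (resp.\ $\beta^3$) remains in $\alpha^3$ (resp.\ $\beta^3$) after any such substitution. The property lifts to $\tau_k$: if an assignment satisfies some disjunct $\rho'(x_i,y_i,z_i)$, overwriting any of its coordinates by $0$ leaves the same disjunct satisfied, hence $\tau_k$ is still satisfied. Consequently, if an input sentence $\Phi$ admits any winning $\exists$-strategy, then the \emph{constant-$0$ strategy} (every existentially quantified variable evaluated to $0$, irrespective of the prior universal moves) is also winning.

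Second, I would preprocess equality atoms exactly as in the proof of Proposition~\ref{prop:coNP1}, so as to end up with a quantified conjunction of $\tau_{k_i}$-atoms only. Substituting $0$ for every remaining existential variable (valid by the previous step) reduces the question to the validity of a universal sentence $\forall \mathbf{u} \bigwedge_i \tau_{k_i}(\bar v_i)$. Since $\forall$ commutes with $\wedge$, this splits into independent checks, one per atom: does $\tau_{k_i}(\bar v_i)$ hold for every assignment of its universal variables to $A$?

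The final step exploits the encoding: because $\tau_k$ is presented as a list of all $|\tau_k| = 27^k - 12^k$ of its tuples, each of length $3k$, any $\tau_{k_i}$ contributing to an input of size $n$ must satisfy $k_i = O(\log n)$. The $i$-th check therefore involves at most $3 k_i = O(\log n)$ universal variables, so only $3^{3 k_i} = 27^{k_i} = O(n)$ assignments need to be enumerated; each one is resolved by a single lookup in the supplied tuple-list of $\tau_{k_i}$. The whole procedure is polynomial in $n$, so QCSP$(\{0,1,2\};\{\tau_k:k\in\mathbb{N}\},0,1,2)$ lies in P.

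I do not foresee a significant obstacle: the absorption observation is a direct consequence of the shape of $\rho'$, and the commutativity of $\forall$ with $\wedge$ is standard. The only bookkeeping concerns the equality preprocessing, which we inherit from Proposition~\ref{prop:coNP1}.
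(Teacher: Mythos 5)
Your proposal is correct and follows essentially the same route as the paper's proof: both reduce, via the preprocessing and the $\alpha\cap\beta$-instantiation of existential variables from Proposition~\ref{prop:coNP1}, to a purely universally quantified conjunction of $\tau_{k_i}$-atoms, distribute $\forall$ over $\wedge$, and brute-force each atom using the fact that the tuple-listing encoding forces $k_i=O(\log n)$. Your explicit absorption argument merely unpacks the step the paper inherits from Proposition~\ref{prop:coNP1}, and your counting ($27^{k_i}=O(n)$ versus the paper's cruder $3^{3m}\leq n^3$) only sharpens the polynomial bound.
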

\begin{proof}
Consider an instance $\phi$ of this QCSP of size $n$ involving relation $\tau_m$ but no relation $\tau_k$ for $k>m$. The number of tuples in $\tau_m$ is $>3^m$. 
Following Proposition~\ref{prop:coNP1} together with its proof, we may assume that the instance is strictly universally quantified over a conjunction of atoms (involving also constants). Now, a universally quantified conjunction is true iff the conjunction of its universally quantified atoms is true. We can further say that there are at most $n$ atoms each of which involves at most $3m$ variables. Therefore there is an exhaustive algorithm that takes at most $O(n\cdot 3^{3m})$ steps which is $O(n^4)$.
\end{proof}
\noindent The proof of Proposition~\ref{prop:Chen-fails} suggests an alternative proof of Proposition~\ref{prop:finite-NL}, but placing the corresponding QCSP in P instead of NL.
Proposition~\ref{prop:Chen-fails} shows that Chen's Conjecture fails for the tuple encoding in the sense that it provides a language $\mathcal{B}$, expanded with constants naming all the elements, so that Pol$(\mathcal{B})$ has EGP, yet QCSP$(\mathcal{B})$ is in P under the tuple-listing encoding. However, it does not imply that the algebraic approach to QCSP violates Chen's Conjecture under the tuple encoding. This is because $(\{0,1,2\};\{\tau_k:k \in \mathbb{N}\},0,1,2)$ is not of the form Inv$(\mathbb{A})$ for some idempotent algebra $\mathbb{A}$. For this stronger result, we would need to prove QCSP$(\mathrm{Inv}(\mathrm{Pol}(\{0,1,2\};\{\tau_k:k \in \mathbb{N}\},0,1,2)))$ is in P under the tuple-listing encoding. However, such a violation to Chen's Conjecture under the tuple-listing encoding is now known from \cite{ZhukM20}.

\section{Switchability, collapsability and the three-element case} 

An algebra $\mathbb{A}$ is a \emph{G-set} if its domain is not one-element and every of its operations $f$ is of the form $f(x_1, \ldots , x_k) = \pi(x_i)$ where $i \in [k]$ and $\pi$ is a permutation on A. An algebra $\mathbb{A}$ contains a G-set as a \emph{factor} if some homomorphic image of a subalgebra of $\mathbb{A}$ is a G-set. A \emph{Gap Algebra} \cite{hubie-sicomp} is a three-element idempotent algebra that omits a G-set as a factor and is not collapsible.

Let $f$ be a $k$-ary idempotent operation on domain $D$. We say $f$ is a \emph{generalised Hubie-pol} on $z_1\ldots z_k$ if, for each $i \in k$, $f(D,\ldots,D,z_i,D,\ldots,D)=D$ ($z_i$ in the $i$th position). Recall that when $z_1=\ldots=z_k=a$ this is called a Hubie-pol in $\{a\}$ and gives $(k-1)$-collapsibility from source $\{a\}$. In general, a generalised Hubie-pol does not bestow collapsibility (\mbox{e.g.} Chen's $4$-ary switchable operation $r$, below).  The name Hubie operation was used in \cite{LICS2015} for Hubie-pol and the fact that this leads to collapsibility is noted in \cite{hubie-sicomp}.

For this section  $\mathbb{A}$ is an idempotent algebra on a $3$-element domain $\{0,1,2\}:=D$. Assume $\mathbb{A}$ has precisely two subalgebras on domains $\{0,2\}$ and $\{1,2\}$ and contains the idempotent semilattice-without-unit operation $s$ which maps all tuples off the diagonal to $2$. Thus, $\mathbb{A}$ is a \emph{Gap Algebra} as defined in \cite{AU-Chen-PGP}. Note that the presence of $s$ removes the possibility to have a $G$-set as a factor. We say that $\mathbb{A}$ is $\{0,2\}\{1,2\}$-projective if for each $k$-ary $f$ in $\mathbb{A}$ there exists $i \leq k$ so that, if $x_i \in \{0,2\}$ then $f(x_1,\ldots,x_k) \in \{0,2\}$ and if $x_i \in \{1,2\}$ then $f(x_1,\ldots,x_k) \in \{1,2\}$. Let us now further assume that $\mathbb{A}$ is not $\{0,2\}\{1,2\}$-projective. This rules out the Gap Algebras that have EGP and we now know that $\mathbb{A}$ is switchable \cite{AU-Chen-PGP}. We will now consider the $4$-ary operation $r$ defined by Chen in \cite{AU-Chen-PGP}. 
\[
\begin{array}{ccc}
0111 & & 1 \\
1011 & r & 1 \\
0001 &\mapsto & 0 \\
0010 & & 0 \\
\mbox{else} & & 2.
\end{array} 
\]
Chen proved that $(D;r,s)$ is $2$-switchable but not $k$-collapsible, for any $k$ \cite{AU-Chen-PGP}. Let $f$ be a $k$-ary operation in $\mathbb{A}$ that is not $\{0,2\}\{1,2\}$-projective. Violation of $\{0,2\}\{1,2\}$-projectivity in $f$ means that for each $i \in [k]$ either 
\begin{itemize}
\item there is $x_i \in \{0,1\}$ and $x_1,\ldots,x_{i-1},x_{i+1},\ldots,x_k \in \{0,1,2\}$ so that $f(x_1,\ldots,x_k)=y \in (\{0,1\}\setminus \{x_i\})$, or
\item or $x_i=c$ and there is $x_1,\ldots,x_{i-1},x_{i+1},\ldots,x_k \in \{0,1,2\}$ so that $f(x_1,\ldots,x_k)=y \in \{0,1\}$.
\end{itemize}
Note that we can rule out the latter possibility and further assume $x_1,\ldots,x_{i-1},$ $x_{i+1},\ldots,x_k \in \{0,1\}$, by replacing $f$ if necessary by the $2k$-ary $f(s(x_1,x'_1),\ldots,$ $s(x_k,x'_k))$. Thus, we may assume that for each $i \in [k]$ there is $x_i \in \{0,1\}$ and $x_1,\ldots,x_{i-1},x_{i+1},\ldots,x_k \in \{0,1\}$ so that $f(x_1,\ldots,x_k)=y \in (\{0,1\}\setminus \{x_i\})$.

We wish to partition the $k$ co-ordinates of $f$ into those for which violation of $\{0,2\}\{1,2\}$-projectivity, on words in $\{0,1\}^k$:
\begin{itemize}
\item[$(i)$] happens with $0$ to $1$ but never $1$ to $0$.
\item[$(ii)$] happens with $1$ to $0$ but never $0$ to $1$.
\item[$(iii)$] happens on both $0$ to $1$ and $1$ to $0$.
\end{itemize}
Note that Classes $(i)$ and $(ii)$ are both non-empty (Class $(iii)$ can be empty). This is because if Class $(i)$ were empty then  $f(s(x_1,x'_1),\ldots,s(x_k,x'_k))$ would be a Hubie-pol in $\{1\}$ and if Class $(ii)$ were empty we would similarly have a Hubie-pol in $\{0\}$. We will write $k$-tuples with vertical bars to indicate the split between these classes. Suppose there exists a $\overline{z}$ so that $f(0,\ldots,0|1,\ldots,1|\overline{z}) \in \{0,1\}$. Then we can identify all the variables in one among Class $(i)$ or Class $(ii)$ to obtain a new function for which one of these classes is of size one. Note that if, e.g., Class $(i)$ is made singleton, this process may move variables previously in Class $(iii)$ into Class $(ii)$, but never to Class $(i)$. 

Thus we may assume that either Class $(i)$ or Class $(ii)$ is singleton or, for all $\overline{z}$ over $\{0,1\}$, $f(0,\ldots,0|1,\ldots,1|\overline{z}) =2$. Indeed, these singleton cases are dual and thus \mbox{w.l.o.g.} we need only prove one of them. Recall the global assumptions are in force for the remainder of the paper.

\subsection{Properties of Gap Algebras that are switchable}

\begin{lemma}
\label{lem:fun}
Any algebra over $D$ containing $f$ and $s$ is either collapsible or has binary term operations $p_1$ and $p_2$ so that 
\begin{itemize}
\item $p_1(0,1)=1$ and $p_1(1,0)=p_1(2,0)=2$, \textbf{and}
\item $p_2(0,1)=0$ and $p_2(1,0)=p_2(1,2)=2$.
\end{itemize}
\end{lemma}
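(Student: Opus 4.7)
I would proceed by case analysis on the three alternatives set up immediately before the lemma: (A) Class $(i)$ is a singleton, (B) Class $(ii)$ is a singleton, or (C) $f(0,\ldots,0 \mid 1, \ldots, 1 \mid \overline{z}) = 2$ for every $\overline{z} \in \{0,1\}^{|III|}$. Since $p_1$ and $p_2$ are mirror images of each other under the transposition $0 \leftrightarrow 1$ (which simultaneously swaps Class $(i)$ with Class $(ii)$), a single construction yields $p_1$ in cases (A) and (C), and its dual yields $p_2$ in cases (B) and (C); I will describe only the construction of $p_1$.

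The starting point is a Class $(i)$ witness: a tuple $\overline{w} \in \{0,1\}^k$ with $w_i = 0$ for some Class $(i)$ position $i$ and $f(\overline{w}) = 1$. Every Class $(ii)$ coordinate of $\overline{w}$ must equal $1$; otherwise $w_j = 0$ at $j$ in Class $(ii)$ would itself witness a Class $(i)$-style violation at $j$, a contradiction. Define a candidate
\[ p_1(x, y) := f(\tau_1(x, y), \ldots, \tau_k(x, y)), \]
with $\tau_j := x$ when $w_j = 0$ and $\tau_j := y$ when $w_j = 1$, modified by replacing selected positions by $s(x, y)$. Idempotence of $f$ immediately yields $p_1(0,0) = 0$ and $p_1(1,1) = 1$, and before the $s$-modifications $p_1(0,1) = f(\overline{w}) = 1$. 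The role of the $s$-substitutions is to send the swapped evaluation $(x,y) = (1, 0)$ to $2$: they inject the value $s(1,0) = 2$ into an $f$-argument at the $(1,0)$-point, while remaining consistent with $p_1(0,1) = 1$ precisely at positions where the resulting tuple (now containing a $2$ coming from $s(0,1)=2$) still falls into a $1$-outputting cell of $f$.

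The technical crux is to show that a valid configuration of $s$-substitutions exists which forces $p_1(1, 0) = 2$ without disturbing $p_1(0, 1) = 1$. In case (C) this is straightforward: the canonical tuple $(0,\ldots,0 \mid 1, \ldots, 1 \mid \overline{z})$ already evaluates to $2$, providing an explicit target for the substitution. In case (A) the singleton structure of Class $(i)$ constrains the value table of $f$ tightly enough for the substitution argument to go through directly; if in some sub-case no substitution succeeds, then one extracts a Hubie-pol in a singleton source and Lemma~\ref{lem:ChensLemma} yields collapsibility, matching the first disjunct of the conclusion. The condition $p_1(2, 0) = 2$ comes essentially for free by restricting the same construction to the subalgebra $\{0, 2\}$ (which both $f$ and $s$ preserve), giving $p_1(2, 0) \in \{0, 2\}$, and then forcing $p_1(2, 0) = 2$ via the same $s$-substitution inside $\{0, 2\}$. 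The main obstacle I expect is the combinatorial sub-case analysis over the Class $(iii)$ positions in case (A): one must certify that whenever the $s$-substitution fails to force $p_1(1, 0) = 2$, the resulting structural constraint on $f$ is strong enough to manufacture a generalised Hubie-pol and hence trigger the collapsibility alternative rather than leaving us with neither option.
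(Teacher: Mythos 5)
There is a genuine gap, and it sits exactly where you declare the work to be ``essentially for free''. The paper's proof of Lemma~\ref{lem:fun} hinges on the value $p_1(2,0)$, not on $p_1(1,0)$. Concretely: take a witness $\overline{x}\in\{0,1\}^k$ of a $0$-to-$1$ break at a Class $(i)$ coordinate, so $f(\overline{x})=1$, and let $\widetilde{x}$ be $\overline{x}$ with $0\mapsto 2$ and $1\mapsto 0$. Since $\{0,2\}$ is a subalgebra, $f(\widetilde{x})\in\{0,2\}$, but nothing forces it to equal $2$: if $f(\widetilde{x})=0$ for \emph{every} such witness, then $f(s(x_1,x_1'),\ldots,s(x_k,x_k'))$ is a Hubie-pol in $\{1\}$ and the algebra is collapsible --- that is how the first disjunct of the lemma arises. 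Only in the remaining case does some witness satisfy $f(\widetilde{x})=2$, and identifying variables according to the $0/1$ pattern of that witness gives $p_1(0,1)=1$ and $p_1(2,0)=2$. Your proposal replaces this dichotomy by ``forcing $p_1(2,0)=2$ via the same $s$-substitution inside $\{0,2\}$'', but injecting a $2$ into an argument of $f$ does not force the output of $f$ to be $2$ (precisely the case $f(\widetilde{x})=0$), and you offer no route from the failure of that forcing to collapsibility. Meanwhile the step you flag as the ``technical crux'', namely $p_1(1,0)=2$, is automatic and needs no $s$-substitution at all: writing $\overline{x}^c$ for the complement of the witness, $f(\overline{x}^c)=0$ would be a $1$-to-$0$ break at the Class $(i)$ coordinate, and $f(\overline{x}^c)=1$ would be a $0$-to-$1$ break at a Class $(ii)$ coordinate (using your own correct observation that every Class $(ii)$ coordinate of the witness equals $1$, together with non-emptiness of Class $(ii)$); hence $f(\overline{x}^c)=2$. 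Worse, the substitution you propose there is self-defeating, since $s(0,1)=2$ would corrupt the evaluation $p_1(0,1)=1$ that you need to preserve.

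Two further structural problems. First, organising the proof around the trichotomy (Class $(i)$ singleton, Class $(ii)$ singleton, or $f(0,\ldots,0\,|\,1,\ldots,1\,|\,\overline{z})=2$ for all $\overline{z}$) is both unnecessary and harmful: the lemma demands \emph{both} $p_1$ \emph{and} $p_2$, yet your plan yields only $p_1$ in case (A) and only $p_2$ in case (B). The paper never invokes that trichotomy here; it uses only the non-emptiness of Classes $(i)$ and $(ii)$, runs the argument above once on a Class $(i)$ witness to obtain $p_1$ (or collapsibility), and once, dually with $1\mapsto 2$, $0\mapsto 1$ and the subalgebra $\{1,2\}$, on a Class $(ii)$ witness to obtain $p_2$ (or collapsibility). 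Second, the passages ``constrains the value table of $f$ tightly enough for the substitution argument to go through directly'' and ``one must certify that\ldots'' defer exactly the content that has to be proved; as written the proposal establishes neither disjunct.
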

\begin{proof}
Consider a tuple $\overline{x}$ over $\{0,1\}$ that witnesses the breaking of $\{0,2\}\{1,2\}$-projectivity for some Class $(i)$ variable from $0$ to $1$; so $f(\overline{x})=1$. Let $\widetilde{x}$ be $\overline{x}$ with the $0$s substituted by $2$ and the $1$s substituted by $0$. If, for each such  $\overline{x}$ over $\{0,1\}$ that witnesses the breaking of $\{0,2\}\{1,2\}$-projectivity for each Class $(i)$ variable, we find $f(\widetilde{x})=0$, then $f(s(x_1,x'_1),\ldots,s(x_k,x'_k))$ is a Hubie-pol in $\{1\}$. Thus, for some such  $\overline{x}$ we find $f(\widetilde{x})=2$. By collapsing the variables according to the division of $\overline{x}$ and $\widetilde{x}$ we obtain a binary function $p_1$ so that $p_1(0,1)=1$ and $p_1(2,0)=2$. We may also see that $p_1(1,0)=2$, since Classes $(i)$ and $(ii)$ are non-empty.

Dually, we consider tuples $\overline{x}$ over $\{0,1\}$ that witnesses the breaking of $\{0,2\}\{1,2\}$-projectivity for Class $(ii)$ variables from $1$ to $0$ to derive a function $p_2$ so that  $p_2(0,1)=0$, $p_2(1,2)=p(1,0)=2$.
\end{proof}

\subsubsection{The asymmetric case: Class $(i)$ is a singleton and there exists $\overline{z} \in \{0,1\}^*$ so that $f(0|1,\ldots,1|\overline{z})=1$}

We will address the case in which Class $(i)$ is a singleton and there exists $\overline{z} \in \{0,1\}^*$ so that $f(0|1,\ldots,1|\overline{z})=1$ (the like case with Class $(ii)$ being singleton itself being dual).

\begin{proposition}
\label{prop:asymmetric}
Let $f$ be so that Class $(i)$ is a singleton and there exists $\overline{z} \in \{0,1\}^*$ so that $f(0|1,\ldots,1|\overline{z})=1$. Then, either $f$ generates a binary idempotent operation with $01 \mapsto 0$ and $02 \mapsto 2$, or any algebra on $D$ containing $f$ and $s$ is collapsible.
\end{proposition}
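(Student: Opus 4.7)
The plan is to exploit a Class~$(ii)$ violation witness to attempt a direct construction of the desired binary operation, and, when this construction fails, to build a Hubie-pol from $f$ and $s$ and invoke Chen's Lemma~\ref{lem:ChensLemma}. First I would use the non-emptiness of Class~$(ii)$ to pick a tuple $\overline{x}\in\{0,1\}^k$ with $f(\overline{x})=0$ and $\overline{x}_{i_0}=1$ for some Class~$(ii)$ coordinate $i_0$. Since Class~$(i)$ is the singleton first coordinate and by definition admits no $1\to 0$ witness on $\{0,1\}^k$, one automatically has $\overline{x}_1=0$.

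Next, I would form the candidate binary term
\[
g_{\overline{x}}(x,y)\;:=\;f(a_1(x,y),\ldots,a_k(x,y)),\qquad a_j(x,y)=\begin{cases}x&\text{if }\overline{x}_j=0,\\ y&\text{if }\overline{x}_j=1.\end{cases}
\]
This lies in the clone of $f$, is idempotent (since $f$ is), and satisfies $g_{\overline{x}}(0,1)=f(\overline{x})=0$. Writing $\overline{v}:=\overline{x}[1\mapsto 2]\in\{0,2\}^k$, we obtain $g_{\overline{x}}(0,2)=f(\overline{v})\in\{0,2\}$ because $\{0,2\}$ is a subalgebra of $\mathbb{A}$. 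The natural case split is then whether some choice of witness $\overline{x}$ yields $f(\overline{v})=2$ (the easy case) or whether every Class~$(ii)$ witness yields $f(\overline{v})=0$ (the hard case).

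In the easy case, $g_{\overline{x}}$ is itself the required binary operation and we are done. In the hard case, I would argue collapsibility by producing a Hubie-pol in some singleton (most naturally $\{1\}$) and then applying Chen's Lemma. The semilattice $s$ is the principal tool here: since $s(a,y)=a$ iff $y=a$ and equals $2$ otherwise, applying $s$ coordinate-wise to selected arguments of $f$ forces the corresponding positions into the subalgebras $\{0,2\}$ or $\{1,2\}$. The hypothesis tuple $(0\mid 1,\ldots,1\mid \overline{z})$ with $f=1$ witnesses the $0\to 1$ violation at the Class~$(i)$ coordinate, while the rigidity of Case~B pins down $f$'s value on the ``$1\mapsto 2$-lifted'' witnesses in $\{0,2\}^k$. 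Iterating compositions of $f$ with $s$ (and, if needed, the auxiliary $p_1,p_2$ of Lemma~\ref{lem:fun}) should yield a term in the clone of $\{f,s\}$ that is surjective at every coordinate when that coordinate is set to $1$.

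The main obstacle will be the hard case: converting the negative rigidity ``every witness $\overline{x}$ satisfies $f(\overline{v})=0$'' into the positive structural statement ``a Hubie-pol exists'' requires careful combinatorial bookkeeping. The delicate part is the Class~$(iii)$ coordinates, whose two-sided projectivity violation makes it non-trivial to simultaneously guarantee surjectivity at each of them; I expect the arity of the eventual Hubie-pol will be substantially larger than that of $f$, reflecting several nested compositions of $f$ with $s$ needed to route arguments through both subalgebras uniformly.
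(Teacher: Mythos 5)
Your easy case is exactly the paper's first move: normalise so that every Class~$(ii)$/$(iii)$ witness row survives the substitution $1\mapsto 2$, and if some lifted witness $\overline{v}\in\{0,2\}^k$ has $f(\overline{v})=2$ then the identification of variables you describe gives the binary operation with $01\mapsto 0$, $02\mapsto 2$. That part is fine.

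The genuine gap is the hard case, which you acknowledge as ``the main obstacle'' but do not resolve, and for which your proposed route is the wrong one. You plan to produce a Hubie-pol in a singleton (say $\{1\}$) and invoke Chen's Lemma~\ref{lem:ChensLemma}; but by Proposition~\ref{prop:singletoncollapse:WeakCharacterisation} a Hubie-pol in $\{x\}$ is \emph{equivalent} to collapsibility from the singleton source $\{x\}$, and nothing in the hypotheses forces singleton-source collapsibility here. Indeed the paper's proof never produces a single Hubie-pol in the hard case: it splits on $\ell=0$ versus $\ell\geq 1$ and on the values $f(0\,|\,1,\ldots,1\,|\,1,\ldots,1)$ and $f(0\,|\,2,\ldots,2\,|\,2,\ldots,2)$, and in each branch it directly \emph{generates} the full adversary $D^M$ from adversaries of the form $(D^{r},\{0\}^{M-r})$, $(D^{r},\{1\}^{M-r})$ and (in Case~2a) also $(D^{r},\{2\}^{M-r})$ --- that is, collapsibility from a two- or three-element source via item~(vi) of Corollary~\ref{MainResult:InConcreto:Collapsibility}, not via a singleton-source Hubie-pol. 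Converting the rigidity assumption ``every lifted witness maps to $0$'' into these generation schemes is the actual content of the proof (the bootstrapping diagrams with the columns $(\{0,1\}^{m-1},\{1\}^{\ast})$ and $(\{0\},\{0,2\}^{\ast},\{2\}^{\ast})$), and none of that is present in your sketch. Without either carrying out such an adversary-generation argument or proving that a singleton-source Hubie-pol really does exist in every subcase (which is doubtful, and certainly unproven), the collapsibility half of the disjunction is not established.
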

\begin{proof}
Let us consider the general form of $f$,
\[
\begin{array}{c|ccc|ccccc}
0& 1 & \cdots & 1 & z_0^0 & \cdots & z_0^{\ell'} & \ & 1 \\
0 & y^1_1 & \cdots & y^{k'}_1 & z_1^1 & \cdots & z_1^{\ell'} & \ & 0 \\
\vdots & \vdots & \cdots & \vdots & \vdots & \cdots & \vdots & \mapsto & \vdots \\
0 & y^1_{m'} & \cdots & y^{k'}_{m'} & z_{m'}^1 & \cdots & z_{m'}^{\ell'} & \ & 0 \\
\end{array}
\]
where the $y$s and $z$s are from $\{0,1\}$ and we can assume that each $(y^i_1,\ldots,y^i_{m'})$ contains at least one $1$ and also each $(z^i_1,\ldots,z^i_{m'})$ contains at least one $1$. For the latter assumption recall that in Class $(iii)$ we can always find some break of $\alpha\beta$-projectivity from $1$ to $0$. Note that by expanding what we previously called Class $(ii)$ we can build, by possibly identifying variables, a function $f'$ of the form
\[
\begin{array}{c|ccc|ccccc}
0& 1 & \cdots & 1 & 0 & \cdots & 0 & \ & 1 \\
0 & y^1_1 & \cdots & y^{k}_1 & z_1^1 & \cdots & z_1^{\ell} & \ & 0 \\
\vdots & \vdots & \cdots & \vdots  & \vdots & \cdots & \vdots & \mapsto & \vdots \\
0 & y^1_{m} & \cdots & y^{k}_{m} & z_{m}^1 & \cdots & z_{m}^{\ell} & \ & 0 \\
\end{array}
\]
where the $y$s and $z$s are from $\{0,1\}$ and we can assume each $(y^i_1,\ldots,y^i_{m})$ contains at least one $1$ and also each $(z^i_1,\ldots,z^i_{m})$ contains a least one $1$. Note that we do not claim the new vertical bars correspond to delineate between Classes $(i)$, $(ii)$ and $(iii)$ under their original definitions, since this is not important to us. We will henceforth assume that $f$ is in the form of $f'$.

Let $x^i_j$ (resp., $v^i_j$) be $0$ if $y^i_j$ (resp., $z^i_j$) is $0$, and be $2$ if $y^i_j$ (resp., $z^i_j$) is $1$. That is, $(x^1_{j}, \ldots,  x^{k}_{j}, v_{j}^1\ \ldots, v_{j}^{\ell})$ is built from $(y^1_{j}, \ldots,  y^{k}_{j}, z_{j}^1\ \ldots, z_{j}^{\ell})$ by substituting $1$s by $2$s. Suppose one of $f(0|x^1_1,\ldots,x^k_1| v_1^1\ \ldots, v_1^{\ell})$, \ldots, $f(0|x^1_m,\ldots,x^k_m | v_{m}^1\ \ldots, v_{m}^{\ell})$ is $2$. Then $f$ generates an idempotent binary operation with $01 \mapsto 0$ and $02 \mapsto 2$. Thus, we may assume that each of  $f(0|x^1_1,\ldots,x^k_1 | v_1^1\ \ldots, v_1^{\ell})$, \ldots, $f(0|x^1_m,\ldots,x^k_m|$ $v_m^1\ \ldots, v_m^{\ell}))$ is $0$. We now move to consider some cases.

(Case 1: $\ell=0$, \mbox{i.e.} there is nothing to the right of the second vertical bar.) From adversaries of the form $(\{0\}^M)$ and $(\{0,1\}^{m-1},\{1\}^{M-m+1})$ this supports construction of $(\{0,1\}^{m},\{1\}^{M-m})$ and all co-ordinate permutations. We illustrate this with the following diagram which makes some assumptions about the locations of the $1$s in each $(y^i_1,\ldots,y^i_{m})$; nonetheless it should be clear that the method works in general since there is at least one $1$ in $(y^i_1,\ldots,y^i_{m})$.
\[
\begin{array}{c|cccccc}
\{0\} & \{0,1\} & \{0,1\} & \cdots & \{0,1\} & & \{0,1\} \\
\vdots & \vdots & \vdots & \cdots & \vdots & & \{0,1\} \\
\{0\} & \{0,1\} & \{0,1\} & \cdots & \{0,1\} & & \{0,1\} \\
\{0\} & \{1\} & \{0,1\} & \cdots & \{0,1\} & & \{0,1\} \\
\{0\} & \{0,1\}  & \{1\} & \cdots & \{0,1\} & & \{0,1\} \\
\vdots & \vdots & \vdots & \cdots & \vdots & \mapsto & \{0,1\} \\
\{0\} & \{0,1\} & \{0,1\} & \cdots & \{1\} & & \{0,1\} \\
\{0\} & \{1\} & \{1\} & \cdots & \{1\} & & \{1\} \\
\vdots & \vdots & \vdots & \cdots & \vdots & & \{1\} \\
\{0\} & \{1\} & \{1\} & \cdots & \{1\} & & \{1\} \\
\end{array}
\]
Applying $s$ it is clear that the full adversary may be built from, for example, $(D^{m-1},\{0\}^{M-m+1})$ and $(D^{m-1},\{b\}^{M-m+1})$ which demonstrates $(m-1)$-collapsibility.

(Case 2: $\ell\geq 1$.) Here we consider what is $f(0|1,\ldots,1|1,\ldots,1)$. If this is $1$ then we can clearly reduce to the previous case. If it is $0$ then $f(s(x_1,x_1'),\ldots,$ $s(x_{k+\ell+1},x'_{k+\ell+1}))$ is a generalised Hubie-pol in both $00|11,\ldots,11|00,\ldots,00$ and $00|11,\ldots,11|11,\ldots,11$, and we are collapsible. This is because the composed function on these listed tuples gives $1$ and $0$, respectively, thus permitting to build adversaries of the form $(\{0,1\}^{k+\ell+2},\{0\}^{M-k-\ell-2})$ and  $(\{0,1\}^{k+\ell+2},\{1\}^{M-k-\ell-2})$ from adversaries of the form $(\{0,1\}^{k+\ell+1},\{0\}^{M-k-\ell-1})$ and  $(\{0,1\}^{k+\ell+1},$ $\{1\}^{M-k-\ell-1})$ (\mbox{cf.} Case 1).

Thus, we may assume $f(0|1,\ldots,1|1,\ldots,1)=2$. Using the fact that $f(s(x_1,x_1'),$ $\ldots,s(x_{k+\ell+1},x'_{k+\ell+1}))$ is a generalised Hubie-pol in $00|11\ldots 11|00 \ldots 00$ we can build (using $s$ and rather like in Case 1), from adversaries of the form $(\{0\}^{M})$ and  $(D^{(m-1)i},\{1\}^{M-(m-1)i})$, adversaries of the form $(D^{mi},\{1\}^{M-mi})$, and all co-ordinate permutations of this. Similarly, using the fact that $f(s(x_1,x_1'),\ldots,$ $s(x_{k+\ell+1},x'_{k+\ell+1}))$ is a generalised Hubie-pol in $00|11\ldots 11|11 \ldots 11$, we can build adversaries of the form $(D^{mi},\{2\}^{M-mi})$. 

(Case 2a: $f(0|2,\ldots,2|2,\ldots,2)=2$.) Consider again
\[
\begin{array}{c|ccc|ccccc}
0 & x^1_1 & \cdots & x^{k}_1 & v_1^1 & \cdots & v_1^{\ell} & \ & 0 \\
\vdots & \vdots & \cdots & \vdots  & \vdots & \cdots & \vdots & \mapsto & \vdots \\
0 & x^1_{m} & \cdots & x^{k}_{m} & v_{m}^1 & \cdots & v_{m}^{\ell} & \ & 0 \\
0 & 2 & \cdots & 2 & 2 & \cdots & 2 & & 2\\
\end{array}
\]
where each $(x^i_1,\ldots,x^i_{m})$ and $(v^i_1,\ldots,v^i_{m})$ contains at least one $2$. By amalgamating Classes $(ii)$ and $(iii)$ we obtain some function with the form
\[
\begin{array}{c|ccccc}
0 & u^1_1 & \cdots & x^{\nu}_1 & & 0 \\
\vdots & \cdots & \vdots & & & \vdots \\
0 & u^1_m & \cdots & x^{\nu}_m & & 0 \\
0 & 2 & \cdots & 2 & & 2\\
\end{array}
\]
where each $(u^i_1,\ldots,u^i_{m})$ is in $\{0,2\}^*$ and contains at least one $2$. From adversaries of the form $(D^{r+m-1}, \{2\}^{M-r-m+1})$ and $(\{0\}^{M})$ we can build $(D^{r},\{0,2\}^{M-r})$, and all co-ordinate permutations. We begin, pedagogically preferring to view some $D$s as $\{0,2\}$s,
\[
\begin{array}{c|cccccc}
\{0\} & D & D & \cdots & D & & D \\
\vdots & \vdots & \vdots & \cdots & \vdots & & D \\
\{0\} & D & D & \cdots & D & & D \\
\{0\} & \{2\}  &  \{0,2\} & \cdots &  \{0,2\} & & \{0,2\} \\
\{0\} &  \{0,2\}  & \{c\} & \cdots &  \{0,2\} & & \{0,2\} \\
\vdots & \vdots & \vdots & \cdots & \vdots & \mapsto & \{0,2\} \\
\{0\} &  \{0,2\}  &  \{0,2\} & \cdots & \{2\} & & \{0,2\} \\
\{0\} & \{2\}  & \{2\} & \cdots & \{2\} & & \{2\} \\
\vdots & \vdots & \vdots & \cdots & \vdots & & \{2\} \\
\{0\} & \{2\}  & \{2\} & \cdots & \{2\} & & \{2\} \\
\end{array}
\]
and follow with bottom parts of the form
\[
\begin{array}{c|ccccccccc}
\vdots & \vdots & \vdots & \cdots & \vdots & \mapsto & \{2\} \mbox{ or } \{0,2\} \\
\{0\} & \{0,2\}  & \{0,2\} & \cdots & \{0,2\} & & \{0,2\}. \\
\end{array}
\]
This now supports bootstrapping of the full adversary from adversaries of the form $(D^{m^2},\{0\}^{M-m^2})$, $(D^{m^2},$ $\{1\}^{M-m^2})$ and $(D^{m^2},\{2\}^{M-m^2})$. 

(Case 2b:  $f(0|2,\ldots,2|2,\ldots,2)=0$.) Here, from adversaries of the form $(D^{r+m-1}, \{2\}^{M-r-m+1})$ and $(\{0\}^{M})$ we can directly build $(D^{r+m-1}, \{0\}^{M-r-m+1})$.
\[
\begin{array}{c|cccccc}
\{0\} & D & D & \cdots & D  & & D \\
\vdots & \vdots & \vdots & \cdots & \vdots  & & D \\
\{0\} & D & D & \cdots & D  & & D \\
\{0\} & \{2\}  & \{2\} & \cdots & \{2\} & & \{0\} \\
\vdots & \vdots & \vdots & \cdots & \vdots  & \mapsto & \{0\} \\
\{0\} & \{2\}  & \{2\} & \cdots & \{2\}  & & \{0\} \\
\end{array}
\]
This now supports bootstrapping of the full adversary, similarly as in Case 2a (but slightly simpler).
\end{proof}

Let $\overline{x}:=x_1,\ldots,x_k$ and $\overline{y}:=y_1,\ldots,y_k$ be words over $\{0,1\} \ni x,y$. Let $\wedge(x,y)=0$ if $0 \in \{x,y\}$ and $1$ otherwise. Let $\vee(x,y)=1$ if $1 \in \{x,y\}$ and $0$ otherwise. This corresponds with considering $0$ as $\bot$ and $1$ as $\top$. Define $\wedge(\overline{x},\overline{y}):=(\wedge(x_1,y_1),\ldots,\wedge(x_k,y_k))$ and $\vee(\overline{x},\overline{y}):=(\vee(x_1,y_1),\ldots,\vee(x_k,y_k))$.  We are most interested in words 
\begin{itemize}
\item[A] $(\overline{x} | 0,\ldots,0 | \overline{z})$, such that $f(\overline{x} | 0,\ldots,0 | \overline{z})=0$, and for no $\overline{x}'\neq \overline{x}$ and $\overline{z}'$ over $\{0,1\}$ do we have  $(\overline{x}' | 0,\ldots,0 | \overline{z}')$ with $\vee(\overline{x},\overline{x}')=\overline{x}'$ so that $f(\overline{x}' | 0,\ldots,0 | \overline{z}')=0$.
\item[B] $(1,\ldots,1 | \overline{y} | \overline{z})$, such that $f(1,\ldots,1 | \overline{y} | \overline{z})=1$, and for no $\overline{y}'\neq \overline{y}$  and $\overline{z}'$ over $\{0,1\}$ do we have $(1,\ldots,1 | \overline{y}' | \overline{z}')$ with $\wedge(\overline{y},\overline{y}')=\overline{y}'$ so that $f(1,\ldots,1 | \overline{y}' | \overline{z}')=1$.
\end{itemize}
Such $\overline{x}$ and $\overline{y}$ are in a certain sense \emph{maximal}, but the sense of maximality is dual in Case B from Case A. $\overline{x}$ is maximal under inclusion for the number of $1$s it contains and $\overline{y}$ is maximal under inclusion for the number of $0$s it contains. In the asymmetric case that we consider here \mbox{w.l.o.g.}, only Case A above will be salient, but we introduce both now for pedagogical reasons.

\begin{lemma}
\label{lem:r4-asymmetric}
Let $f$ be so that Class $(i)$ is a singleton and there exists $\overline{z} \in \{0,1\}^*$ so that $f(0|1,\ldots,1|\overline{z})=1$. Then any algebra over $D$ containing $f$ and $s$ is either collapsible or has a $4$-ary term operation $r_4$ so that
\[
\begin{array}{ccc}
0101 & r_4 & 0 \\
0110 & \rightarrow & 0 \\
0111 & & 2 \\
\end{array}
\]
\end{lemma}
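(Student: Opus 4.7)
The plan is to invoke Proposition~\ref{prop:asymmetric}. If the algebra is already collapsible the statement holds immediately; otherwise I obtain a binary idempotent term operation $p$ generated by $f$ with $p(0,1)=0$ and $p(0,2)=2$. I will then combine $p$ with the semilattice-without-unit $s$ (and, when necessary, $f$ itself) to exhibit a 4-ary term that realises the three prescribed values.

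My first candidate is
\[
r_4(x_1,x_2,x_3,x_4) \;:=\; s\bigl(p(x_1,x_2),\,p(x_3,x_4)\bigr).
\]
Using $p(0,1)=0$, idempotency $p(1,1)=1$, and the values $s(0,0)=0$, $s(0,1)=2$ of the semilattice, I compute
\begin{align*}
r_4(0,1,0,1) &= s(p(0,1),p(0,1)) = s(0,0) = 0, \\
r_4(0,1,1,1) &= s(p(0,1),p(1,1)) = s(0,1) = 2,
\end{align*}
while $r_4(0,1,1,0) = s(0,p(1,0))$, which equals $0$ exactly when $p(1,0)=0$. So in the sub-case $p(1,0)=0$ the proof is finished.

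For the residual sub-cases $p(1,0)\in\{1,2\}$, my plan is a finite case analysis constrained by the subalgebra structure, which forces $p(2,0)\in\{0,2\}$ and $p(1,2),p(2,1)\in\{1,2\}$. In each sub-case I would try derived terms such as $p(p(x,y),y)$ or $p(x,s(x,y))$ in place of $p$, or more elaborate 4-ary expressions like $p\bigl(s(x_3,x_4),\,x_1\bigr)$, which (for instance) directly yields the three required values when $p(2,0)=0$ and $p(1,0)=2$. The hard part will be the residual sub-cases in which none of these elementary modifications suffices. Here I would exploit more of the structure of $f$: the asymmetric hypothesis $f(0\mid 1,\ldots,1\mid\overline{z})=1$ together with Class~$(i)$ being a singleton provides further rows of $f$ beyond those used to generate $p$, and substituting suitable $\{0,1,2\}$-valued tuples into these rows, in the spirit of the derivation of $p_1,p_2$ in Lemma~\ref{lem:fun} and the Case~1/Case~2 analyses in the proof of Proposition~\ref{prop:asymmetric}, should produce a derived operation that eliminates the obstruction. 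The fallback, should no such derivation succeed, is to convert the obstruction into an explicit adversary argument as in Proposition~\ref{prop:asymmetric}, showing that the full adversary $A^m$ is generated from collapsibility adversaries of bounded width, and hence that the algebra is collapsible, giving the other disjunct of the lemma.
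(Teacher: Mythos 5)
There is a genuine gap. Your construction $s\bigl(p(x_1,x_2),p(x_3,x_4)\bigr)$ is verified correctly, but it only delivers $r_4$ in the sub-case $p(1,0)=0$, and Proposition~\ref{prop:asymmetric} gives you no control whatsoever over $p(1,0)$ (nor over $p(2,0)$ beyond $p(2,0)\in\{0,2\}$ from the subalgebra $\{0,2\}$). Your further candidate $p(s(x_3,x_4),x_1)$ correctly disposes of the sub-case $p(2,0)=0$, $p(1,0)=2$, but that still leaves the sub-cases $p(1,0)=1$ (with either value of $p(2,0)$) and $p(1,0)=p(2,0)=2$ entirely open, and you acknowledge as much. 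The deeper problem is that the target $r_4$ is not plausibly within reach of the binary data you have extracted: fixing the first two arguments to $0,1$, you need a map sending $01\mapsto 0$, $10\mapsto 0$, $11\mapsto 2$ on the last two coordinates, and a single binary $p$ known only to satisfy $p(0,1)=0$, $p(0,2)=2$, together with $s$, carries no information that forces such behaviour. Your fallback --- ``convert the obstruction into an explicit adversary argument showing collapsibility'' --- is an aspiration, not an argument; nothing in the residual sub-cases hands you the adversaries needed.

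The paper's proof avoids this trap by never reducing to the binary operation: it works directly with $f$ and the maximality notion (Case A) set up immediately before the lemma. First, if some $\overline{z}'$ over $\{0,1\}$ has $f(0|1,\ldots,1|\overline{z}')=0$, then $f(s(v_1,v'_1),\ldots,s(v_{k+\ell+1},v'_{k+\ell+1}))$ is a generalised Hubie-pol on two suitable tuples and the algebra is collapsible. Otherwise, maximality forces the existence of distinct $\overline{y}_1,\overline{y}_2$ with $f(0|\overline{y}_1|\overline{z}_1)=f(0|\overline{y}_2|\overline{z}_2)=0$ but $f(0|\vee(\overline{y}_1,\overline{y}_2)|\overline{z}_1)\neq 0$; identifying coordinates of $f$ according to where $\overline{y}_1$ and $\overline{y}_2$ agree and differ produces the required $4$-ary operation directly (up to permuting coordinates and composing with $s$). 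The extra information that makes $0111\mapsto 2$ achievable comes precisely from this failure of the zero-preimages of $f$ to be closed under $\vee$ --- information that is lost once you pass to $p$. To repair your proof you would need either to complete the three missing sub-cases by genuinely new term constructions (for which you currently have no material), or to return to $f$ itself as the paper does.
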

\begin{proof}
Recall $\exists \overline{z}$ so that $f(0|1,\ldots,1|\overline{z})=1$. Note that if exists $\overline{z}'$ over $\{0,1\}$ so that $f(0|1,\ldots,1|\overline{z}')=0$ then we have that $f(s(v_1,v'_1),\ldots,s(v_{k+\ell+1},v'_{k+\ell+1}))$ is a generalised Hubie-pol in both $11\ldots 11|00 \ldots 00|\widehat{z}$ and $11\ldots 11|00 \ldots 00|\widehat{z}'$, where we build widehat from overline by doubling each entry where it sits, and we become collapsible. It therefore follows that there must exist distinct $\overline{y}_1$, $\overline{y}_2$, $\overline{z}_1$ and $\overline{z}_2$ (all over $\{0,1\}$) so that $f(0|\overline{y}_1|\overline{z}_1)=0$, $f(0|\overline{y}_2|\overline{z}_2)=0$ but $f(0|\vee(\overline{y}_1,\overline{y}_2)|\overline{z}_1)\neq 0$. By collapsing co-ordinates we get $f'$ so that
\[
\begin{array}{ccc}
0011 & f' & 0 \\
0101 &  \rightarrow & 0 \\
0111 & & \mbox{$0$ or $2$} \\
\end{array}
\]
The result follows by permuting co-ordinates, possibly in new combination through $s$ and the second co-ordinate.
\end{proof}

\subsubsection{The symmetric case: for every $\overline{z} \in \{0,1\}^*$ we have $f(0,\ldots,0|1,\ldots,1|\overline{z})=2$}

\begin{proposition}
\label{prop:symmetric}
Let $f$ be so that neither Class $(i)$ nor Class $(ii)$ is a singleton and so that for every $\overline{z} \in \{0,1\}^*$ we have $f(0,\ldots,0|1,\ldots,1|\overline{z})=2$. Then, either $f$ generates a binary idempotent operation with $01 \mapsto 0$ and $02 \mapsto 2$ or a binary idempotent operation with $01 \mapsto 1$ and $21 \mapsto 2$, or any algebra on $D$ containing $f$ and $s$ is collapsible.
\end{proposition}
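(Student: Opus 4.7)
\emph{Plan.} Mirroring the proof of Proposition~\ref{prop:asymmetric}, I would classify the behaviour of $f$ on $\{0,1\}$-tuples and their $\{0,2\}$- or $\{1,2\}$-substitutions, but now symmetrically because neither of Class~$(i)$ or Class~$(ii)$ may be contracted to a single variable. Using that Class~$(i)$ is not a singleton, combine the witnesses of $\{0,2\}\{1,2\}$-projectivity violation at different Class~$(i)$ coordinates through $s$ to produce a Case~A maximal tuple $(\overline{x}\mid 0,\ldots,0\mid\overline{z})$ mapped by $f$ to $0$, whose Class~$(i)$ block $\overline{x}\in\{0,1\}^{k}$ contains at least two $1$s. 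Dually, from Class~$(ii)$ non-singleton produce a Case~B maximal tuple $(1,\ldots,1\mid\overline{y}\mid\overline{z})$ mapped by $f$ to $1$, with $\overline{y}$ containing at least two $0$s. The symmetric hypothesis $f(0,\ldots,0\mid 1,\ldots,1\mid\overline{z})=2$ ensures that neither maximal tuple collapses to the trivial all-$0$/all-$1$ pattern.

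Next, for each maximal A-tuple I would evaluate $f$ on its $\{0,2\}$-substitution $(\widetilde{x}\mid 0,\ldots,0\mid\widetilde{z})$, where every $1$ is replaced by $2$. If the value is $2$, then collapsing coordinates of $(\overline{x}\mid\overline{z})$ by their common $0/1$ value yields an idempotent binary term $p$ with $p(0,1)=0$ and $p(0,2)=2$; if the value is $1$, then $p'(u,v):=s(u,p(u,v))$ is idempotent with $p'(0,1)=0$ and $p'(0,2)=2$, which is again the first desired binary operation. Symmetrically, the $\{1,2\}$-substitution of a Case~B tuple yields, directly or via a composition of the form $s(p''(u,v),u)$, a binary operation with $p''(1,0)=1$ and $p''(1,2)=2$, i.e.\ the second desired operation, whenever the value is $2$ or $0$. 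The only remaining possibility is that \emph{every} maximal A-tuple's substitution returns $0$ and \emph{every} maximal B-tuple's substitution returns $1$; this is the regime in which we must now prove collapsibility.

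In that residual regime I would port Cases~1 and 2 of the proof of Proposition~\ref{prop:asymmetric} in symmetric form. The A-side information (combined with $s$) would build rectangular adversaries of the shape $(D^{m_A-1},\{0\}^{M-m_A+1})$ and all coordinate permutations from $(\{0\}^{M})$ and $(\{0,1\}^{m_A-1},\{1\}^{M-m_A+1})$ via the bootstrapping diagram of Case~1; the B-side would produce the mirror family with $\{1\}^{M-m_B+1}$ in place of $\{0\}^{M-m_A+1}$; and a further split on whether $f(0,\ldots,0\mid 2,\ldots,2\mid 2,\ldots,2)\in\{0,2\}$ and $f(1,\ldots,1\mid 2,\ldots,2\mid 2,\ldots,2)\in\{1,2\}$ would treat the Class~$(iii)$ block when $\ell\ge 1$, exactly as in Sub-cases~2a and 2b of Proposition~\ref{prop:asymmetric}. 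Applying $s$ to the union of these rectangular families should finally bootstrap the full adversary $D^{M}$ and deliver collapsibility. The main obstacle is precisely this last step: unlike the asymmetric case, two generalised Hubie-pol patterns must now be coordinated simultaneously at non-trivial Class~$(i)$ and Class~$(ii)$ blocks, and one must check that the resulting family of rectangular adversaries covers enough coordinate permutations that closure under $s$ genuinely produces every adversary of the requisite width.
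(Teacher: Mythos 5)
Your plan follows essentially the same route as the paper's proof: list the witnessing rows for the Class $(i)$ and Class $(ii)$ violations, substitute $2$ for the minority value and extract one of the two binary operations whenever some substitution evaluates to $2$, and otherwise, in the residual regime, build $\{0\}$- and $\{1\}$-tailed rectangular adversary families, combine them via the hypothesis $f(0,\ldots,0|1,\ldots,1|\overline{z})=2$ into $\{2\}$-tailed ones, and case-split on $f(0,\ldots,0|2,\ldots,2|2,\ldots,2)$ and $f(2,\ldots,2|1,\ldots,1|2,\ldots,2)$ to bootstrap the full adversary exactly as in Cases 2a/2b of Proposition~\ref{prop:asymmetric}. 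The bootstrapping step you flag as the main obstacle is carried out in the paper at the same diagrammatic level of detail, and the two side-branches you hedge on (a $\{0,2\}$-substituted tuple evaluating to $1$, or a $\{1,2\}$-substituted tuple evaluating to $0$) are vacuous because $\{0,2\}$ and $\{1,2\}$ are subalgebras, so nothing is lost.
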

\begin{proof}
Let us consider the general form of $f$,
\[
\begin{array}{ccc|ccc|ccccc}
x^1_1 & \cdots & x^k_1 & 1 & \cdots & 1 & w_1^1 & \cdots & w_1^\ell & \ & 1 \\
\vdots & \vdots & \vdots & \vdots & \vdots & \vdots & \vdots & \vdots & \vdots & \mapsto & \vdots \\
x^1_m & \cdots & x^k_m & 1 & \cdots & 1 & w_m^1 & \cdots & w_m^\ell & \ & 1 \\
\\
0 & \cdots & 0 & y^1_1 & \cdots & y^\kappa_1 & z_1^1 & \cdots & z_1^\ell & \ & 0 \\
\vdots & \vdots & \vdots & \vdots & \vdots & \vdots & \vdots & \vdots & \vdots & \mapsto & \vdots \\
0 & \cdots & 0 & y^1_\mu & \cdots & y^\kappa_\mu & z_\mu^1 & \cdots & z_\mu^\ell & \ & 0 \\
\end{array}
\]
where the $x$s, $y$s, $z$s and $w$s are from $\{0,1\}$ and we can assume that each $(x^i_1,\ldots,x^i_m)$ and $(w^i_1,\ldots,w^i_m)$ contain at least one $0$ and $(y^i_1,\ldots,y^i_\mu)$ and $(z^i_1,\ldots,$ $z^i_\mu)$ contains at least one $1$. As in the previous proof we can make an assumption that each $(x^1_i, \cdots, x^k_i , 1, \ldots , 1 , w_i^1 , \ldots , w_i^\ell)$, with $0$ substituted for $2$, still maps under $f$ to $1$. Similarly,  each $(0,\ldots,0,y^1_i, \cdots, y^\kappa_i, z_i^1 , \ldots , z_i^\ell)$, with $1$ substituted for $2$, still maps under $f$ to $0$.

Since for each  $\overline{z} \in \{0,1\}^*$ we have $f(0,\ldots,0|1,\ldots,1|\overline{z})=2$ we can deduce that from the adversaries $(D^{(k+\kappa+\ell-1)i},\{0\}^{M-(k+\kappa+\ell-1)i})$ and  $(D^{(k+\kappa+\ell-1)i},$ $\{1\}^{M-(k+\kappa+\ell-1)i})$, adversaries of the form $(D^{(k+\kappa+\ell)i},\{2\}^{M-(k+\kappa+\ell)i})$, and all co-ordinate permutations of this. 

We now make some case distinctions based on whether $f(0,\ldots,0|2,\ldots,2|$ $2,\ldots,2)=2$ or $0$ and $f(2,\ldots,2|1,\ldots,1|2,\ldots,2)=2$ or $1$ (note that possibly Class $(iii)$ is empty). However, the method for building the full adversary from certain collapsings proceeds very similarly to Cases 2a and 2b from Proposition~\ref{prop:symmetric}. We give an example below as to how, in the case $f(0,\ldots,0|2,\ldots,2|2,\ldots,2)=2$, we mimic Case 2a from Proposition~\ref{prop:asymmetric} to derive a function from this that builds, from adversaries of the form $(D^{r+m-1}, \{0\}^{M-(r+m-1)})$ and $(D^{r+2m-1},\{2\}^{M-(r+m-1)})$, we can build $(D^{r+m},\{0,2\}^{M-m-r})$. For pedagogic reasons we prefer to view some $D$s as $\{0,2\}$s,
\[
\begin{array}{cccc|cccccc}
\{0\} & D & \cdots & D & D & D & \cdots & D & & D \\
D & \{a\} & \cdots & D & D & D & \cdots & D & & D \\
\vdots & \vdots & \cdots & \vdots & \vdots & \vdots & \cdots & \vdots  & & D \\
D & D & \cdots & \{0\} & D & \cdots & D & D & & D \\
\{0\} & \{0\} & \cdots & \{0\}  & \{2\}  & \{0,2\} & \cdots & \{0,2\} & & \{0,2\} \\
\{0\}& \{0\}& \cdots &\{0\}& \{0,2\} & \{2\} & \cdots &\{0,2\} & &\{0,2\}\\
\vdots & \vdots & \vdots & \vdots & \vdots & \vdots & \cdots & \vdots  & \mapsto &\{0,2\} \\
\{0\}&\{0\} & \cdots &\{0\} &\{0,2\}& \{0,2\}& \cdots & \{2\} & & \{0,2\} \\
\{0\} & \{0\}& \cdots & \{0\}& \{2\}  &  \{2\} & \cdots & \{2\}  & & \{2\}  \\
\vdots & \vdots & \vdots & \vdots & \vdots & \vdots & \cdots & \vdots & &  \{2\}  \\
\{0\}&\{0\}& \cdots & \{0\}   &  \{2\} & \cdots & \{2\}  & &  \{2\} \\
\end{array}
\]
\end{proof}

\begin{lemma}
\label{lem:r4-symmetric}
Let $f$ be so that neither Class $(i)$ nor Class $(ii)$ is a singleton and so that for every $\overline{z} \in \{0,1\}^*$ we have $f(0|1,\ldots,1|\overline{z})=2$. Any algebra over $D$ containing $f$ is either collapsible or contains a $4$-ary operations $r^a_4$ and $r^b_4$ with properties
\[
\begin{array}{ccc}
\begin{array}{ccc}
0101 & r^a_4 & 0 \\
0110 & \rightarrow & 0 \\
0111 & & 2 \\
\end{array}
& \ \ \ \ \ \  \mbox{\textbf{and}} \ \ \ \ \ \ &
\begin{array}{ccc}
0101 & r^b_4 & 1 \\
0110 & \rightarrow & 1 \\
0100 & & 2 \\
\end{array}
\end{array}
\]
\end{lemma}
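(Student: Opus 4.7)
The proof parallels Lemma~\ref{lem:r4-asymmetric}, but the full left-right symmetry of the hypothesis lets us produce both $r^a_4$ and $r^b_4$ in parallel. For $r^a_4$, I would use the Class~$(ii)$ witness rows $f(0,\ldots,0|\overline{y}|\overline{z})=0$ appearing in the canonical form of $f$. Arguing as in Lemma~\ref{lem:r4-asymmetric}, one extracts distinct $\overline{y}_1,\overline{y}_2$ and $\overline{z}_1,\overline{z}_2$ over $\{0,1\}$ with $f(0,\ldots,0|\overline{y}_i|\overline{z}_i)=0$ yet $f(0,\ldots,0|\vee(\overline{y}_1,\overline{y}_2)|\overline{z})\ne 0$ for some $\overline{z}$. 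The existence of such a pair follows because the symmetric-case hypothesis $f(0,\ldots,0|1,\ldots,1|\cdot)=2$ rules the top of the $\vee$-closure out of being $0$, while the non-emptiness of Class~$(ii)$ guarantees a starting point, so the closure must jump somewhere.

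The crucial point is that in the symmetric case this first non-zero output must already be $2$ and cannot be $1$: were it $1$, then by composing with $s$ exactly as in Lemma~\ref{lem:fun} (replacing each variable $v_i$ by $s(v_i,v'_i)$ to absorb $0$s into $2$s at will), we would fabricate a generalised Hubie-pol in $\{1\}$. Lemma~\ref{lem:ChensLemma} would then give collapsibility, contradicting the standing non-collapsibility hypothesis. Having secured a jump to $2$, I would collapse the $\overline{y}$-coordinates into the two blocks on which $\overline{y}_1,\overline{y}_2$ disagree, identify the ``agreeing'' coordinates via further applications of $s$, and fix the $\overline{z}$-coordinates as in the analogous step at the end of Lemma~\ref{lem:r4-asymmetric}. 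After a permutation of coordinates this produces a $4$-ary term realising $0101\mapsto 0$, $0110\mapsto 0$, $0111\mapsto 2$, which is $r^a_4$.

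For $r^b_4$ the argument is entirely dual. Starting from Class~$(i)$ witness rows $f(\overline{x}|1,\ldots,1|\overline{w})=1$ whose columns each contain a $0$, the same symmetric hypothesis $f(0,\ldots,0|1,\ldots,1|\cdot)=2$ now plays the role of the ``bottom'' in the $\wedge$-closure direction for $\overline{x}$. I would find $\overline{x}_1\ne \overline{x}_2$ with $f(\overline{x}_i|1,\ldots,1|\cdot)=1$ but $f(\wedge(\overline{x}_1,\overline{x}_2)|1,\ldots,1|\cdot)\ne 1$, and rule out the non-$1$ value being $0$ by a dual Hubie-pol-in-$\{0\}$ argument that would again force collapsibility. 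So the value must be $2$, and coordinate-collapsing produces $r^b_4$ with $0101\mapsto 1$, $0110\mapsto 1$, $0100\mapsto 2$.

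The principal obstacle in both halves is precisely this ``ruling out the wrong extremal value'' step, which is where the symmetric hypothesis (and not merely the non-$\alpha\beta$-projectivity of $f$) is used, via careful $s$-compositions of the kind already exploited in Lemma~\ref{lem:fun} and Proposition~\ref{prop:symmetric}. Once the jump to $2$ is forced in each direction, the coordinate identifications producing a $4$-ary term are routine, mirroring the concluding manipulation of Lemma~\ref{lem:r4-asymmetric}.
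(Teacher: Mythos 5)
Your overall strategy is the paper's own: the paper's proof of this lemma consists of the single sentence that it ``proceeds exactly as in Lemma~\ref{lem:r4-asymmetric}'', and you correctly identify how that transfer works --- the symmetric hypothesis $f(0,\ldots,0|1,\ldots,1|\overline{z})=2$ acts as the barrier both for the $\vee$-closure argument on the $0$-rows (yielding $r^a_4$) and, dually, for the $\wedge$-closure argument on the $1$-rows (yielding $r^b_4$). The extraction of the jump pair $\overline{y}_1,\overline{y}_2$ and the subsequent coordinate collapse match the paper.

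The one place you genuinely deviate is your ``crucial point'': that the first non-zero value reached in the $\vee$-closure must already be $2$ and cannot be $1$, on pain of a ``generalised Hubie-pol in $\{1\}$'' and hence collapsibility via Lemma~\ref{lem:ChensLemma}. This step is unsound as written. Chen's Lemma needs a genuine Hubie-pol, i.e.\ surjectivity of $f(s(\cdot,\cdot),\ldots,s(\cdot,\cdot))$ when \emph{any} single coordinate is pinned to $1$; a single extra tuple $(0,\ldots,0\,|\,\vee(\overline{y}_1,\overline{y}_2)\,|\,\overline{z})\mapsto 1$ does not supply this. For instance, with a Class~$(i)$ coordinate pinned to $1$ you still cannot reach the value $0$ (every available $0$-row is identically $0$ on Class~$(i)$), and with a Class~$(ii)$ coordinate at which $\vee(\overline{y}_1,\overline{y}_2)$ equals $0$ pinned to $0$ you cannot reach $1$ (every available $1$-row is identically $1$ on Class~$(ii)$). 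At best you obtain a generalised Hubie-pol on one particular tuple, and the paper states explicitly that a generalised Hubie-pol does not in general bestow collapsibility. Fortunately the exclusion is also unnecessary: after collapsing coordinates you hold a $4$-ary term whose three rows evaluate to $(0,0,v)$ with $v\in\{1,2\}$ and which has the constant column $(0,0,0)$ among its arguments; composing with $s$ against that column sends $(0,0,v)$ to $(0,0,2)$, since $s(0,0)=0$ while $s(1,0)=s(2,0)=2$. Dually, for $r^b_4$ the output $(1,1,v)$ with $v\in\{0,2\}$ is repaired by composing $s$ with the constant column $(1,1,1)$. This is precisely the ``possibly in new combination through $s$'' device at the end of the proof of Lemma~\ref{lem:r4-asymmetric}, and it is what should replace your exclusion argument.
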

\begin{proof}
The proof proceeds exactly as in Lemma~\ref{lem:r4-asymmetric}.
\end{proof}

An important special case  of the previous lemma, which is satisfied by Chen's $(\{0,1,2\};r,s)$ is as follows.

\vspace{0.2cm}
\noindent \textbf{Zhuk Condition}. $\mathbb{A}$ has idempotent term operations, binary $p$ and ternary operation $r_3$, so that either
\[
\begin{array}{c}
\left(
  \begin{array}{ccc}
    \begin{array}{ccc}
    001 & r_3 & 0 \\
    010 & \rightarrow & 0 \\
    011 & & 2 \\
    \end{array}
  & \ \ \ \ \ \  \mbox{\textbf{and}} \ \ \ \ \ \ &
    \begin{array}{ccc}
    01& p & 0 \\
    02 & \rightarrow & 2 \\
    \end{array}
  \end{array}
\right)
\\
\mbox{\textbf{or}} \\
\left(
\begin{array}{ccc}
   \begin{array}{ccc}
   101 & r_3 & 1 \\
  110  & \rightarrow & 1 \\
  100 & & 2 \\
   \end{array}
& \ \ \ \ \ \  \mbox{\textbf{and}} \ \ \ \ \ \ &
    \begin{array}{ccc}
    01 & p & 1 \\
    21 & \rightarrow & 2 \\
    \end{array}
  \end{array}
\right)
\end{array}
\]

\subsection{About essential relations}

We assume that all relations are defined on the finite set $\{0,1,2\}$.
A relation $\rho$ is called \textit{essential} if
it cannot be represented as a conjunction of relations with smaller arities.
A tuple $(a_{1},a_{2},\ldots,a_{n})$ is called \textit{essential for a relation $\rho$}
if $(a_{1},a_{2},\ldots,a_{n})\notin\rho$ and
for every $i\in \{1,2,\ldots,n\}$ there exists $b\in A$ such that 
$(a_{1},\ldots,a_{i-1},b,a_{i+1},\ldots,a_{n})\in\rho.$ Let us define a relation $\tilde{\rho}$ for every relation $\rho \subseteq D^n$. Put $\sigma_i(x_1,\ldots,x_{i-1},x_{i+1},\ldots,x_{n}) := \exists y \ \rho(x_1,\ldots,x_i,y,x_{i+1},\ldots,x_n)$
and let 
\[ \tilde{\rho}(x_1,\ldots,x_n) := \sigma_1(x_2,x_3,\ldots,x_n) \wedge  \sigma_2(x_1,x_3,\ldots,x_n) \wedge \ldots \wedge  \sigma_1(x_1,x_2,\ldots,x_{n-1}). \]
\begin{lemma}\label{sushnabor}
A relation $\rho$ is essential iff there exists an essential tuple for $\rho$.
\end{lemma}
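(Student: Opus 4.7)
The plan is to establish that $\rho$ is essential iff $\rho \neq \tilde{\rho}$, and then observe that essential tuples are precisely the elements of $\tilde{\rho}\setminus\rho$.

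First I would record the trivial inclusion $\rho \subseteq \tilde{\rho}$: if $(a_1,\ldots,a_n)\in\rho$ then for each $i$, taking $y:=a_i$ as a witness shows $(a_1,\ldots,a_{i-1},a_{i+1},\ldots,a_n)\in\sigma_i$. Next, I would unravel definitions to see that a tuple $(a_1,\ldots,a_n)$ is essential for $\rho$ exactly when it lies in $\tilde{\rho}\setminus\rho$: the existence of some $b\in A$ with $(a_1,\ldots,a_{i-1},b,a_{i+1},\ldots,a_n)\in\rho$ is precisely the statement that the projection of the tuple at the coordinates $[n]\setminus\{i\}$ belongs to $\sigma_i$. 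Hence essential tuples exist iff $\rho \subsetneq \tilde{\rho}$ iff $\rho \neq \tilde{\rho}$.

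It remains to identify essentiality of $\rho$ with the condition $\rho \neq \tilde{\rho}$. The easy direction: if $\rho = \tilde{\rho}$, then $\rho$ is by definition the conjunction of the $n$ relations $\sigma_1,\ldots,\sigma_n$, each of arity $n-1$, so $\rho$ is not essential. For the converse, suppose $\rho = \bigwedge_j R_j$ where each $R_j$ has arity strictly less than $n$; up to padding with dummy arguments we may assume each $R_j$ is an $n$-ary relation that does not depend on some coordinate $i_j$. I would then show $\tilde{\rho}\subseteq \rho$ as follows: since $\rho \subseteq R_j$ and $R_j$ is independent of coordinate $i_j$, existentially quantifying the $i_j$th variable of $\rho$ yields $\sigma_{i_j}\subseteq R_j'$, where $R_j'$ is the $(n-1)$-ary version of $R_j$. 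Now for any $t\in \tilde{\rho}$, the projection of $t$ omitting coordinate $i_j$ lies in $\sigma_{i_j}$, hence in $R_j'$, hence $t\in R_j$; this holds for every $j$, giving $t\in\rho$. Combined with $\rho\subseteq \tilde{\rho}$, this forces $\rho=\tilde{\rho}$.

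Assembling the two chains of equivalences yields the lemma. The main obstacle is simply being careful with the notational bookkeeping of padding $R_j$ back up to $n$ arguments and tracking which coordinate each factor forgets; once that is set up cleanly, the argument is immediate from the definition of $\tilde{\rho}$ as the conjunction of all $(n-1)$-ary projections of $\rho$.
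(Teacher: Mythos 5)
Your proof is correct and takes essentially the same route as the paper's, which argues the forward direction by contraposition via the observation that a non-essential $\rho$ satisfies $\rho=\tilde{\rho}$ (hence $\tilde{\rho}\setminus\rho=\emptyset$ and no essential tuple exists), and the backward direction by noting that an essential tuple directly witnesses essentiality. You merely supply the details the paper leaves implicit, in particular the verification that writing $\rho$ as a conjunction of smaller-arity relations forces $\tilde{\rho}\subseteq\rho$.
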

\begin{proof}
(Forwards.) By contraposition, if $\rho$ is not essential, then $\tilde{\rho}$ is equivalent to $\rho$, and there can not be an essential tuple.

(Backwards.) An essential tuple witnesses that a relation is essential.
\end{proof}
\begin{lemma}
\label{lem:Dmitriy-micro}
Suppose $(2,2,x_3,\ldots,x_n)$ is an essential tuple for $\rho$. Then $\rho$ is not preserved by $s$.
\end{lemma}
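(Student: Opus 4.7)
The plan is to exploit that the semilattice-without-unit $s$ sends any off-diagonal pair to $2$, together with the two ``witnessing'' tuples guaranteed by essentiality at the first two coordinates.

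First, I would unpack the hypothesis. Since $(2,2,x_3,\ldots,x_n)$ is essential for $\rho$, it does not belong to $\rho$, yet for each coordinate $i$ there is a replacement $b_i \in D$ putting the resulting tuple into $\rho$. Applying this at $i=1$ yields $b_1$ with $(b_1,2,x_3,\ldots,x_n)\in\rho$, and applying it at $i=2$ yields $b_2$ with $(2,b_2,x_3,\ldots,x_n)\in\rho$. Because $(2,2,x_3,\ldots,x_n)\notin\rho$, neither $b_1$ nor $b_2$ can equal $2$; hence $b_1,b_2\in\{0,1\}$.

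Next, I would apply $s$ componentwise to the pair of tuples $(b_1,2,x_3,\ldots,x_n)$ and $(2,b_2,x_3,\ldots,x_n)$. By definition of $s$, since $b_1\neq 2$ we have $s(b_1,2)=2$, and similarly $s(2,b_2)=2$. For each $i\geq 3$ the coordinates coincide, so $s(x_i,x_i)=x_i$. The result of the componentwise application is therefore exactly $(2,2,x_3,\ldots,x_n)$, which is \emph{not} in $\rho$.

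Thus $s$, applied to two tuples of $\rho$, produces a tuple outside $\rho$, so $\rho$ is not preserved by $s$. The argument is short and there is no real obstacle; the only subtlety is the minor check that the witnesses $b_1,b_2$ cannot be $2$ (otherwise the essential tuple itself would lie in $\rho$, contradicting essentiality), which is what forces the off-diagonal behaviour of $s$ in the first two coordinates.
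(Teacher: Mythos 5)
Your proof is correct and is exactly the paper's argument, just written out in more detail: the paper likewise takes the two witness tuples $(b_1,2,x_3,\ldots,x_n)$ and $(2,b_2,x_3,\ldots,x_n)$ in $\rho$ guaranteed by essentiality at the first two coordinates and applies $s$ componentwise to land on the excluded tuple $(2,2,x_3,\ldots,x_n)$. The extra check that $b_1,b_2\neq 2$ is a nice touch (though even without it, idempotency of $s$ gives $s(2,2)=2$, so the conclusion is unaffected).
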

\begin{proof}
Since $(2,2,x_3,\ldots,x_n)$ is an essential tuple, $(x_1,c,x_3,\ldots,x_n)$ and $(c,x_2,x_3,$ $\ldots,x_n)$ are in $\rho$ for some $x_1$ and $x_2$. But applying $s$ now gives the contradiction.
\end{proof}
For a tuple $\mathbf{y}$, we denote its $i$th co-ordinate by $\mathbf{y}(i)$. For $n\geq 3$, we define the arity $n+1$ idempotent operation $f^a_n$ as follows
\[
\begin{array}{c}
f^a_n(0,0\ldots,0,0)=0 \\
f^a_n(1,1,\ldots,1,1)=1 \\
f^a_n(1,0,\ldots,0,0)=0 \\
f^a_n(0,1,\ldots,0,0)=0 \\
\vdots \\
f^a_n(0,0,\ldots,1,0)=0 \\
f^a_n(0,0,\ldots,0,1)=0 \\
\mbox{else $2$}
\end{array}
\]
We define $f^b_n$ similarly with $0$ and $1$ swapped. These functions are very similar to partial near-unanimity functions.
\begin{lemma}
\label{lem:Dmitriy}
Suppose $\mathbb{A}$ is a Gap Algebra, that is not $\alpha\beta$-projective, so that $\mathbb{A}$ satisfies the Zhuk Condition. Then either 
\begin{itemize}
\item any relation $\rho \in \mathrm{Inv}(\mathbb{A})$ of arity $h<n+1$ is preserved by $f^a_n$, \textbf{or} 
\item any relation $\rho \in \mathrm{Inv}(\mathbb{A})$ of arity $h<n+1$ is preserved by $f^b_n$.
\end{itemize}
\end{lemma}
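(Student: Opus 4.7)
The plan is an argument by contradiction, assuming without loss of generality that Zhuk Condition case (a) holds (case (b) is fully dual, yielding preservation by $f^b_n$). I take a minimal arity counterexample, show it must be essential, exploit Lemma~\ref{lem:Dmitriy-micro} to constrain its essential tuple, and then use the Zhuk Condition operations $p$, $r_3$ together with $s$ to manufacture the supposedly missing tuple from tuples of the relation.

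Suppose some $\rho \in \mathrm{Inv}(\mathbb{A})$ of arity $h \leq n$ is not preserved by $f^a_n$, and choose such $\rho$ of minimal arity. If $\rho$ were non-essential, it would decompose as a conjunction of relations of strictly smaller arity in $\mathrm{Inv}(\mathbb{A})$, each preserved by $f^a_n$ by minimality, forcing $\rho$ preserved too; hence $\rho$ is essential. By Lemma~\ref{sushnabor} pick an essential tuple $\mathbf{u}$, and by Lemma~\ref{lem:Dmitriy-micro} applied after all coordinate permutations, $\mathbf{u}$ contains at most one coordinate equal to $2$. Now pick witnesses $\mathbf{t}_0,\ldots,\mathbf{t}_n \in \rho$ with $\mathbf{w} := f^a_n(\mathbf{t}_0,\ldots,\mathbf{t}_n) \notin \rho$, and partition $[h]$ into $C_0, C_1, C_2$ by the value of $\mathbf{w}(c)$. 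For $c \in C_0$ the column $(\mathbf{t}_0(c),\ldots,\mathbf{t}_n(c))$ has at most one $1$ and no $2$; for $c \in C_1$ the column is all $1$'s. Since $|C_0|+|C_1| \leq h \leq n < n+1$, a pigeonhole argument gives some index $j^*$ with $\mathbf{t}_{j^*}$ agreeing with $\mathbf{w}$ on $C_0 \cup C_1$, so $\mathbf{t}_{j^*}$ and $\mathbf{w}$ differ only on $C_2$.

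The crux is to promote the $C_2$ coordinates of $\mathbf{t}_{j^*}$ to $2$ while leaving the $C_0 \cup C_1$ coordinates untouched, thereby producing $\mathbf{w}$ inside $\rho$. The Zhuk Condition supplies exactly the operations needed: $s$ creates $2$ at disagreeing columns, the identities $p(0,1)=0$ and $p(0,2)=2$ retain $0$'s when combined with an already-$2$ column, and the ternary $r_3$ with patterns $(0,0,1),(0,1,0) \mapsto 0$ and $(0,1,1) \mapsto 2$ reproduces the discrimination required by $f^a_n$ on $\{0,1\}$-columns. I expect the proof to proceed by induction on $|C_2|$: at each stage one combines $\mathbf{t}_{j^*}$ with a companion tuple --- coming either from a $2$-witness in that column or from two disagreeing $0/1$ values in that column --- via $s$, and then surgically repairs any overshoot on $C_0 \cup C_1$ using $r_3$ or $p$. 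The main obstacle is exactly this simultaneous-coordinate control: every term-operation application fires on all coordinates at once, so the companion tuples must be chosen so that the repair does not undo previous promotions, and the essential-tuple constraint (at most one $2$) together with the minimality of $\rho$ should provide the combinatorial slack to keep everything inside $\rho$.
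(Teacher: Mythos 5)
Your skeleton matches the paper's (induction on arity via a minimal counterexample, essentiality, the one-$2$ bound from Lemma~\ref{lem:Dmitriy-micro}, pigeonhole on the columns), but there are two genuine gaps. First, you derive the ``at most one $2$'' constraint for the wrong tuple: you establish only that $\rho$ is essential and then invoke Lemma~\ref{lem:Dmitriy-micro} on \emph{some} essential tuple $\mathbf{u}$, whereas your set $C_2$ is defined from the specific bad tuple $\mathbf{w}=f^a_n(\mathbf{t}_0,\ldots,\mathbf{t}_n)$, and nothing in your argument bounds $|C_2|$. What is needed (and what the paper proves) is that $\mathbf{w}$ \emph{itself} is an essential tuple for $\rho$: each projection $\sigma_i$ from the definition of $\tilde{\rho}$ has arity $h-1$, lies in $\mathrm{Inv}(\mathbb{A})$, and is preserved by $f^a_n$ by minimality; since the witnessing columns project into each $\sigma_i$, so does $\mathbf{w}$, hence $\mathbf{w}\in\tilde{\rho}\setminus\rho$ is essential. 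Only then does Lemma~\ref{lem:Dmitriy-micro} give $|C_2|\le 1$, which is exactly what collapses your proposed ``induction on $|C_2|$'' to a single step.

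Second, that single step --- actually manufacturing $\mathbf{w}$ inside $\rho$ --- is the heart of the lemma, and you leave it as an acknowledged ``obstacle''. The paper resolves it by a case analysis on how the unique coordinate in $C_2$ (say coordinate $1$) acquires the value $2$ under $f^a_n$: either some column carries a $2$ there, or the distinguished column $\mathbf{t}_{j^*}$ has $1$ there while another column has $0$, or $\mathbf{t}_{j^*}$ has $0$ there while two other columns have $1$. In each case a \emph{single} application of a binary operation ($p$ from the Zhuk Condition or $p_1,p_2$ from Lemma~\ref{lem:fun}) or of the ternary $r_3$ to two or three full columns produces $\mathbf{w}$ outright: every row indexed by $C_0\cup C_1$ contains at most one entry differing from $\mathbf{w}$'s value, and the identities $p(0,1)=0$, $r_3(0,0,1)=r_3(0,1,0)=0$ absorb that single deviation, so no iterative ``promote then surgically repair'' scheme is needed. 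As written, your proposal neither uses $|C_2|\le 1$ nor resolves the simultaneous-coordinate control problem you yourself flag, so the construction at the core of the proof is missing.
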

\begin{proof}
Suppose \mbox{w.l.o.g.} that the Zhuk Condition is in the first regime and has idempotent term operations, binary $p$ and ternary operation $r_3$, so that
\[
  \begin{array}{ccc}
    \begin{array}{ccc}
    001 & r_3 & 0 \\
    010 & \rightarrow & 0 \\
    011 & & 2 \\
    \end{array}
  & \ \ \ \ \ \  \mbox{\textbf{and}} \ \ \ \ \ \ &
    \begin{array}{ccc}
   01 & p & 0 \\
    02 & \rightarrow & 2 \\
    \end{array}
  \end{array}
\]
We prove this statement for a fixed $n$ by induction on $h$. For $h = 1$ we just need to
check that $f_n:=f^a_n$ preserves the unary relations $\{0,2\}$ and $\{1,2\}$.

Assume that $\rho$ is not preserved by $f_n$, then there exist tuples $\mathbf{y}_1,\ldots,\mathbf{y}_{n+1} \in \rho$ such that $f_n(\mathbf{y}_1,\ldots,\mathbf{y}_{n+1})=\gamma \notin \rho$. We consider a matrix whose columns are $\mathbf{y}_1,\ldots,\mathbf{y}_{n+1}$. Let the rows of this matrix be $\mathbf{x}_1,\ldots,\mathbf{x}_h$.

By the inductive assumption every $\sigma_i$ from the definition of $\widetilde{\rho}$ is preserved by $f_n$, which means that $\widetilde{\rho}$ is preserved by $f_n$, which means that $\gamma \notin \rho$ and $\gamma$ is an essential tuple for $\rho$.

We consider two cases. First, assume that $\gamma$ doesn't contain $2$. Then it follows from the definition that every $\mathbf{x}_i$ contains at most one element that differs from $\gamma(i)$. Since $n+1>h$, there exists $i \in \{1, 2, \ldots , n + 1\}$ such that $\mathbf{y}_i = \gamma$. This contradicts the fact that $\gamma \notin \rho$.

Second, assume that $\gamma$ contains $2$. Then by Lemma~\ref{lem:Dmitriy-micro}, $\gamma$ contains exactly one $2$. \mbox{W.l.o.g.} we assume that $\gamma(1) = 2$. It follows from the definition of $f_n$ that $\mathbf{x}_i$ contains at most one element that differs from $\gamma(i)$ for every $i \in \{2, 3, \ldots , h\}$. Hence, since $n+1>h$, for some $k \in \{1, 2, \ldots , n+ 1\}$ we have $\mathbf{y}_k(i) = \gamma(i)$ for every $i \in \{2, 3, \ldots , h\}$. Since $f_n(\mathbf{x}_1) = 2$, we have one of three subcases. First subcase, $\mathbf{x}_1(j) = 2$ for some $j$. We need one of the properties
\[
\begin{array}{cc|c}
\mathbf{y}_k & \mathbf{y}_j & \gamma \\
\hline
0& 2 & 2 \\
0 & 1 & 0 \\
\end{array}
\mbox{ \ \ \ \ \ \ \ \ \ \ \ \ \ \ \ \ \ \ }
\begin{array}{cc|c}
\mathbf{y}_k & \mathbf{y}_j & \gamma \\
\hline
1 & 2 & 2 \\
0 & 1 & 0 \\
\end{array}
\]
and we can see that the functions from Lemma~\ref{lem:fun} or the definition of the Zhuk Condition suffice, which contradicts our assumptions.

Second subcase, $\mathbf{y}_k(1) = 1, \mathbf{y}_m(1) = 0$ for some $m \in \{1, 2, \ldots , n + 1\}$. We need the property 
\[
\begin{array}{cc|c}
\mathbf{y}_k & \mathbf{y}_m & \gamma \\
\hline
1 & 0 & 2 \\
0 & 1 & 0 \\
\end{array}
\]
and we can check that a function from Lemma~\ref{lem:fun} suffices, which contradicts our assumptions.

Third subcase, $\mathbf{y}_k(1) = 0, \mathbf{y}_m(1) = 1$ and $\mathbf{y}_l(1) = 1$ for $m, l \in \{1, 2, \ldots , n + 1\}\setminus \{k\}$, $m \neq l$. We need the property
\[
\begin{array}{ccc|c}
\mathbf{y}_k & \mathbf{y}_m & \mathbf{y}_l & \gamma \\
\hline
0 & 1 & 1 & 2 \\
0 & 0 & 1 & 0 \\
0 & 1 & 0 & 0 \\
\end{array}
\]
and we can check that the $r_3$ from the Zhuk Condition suffices, which contradicts our assumptions. This completes the proof.
\end{proof}
\begin{corollary}
\label{cor:Dmitriy}
Suppose $\mathbb{A}$ is a Gap Algebra, that is not $\alpha\beta$-projective so that $\mathbb{A}$ satisfies the Zhuk Condition. Then, for every finite subset of $\Delta$ of Inv$(\mathbb{A})$, Pol$(\Delta)$ is collapsible.
\end{corollary}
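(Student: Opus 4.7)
The plan is to combine Lemma~\ref{lem:Dmitriy} with Chen's Lemma (Lemma~\ref{lem:ChensLemma}). Given a finite subset $\Delta \subseteq \mathrm{Inv}(\mathbb{A})$ with maximum relation arity $h$, I would set $n := \max(h, 3)$. Then Lemma~\ref{lem:Dmitriy} tells us that every relation in $\Delta$ (which has arity at most $h < n+1$) is preserved by either $f^a_n$ or $f^b_n$, with the choice depending uniformly on which regime of the Zhuk Condition holds for $\mathbb{A}$. So one of these $(n+1)$-ary operations lies in $\mathrm{Pol}(\Delta)$.

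The key observation I would then make is that $f^a_n$ is a Hubie-pol in $\{1\}$, and dually $f^b_n$ is a Hubie-pol in $\{0\}$. To verify the first claim: fix any coordinate $i$ of $f^a_n$ to the value $1$, and observe that varying the remaining coordinates over $D = \{0,1,2\}$ already hits every element of $D$. The value $0$ is attained by the tuple with a single $1$ in position $i$ and zeros elsewhere (directly one of the defining clauses of $f^a_n$); the value $1$ is attained by the all-ones tuple (by idempotency); and the value $2$ is attained by, e.g., the tuple $(2, 1, 1, \ldots, 1)$, which falls under the default ``else'' clause. The verification for $f^b_n$ fixed at $\{0\}$ is symmetric, using the single-zero tuples to reach $1$ and the all-zeros tuple to reach $0$.

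Finally I would invoke Chen's Lemma~\ref{lem:ChensLemma}: the presence of an $(n+1)$-ary Hubie-pol in $\{x\}$ gives $n$-collapsibility from source $\{x\}$. Since $\mathbb{A}$ is idempotent, I may freely augment $\Delta$ with the singleton relations $\{0\}, \{1\}, \{2\}$ without losing any idempotent polymorphisms, placing us in exactly the setting required by Chen's Lemma. The conclusion is that $\mathrm{Pol}(\Delta)$ is $n$-collapsible (from source $\{1\}$ or $\{0\}$ according to regime), which is the desired result.

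I do not expect any significant obstacle here, since the heavy lifting has been absorbed into Lemma~\ref{lem:Dmitriy}; the remaining steps are a routine case-check that $f^a_n$ and $f^b_n$ satisfy the Hubie-pol surjectivity condition at the relevant constant, together with a direct appeal to Chen's Lemma. The one conceptual point worth flagging is that the collapsibility parameter $n$ grows with $h$ and hence with $|\Delta|$, so the argument yields no uniform bound across all finite $\Delta$. This is exactly the expected behaviour, and matches the paper's overarching narrative: $\mathrm{Inv}(\mathbb{A})$ itself is only switchable and not collapsible, yet every finite reduct becomes collapsible at some $\Delta$-dependent parameter.
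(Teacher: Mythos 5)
Your proposal is correct and follows essentially the same route as the paper, whose entire proof of this corollary is the one-line observation that $f^a_n$ is a Hubie-pol in $\{1\}$ and $f^b_n$ is a Hubie-pol in $\{0\}$ (with Lemma~\ref{lem:Dmitriy} and Chen's Lemma~\ref{lem:ChensLemma} doing the work, exactly as you describe). Your additional remarks --- choosing $n$ above the maximal arity of $\Delta$, verifying the surjectivity condition explicitly, using idempotency to add constants, and noting that the collapsibility parameter necessarily grows with $\Delta$ --- are all accurate elaborations of what the paper leaves implicit.
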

\begin{proof}
$f^a_n$ is a Hubie-pol in $\{1\}$ and $f^b_n$ is a Hubie-pol in $\{0\}$.
\end{proof}
For $n\geq 2$, we define the arity $n+2$ idempotent operation $\widehat{f}^a_n$ as follows
\[
\begin{array}{c}
f^a_n(0,0,0,\ldots,0,0)=0 \\
f^a_n(1,1,1,\ldots,1,1)=1 \\
f^a_n(1,1,0,\ldots,0,0)=0 \\
f^a_n(1,0,1,\ldots,0,0)=0 \\
\vdots \\
f^a_n(1,0,0,\ldots,1,0)=0 \\
f^a_n(1,0,0,\ldots,0,1)=0 \\
\mbox{else $c$}
\end{array}
\]
We define $\widehat{f}^b_n$ similarly with $0$ and $1$ swapped.
\begin{lemma}
\label{lem:Dmitriy-long}
Suppose $\mathbb{A}$ is a Gap Algebra that is not $\alpha\beta$-projective. Then either 
\begin{itemize}
\item any relation $\rho \in \mathrm{Inv}(\mathbb{A})$ of arity $h<n+2$ is preserved by $\widehat{f}^a_n$, \textbf{or} 
\item any relation $\rho \in \mathrm{Inv}(\mathbb{A})$ of arity $h<n+2$ is preserved by $\widehat{f}^b_n$.
\end{itemize}
\end{lemma}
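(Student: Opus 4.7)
The plan is to adapt the inductive proof of Lemma~\ref{lem:Dmitriy} by replacing the Zhuk Condition's operations $p$ and $r_3$ with the weaker operations $p_1, p_2$ from Lemma~\ref{lem:fun} and the $4$-ary $r_4$ from Lemma~\ref{lem:r4-asymmetric} (or $r^a_4, r^b_4$ from Lemma~\ref{lem:r4-symmetric}); the extra arity ($n+2$ rather than $n+1$) of the target operation supplies exactly one additional column that absorbs the loss of power. By Propositions~\ref{prop:asymmetric} and~\ref{prop:symmetric}, $\mathbb{A}$ is either collapsible (in which case the conclusion follows trivially from any Hubie-pol preserving all invariants) or equips us with $s$, $p_1$, $p_2$, and $r_4$ (resp.\ $r^a_4, r^b_4$). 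Work \mbox{w.l.o.g.} in the asymmetric case targeting $\widehat{f}^a_n$.

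Induct on the arity $h$, with $1 \leq h < n+2$, of a putative relation $\rho \in \mathrm{Inv}(\mathbb{A})$ not preserved by $\widehat{f}^a_n$. The base case $h = 1$ is a direct check that $\widehat{f}^a_n$ preserves the sub-universes $\{0,2\}$ and $\{1,2\}$. For the inductive step, pick $\mathbf{y}_0, \ldots, \mathbf{y}_{n+1} \in \rho$ with $\widehat{f}^a_n(\mathbf{y}_0, \ldots, \mathbf{y}_{n+1}) = \gamma \notin \rho$; the inductive hypothesis via Lemma~\ref{sushnabor} makes $\gamma$ essential, and Lemma~\ref{lem:Dmitriy-micro} bounds the number of $2$'s in $\gamma$ by one. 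In Case~1 ($\gamma$ has no $2$), the pigeonhole argument of Lemma~\ref{lem:Dmitriy} applies: each row contributes at most one disagreement with $\gamma$ across $\mathbf{y}_1, \ldots, \mathbf{y}_{n+1}$, so we locate a column equal to $\gamma$ except in the extremal configuration $h = n+1$, $\gamma = 0^{n+1}$, $\mathbf{y}_0 = 1^{n+1}$, $\mathbf{y}_j = e_j$; this extremal case is resolved by applying $r_4$ to a quadruple built from $\mathbf{y}_0$ and three unit-vector columns, with auxiliary use of $s$, $p_1$, $p_2$ for rows where the specified values of $r_4$ do not directly suffice. In Case~2 ($\gamma$ has one $2$, \mbox{w.l.o.g.} at position $1$), pigeonhole over rows $i \geq 2$ (where at most $h - 1 \leq n$ disagreements are spread across $n+1$ candidate columns) supplies some $\mathbf{y}_k$ with $k \geq 1$ agreeing with $\gamma$ on every row $i \geq 2$; we then split as in Lemma~\ref{lem:Dmitriy} into the three subcases (i) $\mathbf{y}_j(1) = 2$ for some $j$; (ii) $\mathbf{y}_k(1) = 1$ together with some $\mathbf{y}_m(1) = 0$; (iii) $\mathbf{y}_k(1) = 0$ with two further columns $\mathbf{y}_m, \mathbf{y}_l$ having $\mathbf{y}_m(1) = \mathbf{y}_l(1) = 1$. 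Subcases (i) and (ii) are dispatched by $p_1$ or $p_2$ applied to $(\mathbf{y}_k, \mathbf{y}_j)$ exactly as in Lemma~\ref{lem:Dmitriy}. Subcase (iii) is dispatched by applying $r_4$ to $(\mathbf{y}_k, \mathbf{y}_0, \mathbf{y}_m, \mathbf{y}_l)$, using the specified value $r_4(0,1,1,1) = 2$ at row~$1$ (after preprocessing, if needed, to arrange $\mathbf{y}_0(1) = 1$) and the restricted shapes of rows $i \geq 2$ permitted by $\widehat{f}^a_n$ outputting $\gamma(i) \in \{0,1\}$, combined with $r_4(0,1,0,1) = r_4(0,1,1,0) = 0$ and idempotency, to recover $\gamma$ row by row; this delivers $\gamma \in \rho$, the desired contradiction.

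The symmetric variant runs dually using $r^a_4, r^b_4$. The main obstacle is twofold: first, the extremal configuration of Case~1 has no counterpart in Lemma~\ref{lem:Dmitriy}'s proof and must be dispatched by a careful combination of $r_4$, $s$, $p_1$, $p_2$; second, in subcase (iii) of Case~2 one must preprocess the columns so that the row inputs to $r_4$ actually fall on one of its three explicitly specified entries, a step that is automatic for the Zhuk $r_3$ but requires extra manoeuvring here. Both steps exploit precisely the additional column available to $\widehat{f}^a_n$ relative to the operation $f^a_n$ of Lemma~\ref{lem:Dmitriy}.
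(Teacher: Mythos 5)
Your proposal follows the paper's route essentially step for step: the same induction on the arity $h$, the same use of Lemma~\ref{sushnabor} and Lemma~\ref{lem:Dmitriy-micro} to turn a failure of preservation into an essential tuple $\gamma$ with at most one $2$, the same split into the case where $\gamma$ avoids $2$ and the case where $\gamma$ has exactly one $2$, and the same three subcases in the latter, dispatched with the operations from Lemma~\ref{lem:fun}, Propositions~\ref{prop:asymmetric} and~\ref{prop:symmetric}, and the $4$-ary operations of Lemmas~\ref{lem:r4-asymmetric} and~\ref{lem:r4-symmetric}. The difficulty is that the two places you yourself flag as ``the main obstacle'' are precisely where the substance lies, and neither is actually discharged. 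In subcase (iii) of Case~2 you apply $r_4$ to $(\mathbf{y}_k,\mathbf{y}_0,\mathbf{y}_m,\mathbf{y}_l)$ ``after preprocessing, if needed, to arrange $\mathbf{y}_0(1)=1$''; but the first column is part of the given counterexample and cannot be rearranged. The paper splits on the value of the first column in row~$1$: when it is $1$ a coordinate permutation of $r_4$ suffices, but when it is $0$ the needed property (sending the row pattern $0011$ to $2$ while sending $1001$ and $1010$ to $0$) is not an instance of $r_4$ at all, and the paper realises it with the explicit composite term $p(x_2,p_1(x_4,p_1(x_2,x_1)))$, with $p_1$ from Lemma~\ref{lem:fun} and $p$ from Proposition~\ref{prop:asymmetric} or~\ref{prop:symmetric}. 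That term is the real content of the step and nothing in your proposal replaces it.

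On Case~1 you have correctly put your finger on a subtlety: unlike $f^a_n$, the rows of $\widehat{f}^a_n$ that output $0$ may have the form ``first coordinate $1$ and exactly one other coordinate $1$'', which disagrees with $\gamma(i)=0$ in \emph{two} positions, so when $h=n+1$ and every row has this form each of the $n+1$ columns beyond the first can carry a disagreement and the pigeonhole does not by itself produce a column equal to $\gamma$. (The paper's own Case~1 asserts ``at most one element that differs'' and passes over this configuration.) However, your proposed resolution does not work as stated: in the extremal configuration the columns are the all-ones tuple together with the unit vectors, so every row of a quadruple built from $\mathbf{y}_0$ and three unit-vector columns is one of $(1,1,0,0)$, $(1,0,1,0)$, $(1,0,0,1)$, $(1,0,0,0)$, and none of these is among the three input patterns $0101$, $0110$, $0111$ on which $r_4$ is specified, so $r_4$ yields no information there; ``auxiliary use of $s$, $p_1$, $p_2$'' is an appeal to hope rather than an argument. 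As it stands this extremal case is not closed, so the proposal has a genuine gap at exactly the point where it departs from (and tries to improve on) the paper.
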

\begin{proof}
Suppose \mbox{w.l.o.g.} that we are either in the asymmetric case with Class $(i)$ singleton and  there exists $\overline{z} \in \{0,1\}^*$ so that $f(0|1,\ldots,1|\overline{z})=1$ \textbf{or} we are in the symmetric case and we have an idempotent term operation $p$ mapping $01 \mapsto 0$ and $02 \mapsto 2$.

We prove this statement for a fixed $n$ by induction on $h$. For $h = 1$ we just need to
check that $\widehat{f}_n:=\widehat{f}^a_n$ preserves the unary relations $\{0, 2\}$ and $\{1, 2\}$.

Assume that $\rho$ is not preserved by $f_n$, then there exist tuples $\mathbf{y}_1,\ldots,\mathbf{y}_{n+2} \in \rho$ such that $\widehat{f}_n(\mathbf{y}_1,\ldots,\mathbf{y}_{n+2})=\gamma \notin \rho$. We consider a matrix whose columns are $\mathbf{y}_1,\ldots,\mathbf{y}_{n+2}$. Let the rows of this matrix be $\mathbf{x}_1,\ldots,\mathbf{x}_h$.

By the inductive assumption every $\sigma_i$ from the definition of $\widetilde{\rho}$ is preserved by $\widehat{f}_n$, which means that $\widetilde{\rho}$ is preserved by $\widehat{f}_n$, which means that $\gamma \notin \rho$ and $\gamma$ is an essential tuple for $\rho$.

We consider two cases. First, assume that $\gamma$ doesn't contain $2$. Then it follows from the definition that every $\mathbf{x}_i$ contains at most one element that differs from $\gamma(i)$. Since $n+2>h$, there exists $i \in \{1, 2, \ldots , n + 1\}$ such that $\mathbf{y}_i = \gamma$. This contradicts the fact that $\gamma \notin \rho$.

Second, assume that $\gamma$ contains $2$. Then by Lemma~\ref{lem:Dmitriy-micro}, $\gamma$ contains exactly one $2$. \mbox{W.l.o.g.} we assume that $\gamma(1) = 2$. It follows from the definition of $\widehat{f}_n$ that $\mathbf{x}_i$ contains at most one element that differs from $\gamma(i)$ for every $i \in \{2, 3, \ldots , h\}$. Hence, since $n+2>h$, for some $k \in \{2, \ldots , n+2\}$ we have $\mathbf{y}_k(i) = \gamma(i)$ for every $i \in \{2, 3, \ldots , h\}$. Since $\widehat{f}_n(\mathbf{x}_1) = 2$, we have one of four subcases. First subcase, $\mathbf{x}_1(j) = 2$ for some $j$. We need one of the properties
\[
\begin{array}{cc|c}
\mathbf{y}_k & \mathbf{y}_j & \gamma \\
\hline
0 & 2 & 2 \\
0 & 1 & 0 \\
\end{array}
\mbox{ \ \ \ \ \ \ \ \ \ \ \ \ \ \ \ \ \ \ }
\begin{array}{cc|c}
\mathbf{y}_k & \mathbf{y}_j & \gamma \\
\hline
1 & 2 & 2 \\
0 & 1 & 0 \\
\end{array}
\]
and we can see that the functions from Lemma~\ref{lem:fun}, or Proposition~\ref{prop:asymmetric} or Proposition~\ref{prop:symmetric}, suffice which contradicts our assumptions.

Second subcase, $\mathbf{y}_k(1) = 1, \mathbf{y}_m(1) = 0$ for some $m \in \{1, 2, \ldots , n + 1\}$. We need the property 
\[
\begin{array}{cc|c}
\mathbf{y}_k & \mathbf{y}_m & \gamma \\
\hline
1 & 0 & 2 \\
0 & 1 & 0 \\
\end{array}
\]
and we can check that a function from Lemma~\ref{lem:fun} suffices, which contradicts our assumptions.

For Case 3, $\mathbf{y}_k(1) = 0, \mathbf{y}_m(1) = 1$ and $\mathbf{y}_l(1) = 1$ for some $m, l \in \{1, 2, \ldots , n + 1\}\setminus \{k\}$, $m \neq l$ (possibly $1 \in \{m,l\}$). We now split into two subsubcases: either $\mathbf{y}_1(1)=1$ and we need the property
\[
\begin{array}{cccc|c}
\mathbf{y}_1 &  \mathbf{y}_k & \mathbf{y}_m & \mathbf{y}_l & \gamma \\
\hline
1 & 0 & 1 & 1 & 2 \\
1 & 0 & 0 & 1 & 0 \\
1 & 0 & 1 & 0 & 0. \\
\end{array}
\]
Here we can check that $r_4$, from Proposition~\ref{prop:asymmetric} or Proposition~\ref{prop:symmetric}, with co-ordinates $1$ and $2$ permuted, suffices, which contradicts our assumptions. Or we have  $\mathbf{y}_1(1)=0$ and we need the property
\[
\begin{array}{cccc|c}
\mathbf{y}_1 & \mathbf{y}_k & \mathbf{y}_m & \mathbf{y}_l & \gamma \\
\hline
0 & 0 & 1 & 1 & 2 \\
1 & 0 & 0 & 1 & 0 \\
1 & 0 & 1 & 0 & 0. \\
\end{array}
\]
For this $p(x_2,(p_1(x_4,p_1(x_2,x_1))))$ suffices where $p_1$ comes from Lemma~\ref{lem:fun} and $p$ is as before in this proof (\mbox{cf.} Proposition~\ref{prop:asymmetric} and Proposition~\ref{prop:symmetric}).
\[
\begin{array}{cccc|c|c|c}
x_1 & x_2 & x_3 & x_4 & p_1(x_2,x_1) & p_1(x_4,p_1(x_2,x_1)) & p(x_2,(p_1(x_4,p_1(x_2,x_1)))) \\
\hline
0 & 0 & 1 & 1 & 0 & 2 & 2 \\
1 & 0 & 0 & 1 & 1 & 1 & 0\\
1 & 0 & 1 & 0 & 1 & 1 & 0 \\
\end{array}
\]
This completes the proof.
\end{proof}

We are now ready for the main result of this section.

\begin{theorem}
\label{cor:Dmitriy-long}
Suppose $\mathbb{A}$ is a Gap Algebra that is not $\alpha\beta$-projective. Then, for every finite subset of $\Delta$ of Inv$(\mathbb{A})$, Pol$(\Delta)$ is collapsible.
\end{theorem}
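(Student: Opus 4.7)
The theorem follows quickly from Lemma~\ref{lem:Dmitriy-long} combined with Chen's Lemma, so the plan is essentially a packaging step. First I would fix a finite $\Delta \subseteq \mathrm{Inv}(\mathbb{A})$ and set $h$ to be the maximum arity of a relation in $\Delta$. Choosing $n = h$, the hypothesis $h < n+2$ of Lemma~\ref{lem:Dmitriy-long} is satisfied for every $\rho \in \Delta$, so the lemma yields that either $\widehat{f}^a_n$ or $\widehat{f}^b_n$ preserves every relation of $\Delta$, and hence lies in $\mathrm{Pol}(\Delta)$.

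Second, I would verify that $\widehat{f}^a_n$ is a Hubie-pol in $\{1\}$ (and dually $\widehat{f}^b_n$ in $\{0\}$), which is a direct inspection of the definition. Fixing the first coordinate to $1$: the all-ones tuple yields $1$, any tuple with $1$s exactly in positions $1$ and some $i > 1$ yields $0$, and the tuple $(1,0,\ldots,0)$ falls in the ``else'' clause and yields $2$. Fixing any coordinate $i > 1$ to $1$: the all-ones tuple yields $1$, the defining tuple with $1$s in positions $1$ and $i$ yields $0$, and the ``else'' tuple with a single $1$ in position $i$ yields $2$. Thus each coordinate, when held at $1$, still permits the operation to surject onto $D = \{0,1,2\}$, which is the Hubie-pol condition.

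Finally, I would invoke Lemma~\ref{lem:ChensLemma}: the presence in $\mathrm{Pol}(\Delta)$ of a Hubie-pol in $\{1\}$ (respectively $\{0\}$) of arity $n+2 = h+2$ gives $(h+1)$-collapsibility of $\Delta$ from the corresponding singleton source, which is exactly the conclusion that $\mathrm{Pol}(\Delta)$ is collapsible.

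The main obstacle has already been cleared by Lemma~\ref{lem:Dmitriy-long}, whose inductive core encapsulates the delicate case analysis via essential tuples. What remains is purely formal. I note also that the collapsibility parameter $h+1$ grows unboundedly with the arity $h$ of relations in $\Delta$, which is consistent with the remark in the introduction that collapsibility does not lift uniformly to the infinite language $\mathrm{Inv}(\mathbb{A})$ at once, while the switchability parameter remains bounded.
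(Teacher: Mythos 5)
Your proposal is correct and follows exactly the paper's route: the paper's own proof is the one-line observation that $\widehat{f}^a_n$ is a Hubie-pol in $\{1\}$ and $\widehat{f}^b_n$ is a Hubie-pol in $\{0\}$, with Lemma~\ref{lem:Dmitriy-long} and Chen's Lemma doing the work implicitly. Your version merely makes explicit the choice of $n$ relative to the maximal arity in $\Delta$, the surjectivity check, and the resulting collapsibility parameter, all of which are accurate.
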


\begin{proof}
$\widehat{f}^a_n$ is a Hubie-pol in $\{1\}$ and $\widehat{f}^b_n$ is a Hubie-pol in $\{0\}$.
\end{proof}


\section{Conclusion}
One important application of our abstract investigation of PGP 
yields a nice characterisation in the concrete case of collapsibility,
in particular in the case of a singleton source which we now know can
be equated with preservation under a single polymorphism, namely a
Hubie polymorphism. So far, this is the only known explanation for a
complexity of a (finite signature) QCSP in NP which provokes the following two questions. 
\begin{question}
  For a structure $\mathcal{A}$, is it the case that
  QCSP($\mathcal{A}$) is in NP iff $\mathcal{A}$ admits a Hubie polymorphism?
\end{question}

\begin{question}
  For a structure $\mathcal{A}$, is it the case that
  QCSP($\mathcal{A}$) is in NP iff $\mathcal{A}$ is collapsible?
\end{question}
\noindent Lurking between these questions is the question as to whether collapsibility is always existing from a singleton source (though a better parameter might be obtained from a larger source).


One can further wonder if the parameter
$p$ of collapsibility depends on the size of the structure
$\mathcal{A}$. In particular, this would provide a positive answer to
the following.
\begin{question}
  Given a structure $\mathcal{A}$, can we decide if it is $p$-collapsible
  for some $p$?
\end{question}
Finally, since this paper was drafted, Zhuk has written a new, self-contained, short and elegant proof of Corollary~\ref{cor:in-NP} in the note \cite{Zhuk-PGP-in-NP}.

\section*{Acknowledgment}
The authors would like to thank several anonymous reviewers for their patience,
stamina and very useful suggestions, which have been a great help to
prepare the final version of this paper.



\begin{thebibliography}{10}
\providecommand{\url}[1]{#1}
\csname url@samestyle\endcsname
\providecommand{\newblock}{\relax}
\providecommand{\bibinfo}[2]{#2}
\providecommand{\BIBentrySTDinterwordspacing}{\spaceskip=0pt\relax}
\providecommand{\BIBentryALTinterwordstretchfactor}{4}
\providecommand{\BIBentryALTinterwordspacing}{\spaceskip=\fontdimen2\font plus
\BIBentryALTinterwordstretchfactor\fontdimen3\font minus
  \fontdimen4\font\relax}
\providecommand{\BIBforeignlanguage}[2]{{%
\expandafter\ifx\csname l@#1\endcsname\relax
\typeout{** WARNING: IEEEtran.bst: No hyphenation pattern has been}%
\typeout{** loaded for the language `#1'. Using the pattern for}%
\typeout{** the default language instead.}%
\else
\language=\csname l@#1\endcsname
\fi
#2}}
\providecommand{\BIBdecl}{\relax}
\BIBdecl

\bibitem{AU-Chen-PGP}
\BIBentryALTinterwordspacing
H.~Chen, ``\BIBforeignlanguage{English}{Quantified constraint satisfaction and
  the polynomially generated powers property},''
  \emph{\BIBforeignlanguage{English}{Algebra universalis}}, vol.~65, no.~3, pp.
  213--241, 2011, an extended abstract appeared in ICALP B 2008. [Online].
  Available: \url{http://dx.doi.org/10.1007/s00012-011-0125-4}
\BIBentrySTDinterwordspacing

\bibitem{ZhukGap2015}
D.~{Zhuk}, ``{The Size of Generating Sets of Powers},'' \emph{Journal of
  Combinatorial Theory, Series A}, vol. 167, pp. 91--103, 2019.

\bibitem{Meditations}
H.~Chen, ``Meditations on quantified constraint satisfaction,'' in \emph{Logic
  and Program Semantics - Essays Dedicated to Dexter Kozen on the Occasion of
  His 60th Birthday}, 2012, pp. 35--49.

\bibitem{ZhukM20}
\BIBentryALTinterwordspacing
D.~Zhuk and B.~Martin, ``{QCSP} monsters and the demise of the {Chen}
  conjecture,'' in \emph{Proccedings of the 52nd Annual {ACM} {SIGACT}
  Symposium on Theory of Computing, {STOC} 2020, Chicago, IL, USA, June 22-26,
  2020}, K.~Makarychev, Y.~Makarychev, M.~Tulsiani, G.~Kamath, and J.~Chuzhoy,
  Eds.\hskip 1em plus 0.5em minus 0.4em\relax {ACM}, 2020, pp. 91--104.
  [Online]. Available: \url{https://doi.org/10.1145/3357713.3384232}
\BIBentrySTDinterwordspacing

\bibitem{hubie-sicomp}
H.~Chen, ``The complexity of quantified constraint satisfaction:
  Collapsibility, sink algebras, and the three-element case,'' \emph{SIAM J.
  Comput.}, vol.~37, no.~5, pp. 1674--1701, 2008.

\bibitem{DBLP:journals/corr/abs-1210-6893}
\BIBentryALTinterwordspacing
F.~R. Madelaine and B.~Martin, ``On the complexity of the model checking
  problem,'' \emph{CoRR}, vol. abs/1210.6893, 2012, extended abstract appeared
  at LICS 2011 under the name "A Tetrachotomy for Positive First-Order Logic
  without Equality". [Online]. Available: \url{http://arxiv.org/abs/1210.6893}
\BIBentrySTDinterwordspacing

\bibitem{BulatovFVConjecture}
A.~A. Bulatov, ``{A dichotomy theorem for nonuniform {CSPs}},'' in
  \emph{{Proceedings of FOCS'17}}, 2017, {arXiv:1703.03021}.

\bibitem{ZhukFVConjecture}
D.~{Zhuk}, ``A proof of {CSP} dichotomy conjecture,'' in \emph{2017 IEEE 58th
  Annual Symposium on Foundations of Computer Science (FOCS)}, Oct 2017, pp.
  331--342.

\bibitem{Zhuk20}
\BIBentryALTinterwordspacing
D.~Zhuk, ``A proof of the {CSP} dichotomy conjecture,'' \emph{J. {ACM}},
  vol.~67, no.~5, pp. 30:1--30:78, 2020. [Online]. Available:
  \url{https://doi.org/10.1145/3402029}
\BIBentrySTDinterwordspacing

\bibitem{nonmonotonic}
U.~Egly, T.~Eiter, H.~Tompits, and S.~Woltran, ``Solving advanced reasoning
  tasks using quantified boolean formulas,'' in \emph{Proc. 17th Nat. Conf. on
  Artificial Intelligence and 12th Conf. on Innovative Applications of
  Artificial Intelligence}.\hskip 1em plus 0.5em minus 0.4em\relax AAAI Press/
  The MIT Press, 2000, pp. 417--422.

\bibitem{FOCS2015}
\BIBentryALTinterwordspacing
V.~Kolmogorov, A.~A. Krokhin, and M.~Rolinek, ``The complexity of
  general-valued {CSPs},'' in \emph{{IEEE} 56th Annual Symposium on Foundations
  of Computer Science, {FOCS} 2015, Berkeley, CA, USA, 17-20 October, 2015},
  2015, pp. 1246--1258. [Online]. Available:
  \url{http://dx.doi.org/10.1109/FOCS.2015.80}
\BIBentrySTDinterwordspacing

\bibitem{WiegoldSemigroups}
\BIBentryALTinterwordspacing
J.~Wiegold, ``Growth sequences of finite semigroups,'' \emph{Journal of the
  Australian Mathematical Society (Series A)}, vol.~43, pp. 16--20, 8 1987,
  communicated by H. Lausch. [Online]. Available:
  \url{http://journals.cambridge.org/article_S1446788700028925}
\BIBentrySTDinterwordspacing

\bibitem{MFCS2017}
\BIBentryALTinterwordspacing
C.~Carvalho, B.~Martin, and D.~Zhuk, ``The complexity of quantified constraints
  using the algebraic formulation,'' in \emph{42nd International Symposium on
  Mathematical Foundations of Computer Science, {MFCS} 2017, August 21-25, 2017
  - Aalborg, Denmark}, 2017, pp. 27:1--27:14. [Online]. Available:
  \url{https://doi.org/10.4230/LIPIcs.MFCS.2017.27}
\BIBentrySTDinterwordspacing

\bibitem{Creignou}
N.~Creignou, S.~Khanna, and M.~Sudan, \emph{Complexity Classifications of
  Boolean Constraint Satisfaction Problems}.\hskip 1em plus 0.5em minus
  0.4em\relax SIAM Monographs on Discrete Mathematics and Applications 7, 2001.

\bibitem{ecsps}
M.~Bodirsky and J.~K\'ara, ``The complexity of equality constraint languages,''
  \emph{Theory of Computing Systems}, vol.~3, no.~2, pp. 136--158, 2008, a
  conference version appeared in the proceedings of CSR'06.

\bibitem{QCSPmonoids}
\BIBentryALTinterwordspacing
H.~Chen and P.~Mayr, ``Quantified constraint satisfaction on monoids,'' in
  \emph{25th {EACSL} Annual Conference on Computer Science Logic, {CSL} 2016,
  August 29 - September 1, 2016, Marseille, France}, 2016, pp. 15:1--15:14.
  [Online]. Available: \url{https://doi.org/10.4230/LIPIcs.CSL.2016.15}
\BIBentrySTDinterwordspacing

\bibitem{ZhukFVConjecture-infinite}
\BIBentryALTinterwordspacing
D.~Zhuk, ``A modifiction of the {CSP} algorithm for infinite languages,''
  \emph{CoRR}, vol. abs/1803.07465, 2018. [Online]. Available:
  \url{http://arxiv.org/abs/1803.07465}
\BIBentrySTDinterwordspacing

\bibitem{LICS2015}
C.~Carvalho, F.~R. Madelaine, and B.~Martin, ``From complexity to algebra and
  back: digraph classes, collapsibility and the {PGP},'' in \emph{30th Annual
  IEEE Symposium on Logic in Computer Science (LICS)}, 2015.

\bibitem{LICS2008}
H.~Chen, F.~Madelaine, and B.~Martin, ``Quantified constraints and containment
  problems,'' in \emph{23rd Annual IEEE Symposium on Logic in Computer
  Science}, 2008, pp. 317--328.

\bibitem{DBLP:conf/cp/MadelaineM12}
\BIBentryALTinterwordspacing
F.~R. Madelaine and B.~Martin, ``Containment, equivalence and coreness from
  {CSP} to {QCSP} and beyond,'' in \emph{Principles and Practice of Constraint
  Programming - 18th International Conference, {CP} 2012, Qu{\'{e}}bec City,
  QC, Canada, October 8-12, 2012. Proceedings}, ser. Lecture Notes in Computer
  Science, M.~Milano, Ed., vol. 7514.\hskip 1em plus 0.5em minus 0.4em\relax
  Springer, 2012, pp. 480--495. [Online]. Available:
  \url{http://dx.doi.org/10.1007/978-3-642-33558-7_36}
\BIBentrySTDinterwordspacing

\bibitem{QCSPforests}
B.~Martin, ``{QCSP} on partially reflexive forests,'' in \emph{Principles and
  Practice of Constraint Programming - 17th International Conference, CP 2011},
  2011.

\bibitem{CiE2006}
B.~Martin and F.~Madelaine, ``Towards a trichotomy for quantified
  {H}-coloring,'' in \emph{2nd Conf. on Computatibility in Europe, LNCS 3988},
  2006, pp. 342--352.

\bibitem{LICS2011}
F.~R. Madelaine and B.~Martin, ``A tetrachotomy for positive first-order logic
  without equality,'' in \emph{LICS}, 2011, pp. 311--320.

\bibitem{DBLP:conf/lics/MadelaineM09}
------, ``The complexity of positive first-order logic without equality,'' in
  \emph{LICS}.\hskip 1em plus 0.5em minus 0.4em\relax IEEE Computer Society,
  2009, pp. 429--438.

\bibitem{Papa}
C.~H. Papadimitriou, \emph{Computational Complexity}.\hskip 1em plus 0.5em
  minus 0.4em\relax Addison-Wesley, 1994.

\bibitem{BodirskyChenSICOMP}
M.~Bodirsky and H.~Chen, ``Quantified equality constraints,'' \emph{SIAM J.
  Comput.}, vol.~39, no.~8, pp. 3682--3699, 2010.

\bibitem{Zhuk-PGP-in-NP}
D.~Zhuk, ``{PGP} reduction for {QCSP},'' Unpublished, 2021.

\end{thebibliography}



\end{document}